\providecommand{\U}[1]{\protect\rule{.1in}{.1in}}
\newtheorem{theorem}{Theorem}
\newtheorem{corollary}{Corollary}
\newtheorem{lemma}{Lemma}
\newenvironment{proof}[1][Proof]{\noindent\textbf{#1.} }{\ \rule{0.5em}{0.5em}}
\newcommand{\nc}{\newcommand}
\nc{\rnc}{\renewcommand} \nc{\beq}{\begin{equation}}
\nc{\eeq}{{\end{equation}}} \nc{\bea}{\begin{eqnarray}}
\nc{\eea}{\end{eqnarray}} \nc{\beqa}{\begin{eqnarray}}
\nc{\eeqa}{\end{eqnarray}} \nc{\lbar}[1]{\overline{#1}}
\nc{\bra}[1]{\langle#1|} \nc{\ket}[1]{|#1\rangle}
\nc{\ketbra}[2]{|#1\rangle\!\langle#2|}
\nc{\braket}[2]{\langle#1|#2\rangle} \nc{\proj}[1]{|
#1\rangle\!\langle #1 |} \nc{\avg}[1]{\langle#1\rangle}
\rnc{\max}{\operatorname{max}} \nc{\rank}{\operatorname{rank}}
\nc{\conv}{\operatorname{conv}}
\nc{\smfrac}[2]{\mbox{$\frac{#1}{#2}$}} \nc{\Tr}{\operatorname{Tr}}
\nc{\ox}{\otimes} \nc{\dg}{\dagger} \nc{\dn}{\downarrow}
\nc{\lmax}{\lambda_{\text{max}}}
\nc{\lmin}{\lambda_{\text{min}}}
\nc{\cA}{{\cal A}} \nc{\cB}{{\cal B}} \nc{\cC}{{\cal C}}
\nc{\cD}{{\cal D}} \nc{\cE}{{\cal E}} \nc{\cF}{{\cal F}}
\nc{\cG}{{\cal G}} \nc{\cH}{{\cal H}} \nc{\cI}{{\cal I}}
\nc{\cJ}{{\cal J}} \nc{\cK}{{\cal K}} \nc{\cL}{{\cal L}}
\nc{\cM}{{\cal M}} \nc{\cN}{{\cal N}} \nc{\cO}{{\cal O}}
\nc{\cP}{{\cal P}} \nc{\cQ}{{\cal Q}} \nc{\cR}{{\cal R}} \nc{\cS}{{\cal S}}
\nc{\cT}{{\cal T}} \nc{\cU}{{\cal U}} \nc{\cV}{{\cal V}}
\nc{\cX}{{\cal X}} \nc{\cW}{{\cal W}} \nc{\cZ}{{\cal Z}}
\nc{\CA}{{\cal A}} \nc{\CB}{{\cal B}} \nc{\CC}{{\cal C}}
\nc{\CD}{{\cal D}} \nc{\CE}{{\cal E}} \nc{\CF}{{\cal F}}
\nc{\CG}{{\cal G}} \nc{\CH}{{\cal H}} \nc{\CI}{{\cal I}}
\nc{\CJ}{{\cal J}} \nc{\CK}{{\cal K}} \nc{\CL}{{\cal L}}
\nc{\CM}{{\cal M}} \nc{\CN}{{\cal N}} \nc{\CO}{{\cal O}}
\nc{\CP}{{\cal P}} \nc{\CQ}{{\cal Q}} \nc{\CR}{{\cal R}} \nc{\CS}{{\cal S}}
\nc{\CT}{{\cal T}} \nc{\CU}{{\cal U}} \nc{\CV}{{\cal V}}
\nc{\CX}{{\cal X}} \nc{\CW}{{\cal W}} \nc{\CZ}{{\cal Z}}
\nc{\csupp}{{\operatorname{csupp}}}
\nc{\qsupp}{{\operatorname{qsupp}}} \nc{\var}{\operatorname{var}}
\nc{\rar}{\rightarrow} \nc{\lrar}{\longrightarrow}
\nc{\poly}{\operatorname{poly}}
\nc{\polylog}{\operatorname{polylog}} \nc{\Lip}{\operatorname{Lip}}
\nc{\mb}[1]{\mathbf{#1}}
\nc{\ep}{\epsilon}
\nc{\Om}{\Omega}
\nc{\wt}[1]{\widetilde{#1}}
\def\>{\rangle}
\def\<{\langle}
\def\ph{\varphi}
\begin{document}
\preprint{ }
\title{Trade-off capacities of the quantum Hadamard channels}
\author{Kamil Br\'{a}dler, Patrick Hayden, Dave Touchette, and Mark M. Wilde}
\affiliation{School of Computer Science, McGill University, Montreal, Qu\'{e}bec, H3A 2A7, Canada}
\keywords{quantum Shannon theory, trading resources, entanglement-assisted classical and
quantum communication, Hadamard channel, cloning channel, Unruh channel}
\pacs{03.67.Hk, 03.67.Pp, 04.62.+v}

\begin{abstract}
Coding theorems in quantum Shannon theory express the ultimate rates at which
a sender can transmit information over a noisy quantum channel. More often
than not, the known formulas expressing these transmission rates are
intractable, requiring an optimization over an infinite number of uses of the
channel. Researchers have rarely found quantum channels with a tractable
classical or quantum capacity, but when such a finding occurs, it demonstrates
a complete understanding of that channel's capabilities for transmitting
classical or quantum information. Here, we show that the three-dimensional
capacity region for entanglement-assisted transmission of classical and
quantum information is tractable for the Hadamard class of channels. Examples
of Hadamard channels include generalized dephasing channels, cloning channels,
and the Unruh channel. The generalized dephasing channels and the cloning
channels are natural processes that occur in quantum systems
through the loss of quantum coherence or stimulated emission, respectively.
The Unruh channel is a noisy process that occurs in relativistic quantum
information theory as a result of the Unruh effect and bears a strong
relationship to the cloning channels. We give exact formulas for the
entanglement-assisted classical and quantum communication capacity regions of
these channels. The coding strategy for each of these examples is superior to
a na\"{\i}ve time-sharing strategy, and we introduce a measure to determine
this improvement.

\end{abstract}
\volumeyear{2010}
\volumenumber{ }
\issuenumber{ }
\eid{ }
\date{\today}




\maketitle

\section{Introduction}

One of the aims of quantum Shannon theory is to characterize the ultimate
limits on the transmission of information over a noisy quantum channel.
Holevo, Schumacher, and Westmoreland contributed the first seminal result in
this direction by providing a lower bound for the ultimate limit of a noisy
quantum channel to transmit classical
information~\cite{ieee1998holevo,PhysRevA.56.131}, a result now known as the
HSW coding theorem. Lloyd, Shor, and Devetak then contributed increasingly
rigorous proofs of the quantum channel coding
theorem~\cite{PhysRevA.55.1613,capacity2002shor,ieee2005dev}, now known as the
LSD\ coding theorem, that provides a lower bound on the ultimate limit for a
noisy quantum channel to transmit quantum information. Other expository proofs
appeared later, providing insight into the nature of quantum
coding~\cite{PhysRevLett.93.080501,qcap2008first,qcap2008second,qcap2008third,qcap2008fourth}%
. Bennett \textit{et al}.~\cite{PhysRevA.54.3824} and Barnum \textit{et
al}.~\cite{ieee2000barnum}\ also showed that the capacity of a quantum channel
for transmitting quantum information is the same as that channel's capacity
for generating shared entanglement between sender and receiver. These three
results form the core of the dynamic, single-resource quantum Shannon theory,
where a sender exploits a noisy quantum channel to establish a single
noiseless resource, namely, classical communication, quantum communication, or
shared entanglement, with a receiver.

A formula for the capacity of a channel gives a \textquotedblleft
single-letter\textquotedblright\ characterization if the computation of the
capacity requires an optimization over only a single use of the channel, and
the formula gives a \textquotedblleft multi-letter\textquotedblright%
\ characterization otherwise. A single-letter characterization implies that
the computation of the capacity is tractable for a fixed input dimension of the channel, whereas a multi-letter
characterization typically requires an optimization over an infinite number of
uses of the channel and is therefore intractable in this case. This
\textquotedblleft single-letterization\textquotedblright\ issue does not play
a central role in classical information theory for the most basic task of
communication over a noisy classical channel, because single-letterization
occurs naturally in Shannon's original analysis for classical, memoryless
channels~\cite{bell1948shannon}. But this issue plays a prominent role in the
domain of quantum Shannon theory even for the most basic communication tasks.
Our knowledge so far indicates that the computation of the classical capacity
is intractable in the general case~\cite{H09,HW08,FKM09,BH09}, with the same
seeming to hold generally for the quantum
capacity~\cite{PhysRevA.57.830,smith:030501}. These results underscore our
incomplete\ understanding of the nature of quantum information, but they also
have the surprising and \textquotedblleft uniquely quantum\textquotedblright%
\ respective consequences that the strong correlations present in entangled
uses of a quantum channel boost the classical capacity and that the degeneracy
property of quantum codes can enhance the quantum capacity.

Thus, in hindsight, we might now say that any channel with a single-letter
capacity formula is a \textquotedblleft rare gem\textquotedblright\ in quantum
Shannon theory, given that the known formulas for capacities generally give
multi-letter characterizations. In fact,
researchers once conjectured that the HSW\ formula for the classical
capacity would generally give a single-letter characterization \cite{S04,S07}%
,\ until the recent result of Hastings~\cite{H09}. Researchers have found
several examples of these gems for the classical capacity:\ the identity
channel~\cite{AHW00}, unital qubit channels~\cite{king:4641}, erasure
channels~\cite{PhysRevLett.78.3217}, Hadamard channels~\cite{KMNR07},
entanglement-breaking channels~\cite{shor:4334}, depolarizing
channels~\cite{K03}, transpose-depolarizing channels~\cite{DHS06}, shifted
depolarizing channels~\cite{F05}, cloning channels~\cite{B09}, and the
so-called \textquotedblleft Unruh\textquotedblright\ channel~\cite{B09}.
Researchers have found fewer such exemplary single-letter gems for the quantum
capacity:\ erasure channels~\cite{PhysRevLett.78.3217}, degradable
channels~\cite{cmp2005dev,cubitt:102104}, conjugate degradable
channels~\cite{BDHM09}, and amplitude damping
channels~\cite{PhysRevA.71.032314}. We do not yet have a general method for
determining whether a given channel's capacity admits a single-letter
characterization using known formulas---the techniques for proving single-letterization of the
above examples are all \textit{ad hoc}, varying from case to case.
Additionally, we can observe that it is an even rarer gem for a channel to
admit a single-letter characterization for both the classical and quantum
capacity. This \textquotedblleft single-letter overlap\textquotedblright%
\ occurs for erasure channels~\cite{PhysRevLett.78.3217}, generalized
dephasing channels~\cite{cmp2005dev}, cloning channels~\cite{B09,BDHM09}, and
the Unruh channel~\cite{B09,BDHM09,BHP09}, but the reasons for
single-letterization of the classical and quantum capacities have no obvious connection.

After the early work in quantum Shannon theory, several researchers considered
how different noiseless resources such as entanglement, classical
communication, or quantum communication might trade off against one another
together with a noisy channel. The first findings in this direction were those
of Bennett \textit{et al}.~\cite{PhysRevLett.83.3081,ieee2002bennett}, who
showed that unlimited, shared, noiseless entanglement can boost the classical
capacity of a noisy quantum channel, generalizing the super-dense coding
effect \cite{PhysRevLett.69.2881}. Perhaps even more surprising was that the
formula for the entanglement-assisted classical capacity gives a single-letter
characterization, marking the first time that we could say there is a problem
in quantum Shannon theory that we truly understand. Shor then refined this
result by considering the classical capacity of a channel assisted by a finite
amount of shared entanglement \cite{arx2004shor}. He calculated a trade-off
curve that determines how a sender can optimally trade the consumption of
noiseless entanglement with the generation of noiseless classical
communication. This trade-off curve also bounds a rate region consisting of
rates of entanglement consumption and generated classical communication.
Unfortunately, the formulas for the rate region do not give a single-letter
characterization in the general case.

Shor's result then inspired Devetak and Shor to consider a scenario where a
sender exploits a noisy quantum channel to simultaneously transmit both noiseless classical
and quantum information \cite{cmp2005dev}, a scenario later dubbed
\textquotedblleft classically-enhanced quantum coding\textquotedblright%
\ \cite{HW08a,HW09}\ after schemes formulated in the theory of quantum error
correction~\cite{kremsky:012341,arx2008wildeUQCC}. Devetak and Shor provided a
multi-letter characterization of the classically-enhanced quantum capacity
region for general channels, but were able to show that both generalized
dephasing channels and erasure channels admit single-letter capacity regions.
We must emphasize that single-letterization of both the classical and
quantum capacities of a noisy quantum channel does not immediately imply the
single-letterization of the classically-enhanced quantum capacity region
(though the latter does imply the former). That is, the proof that the region
single-letterizes requires a different technique that has no obvious
connection to the techniques used to prove the single-letterization of the
individual capacities. The additional benefit of the Devetak-Shor
classically-enhanced quantum coding scheme is that it beats a time-sharing
strategy \footnote{Time-sharing in this case is a simple strategy that
exploits an HSW\ code for some fraction of the channel uses and an LSD\ code
for the other fraction of the channel uses.}\ for some channels.

We might say that the above scenarios are a part of the dynamic,
double-resource quantum Shannon theory, where a sender can exploit a noisy
quantum channel to generate two noiseless resources, or a sender can exploit a
noisy quantum channel in addition to a noiseless resource to generate another
noiseless resource. This theory culminated with the landmark work of Devetak
\textit{et al}.~that provided a multi-letter characterization for virtually
every permutation of two resources and a noisy quantum channel which one can
consider~\cite{arx2005dev,PhysRevLett.93.230504}, but they neglected to search
for channels with single-letter characterizations of the double-resource
capacity regions. Other researchers concurrently considered how noiseless
resources might trade off against each other in tasks outside of the dynamic,
double-resource quantum Shannon theory, such as quantum
compression~\cite{BHJW01,hayden:4404,KI01}, remote state
preparation~\cite{BHLSW05,PhysRevA.68.062319}, and hybrid quantum
memories~\cite{K03a}.

The next natural step in this line of inquiry was to consider the dynamic,
triple-resource quantum Shannon theory. Hsieh and Wilde did so by providing a
multi-letter characterization of an entanglement-assisted quantum channel's
ability to transmit both classical and quantum information \cite{HW08a,HW09}.
In addition, they found that the formulas for a generalized dephasing channel,
an erasure channel, and the trivial completely depolarizing channel all give a
single-letter triple trade-off capacity region (though they omitted the full
proof for the generalized dephasing channels). Thus, these findings provided a
complete understanding of the ultimate performance limits of any scheme for
entanglement-assisted classical and quantum error correction
\cite{science2006brun,DBH09,hsieh:062313,kremsky:012341,arx2008wildeUQCC,arx2007wildeEAQCC}%
, at least for the aforementioned channels. Ref.~\cite{HW08a} also constructed
a new protocol, dubbed the \textquotedblleft classically-enhanced father
protocol,\textquotedblright\ that outperforms a time-sharing strategy for
transmitting both classical and quantum information over an
entanglement-assisted quantum channel.

In this paper, we contribute a class of \textquotedblleft single-letter
gems\textquotedblright\ to the dynamic, triple-resource quantum Shannon
theory. We provide single-letter formulas for the triple trade-off
capacity region of the Hadamard class of channels. We then compute and plot
examples of the triple trade-off regions for the generalized dephasing
channels, the cloning channels, and the Unruh channel, all of which are
members of the class of Hadamard channels. The generalized dephasing channel
represents a natural mechanism for decoherence in physical systems such as
superconducting qubits~\cite{BDKS08}, the cloning channel represents a natural
process that occurs during stimulated
emission~\cite{MH82,SWZ00,LSHB02}, and the Unruh channel arises in
relativistic quantum information theory~\cite{B09,BDHM09,BHP09}, bearing
connections to the process of black-hole stimulated
emission~\cite{PhysRevD.14.870}. The proof technique to determine the formulas
for the cloning channel extends naturally to the formulas for the Unruh
channel, by exploiting the insights of Br\'{a}dler in Ref.~\cite{B09}. We also
find that the coding strategy for each of these channels beats a simple
time-sharing strategy, and we introduce a measure to compute the amount by
which it beats a time-sharing strategy.

We structure this work as follows. Section~\ref{sec:defnot} reviews the
definitions of noisy quantum channels, classical-quantum states,
information-theoretic quantities, and examples of noisy quantum channels that
are relevant for our purposes here. Section~\ref{sec:region-review}\ then
reviews the capacity regions mentioned above, specifically, the
classically-enhanced quantum (CQ)\ capacity region, the entanglement-assisted
classical (CE)\ capacity region, and the entanglement-assisted classical and
quantum (CQE)\ capacity region---we abbreviate a capacity region by the
noiseless resources involved:\ classical communication (C), quantum
communication (Q), or entanglement (E). We then present our main result in
Section~\ref{sec:single-letter-CEQ-EAC-Hadamard}: the proof that the CQ\ and
CE capacity regions of Hadamard channels admit a single-letter
characterization. As first observed by Hsieh and Wilde~\cite{HW08a}, and
perhaps surprisingly, the single-letterization of these two regions
immediately implies the single-letterization of the CQE\ capacity region.
Section~\ref{sec:param-curves} computes the CQE capacity region of the qubit
dephasing channel, the $1\rightarrow N$ cloning channel, and the Unruh
channel, and the next section plots the CQE regions. The final section
measures the improvement of the optimal protocols over a time-sharing
strategy. Finally, we conclude with some remaining observations and
suggestions for future work.

\section{Definitions and Notation}

\label{sec:defnot}

\subsection{Quantum Channels and Classical-Quantum States}

We first review the notion of a noisy quantum channel, an isometric extension,
and a classical-quantum state.

A noisy quantum channel $\mathcal{N}$ is a completely-positive
trace-preserving (CPTP) convex-linear map. It admits a Kraus representation
\cite{book2000mikeandike},\ so that its action on a density operator $\sigma$
is as follows:%
\begin{equation}
\mathcal{N}\left(  \sigma\right)  =\sum_{i}N_{i}\sigma N_{i}^{\dag},
\label{eq:kraus-map}%
\end{equation}
where the Kraus operators form a resolution of the identity:\ $\sum_{i}%
N_{i}^{\dag}N_{i}=I$, ensuring that the map is trace-preserving. The notion of
an isometric extension of a noisy quantum channel proves to be useful in
quantum Shannon theory---the notion is similar to that of a purification of a
density operator. Let $U_{\mathcal{N}}^{A^{\prime}\rightarrow BE}$ denote an
isometric extension of the noisy map $\mathcal{N}$, defined such
that~$U_{\mathcal{N}}^{\dag}U_{\mathcal{N}}=I$ and%
\[
\mathcal{N}^{A^{\prime}\rightarrow B}(\sigma
)=\mathop{{\mathrm{Tr}}_{E}}\{U_{\mathcal{N}}\sigma U_{\mathcal{N}}^{\dagger
}\}.
\]
The Kraus operators provide a straightforward method for constructing an
isometric extension:%
\[
U_{\mathcal{N}}^{A^{\prime}\rightarrow BE}=\sum_{i}N_{i}^{A^{\prime
}\rightarrow B}\otimes\left\vert i\right\rangle ^{E},
\]
where the set $\{\left\vert i\right\rangle ^{E}\}$ is an orthonormal set of
states. The following relation gives the conjugation of a density operator
$\sigma$ by the isometry $U_{\mathcal{N}}$:%
\begin{equation}
U_{\mathcal{N}}\sigma U_{\mathcal{N}}^{\dagger}=\sum_{i,j}(N_{i}\sigma
N_{j}^{\dag})^{B}\otimes\left\vert i\right\rangle \left\langle j\right\vert
^{E}. \label{eq:isometric-from-Kraus}%
\end{equation}
Let $\mathcal{N}^{c}$ denote a complementary channel of $\mathcal{N}$, unique
up to isometries on the system $E$, whose action on $\sigma$ is%
\[
\mathcal{N}^{c}(\sigma)\equiv\mathop{{\mathrm{Tr}}_{B}}\{U_{\mathcal{N}}\sigma
U_{\mathcal{N}}^{\dagger}\}.
\]
Observe that the following channel is a valid complementary channel for the
channel in (\ref{eq:kraus-map}):%
\[
\mathcal{N}^{c}\left(  \sigma\right)  =\sum_{i,j}\text{Tr}\left\{  N_{i}\sigma
N_{j}^{\dag}\right\}  \left\vert i\right\rangle \left\langle j\right\vert
^{E}.
\]
A quantum channel is degradable~\cite{cmp2005dev,cubitt:102104}, a notion
imported from classical information theory \cite{B73}, if there exists a
degrading channel $\mathcal{D}^{B\rightarrow E}$ that simulates the action of
the complementary channel $\left(  \mathcal{N}^{c}\right)  ^{A^{\prime
}\rightarrow E}$ so that%
\[
\forall\sigma\ \ \ \mathcal{D}^{B\rightarrow E}\circ\mathcal{N}^{A^{\prime
}\rightarrow B}\left(  \sigma\right)  =\left(  \mathcal{N}^{c}\right)
^{A^{\prime}\rightarrow E}\left(  \sigma\right)  .
\]

Suppose that Alice possesses an ensemble $\{(p_{X}\left(  x\right)  ,\rho
_{x}^{A^{\prime}})\}$\ of quantum states where $p_{X}\left(  x\right)  $ is the
probability density function for a random variable $X$ and $\rho
_{x}^{A^{\prime}}$ is a density operator conditional on the realization $x$ of
random variable $X$. She can augment this ensemble by correlating a classical
variable with each $\rho_{x}^{A^{\prime}}$. This procedure produces an
augmented ensemble $\{(p_{X}\left(  x\right)  ,\left\vert x\right\rangle
\left\langle x\right\vert ^{X}\otimes\rho_{x}^{A^{\prime}})\}$, where the
states $\{\left\vert x\right\rangle ^{X}\}$ form an orthonormal set. Taking
the expectation over the augmented ensemble gives the following
classical-quantum state:%
\begin{equation}
\rho^{XA^{\prime}}\equiv\sum_{x}p_{X}\left(  x\right)  |x\rangle\langle
x|^{X}\otimes\rho_{x}^{A^{\prime}}.\nonumber
\end{equation}
Let $\mathop{\left|\phi_x\right>}\nolimits^{AA^{\prime}}$ denote purifications
of each $\rho_{x}^{A^{\prime}}$. The following state is also a
classical-quantum state:%
\begin{equation}
\rho^{XAA^{\prime}}\equiv\sum_{x}p_{X}\left(  x\right)  |x\rangle\!\langle
x|^{X}\otimes|\phi_{x}\rangle\!\langle\phi_{x}|^{AA^{\prime}}.
\end{equation}
Suppose that Alice transmits the $A^{\prime}$ subsystem through a noisy
quantum channel $\mathcal{N}^{A^{\prime}\rightarrow B}$. The state output from
the channel is $\rho^{XAB}$ where%
\begin{align}
\rho^{XAB}  &  \equiv\mathcal{N}^{A^{\prime}\rightarrow B}(\rho^{XAA^{\prime}%
})\\
&  =\sum_{x}p_{X}\left(  x\right)  |x\rangle\!\langle x|^{X}\otimes
\mathcal{N}^{A^{\prime}\rightarrow B}(|\phi_{x}\rangle\!\langle\phi
_{x}|^{AA^{\prime}}).
\end{align}
It is implicit that an identity acts on a system for which there is no label
on the CPTP map $\mathcal{N}$. Note that the registers $XAB$ also form a
classical-quantum system. For such a multi-party state, we adopt the
convention that a state with a superscript unambiguously identifies a density
operator.\ For example, we define the reduced density operator $\rho^{A}%
\equiv\mathop{{\mathrm{Tr}}_{XB}}\{\rho^{XAB}\}$. We define states
$\mathop{\left|\phi_x\right>}\nolimits^{ABE}$ as%
\begin{equation}
\mathop{\left|\phi_x\right>}\nolimits^{ABE}\equiv U_{\mathcal{N}%
}\mathop{\left|\phi_x\right>}\nolimits^{AA^{\prime}}.\nonumber
\end{equation}
Observe that the states $\mathop{\left|\phi_x\right>}\nolimits^{ABE}$ purify
each $\rho_{x}^{AB}$. All of the above classical-quantum states are important
throughout this paper.

\subsection{Information-Theoretic Quantities}

We now define some standard information theoretic quantities. The von Neumann
entropy of a quantum state $\rho$ is defined as%
\[
H(\rho)\equiv-\text{Tr}\left\{  \rho\log{\rho}\right\}  .
\]
We write $H(A)_{\rho}\equiv H(\rho^{A})$ for a subsystem $A$ of $\rho$,
omitting the subscript if $\rho$ is implicitly clear. Note that the von
Neumann entropy is equal to the Shannon entropy for a classical system $X$:%
\[
H(X)\equiv-\sum_{x}p_{X}\left(  x\right)  \log\left(  {p_{X}\left(  x\right)
}\right)  .
\]
We define the conditional entropy$~H(A|B)$, the mutual information~$I(A;B)$,
and the coherent information~$I(A\rangle B)$ as follows for a bipartite state
$\rho^{AB}$:%
\begin{align*}
H(A|B)  &  \equiv H(AB)-H(B),\\
I(A;B)  &  \equiv H(A)-H(A|B),\\
I(A\rangle B)  &  \equiv-H(A|B).
\end{align*}
The Holevo quantity $\chi$\ for the state $\rho^{XB}$ of a classical-quantum
system $XB$ is%
\[
\chi(\{(p_{X}\left(  x\right)  ,\rho_{x}^{B})\})\equiv I\left(  X;B\right)
_{\rho}.
\]
The following relation holds by the joint entropy theorem
\cite{book2000mikeandike} when the conditioning system in a conditional
entropy is classical:%
\[
H(B|X)=\sum_{x}p_{X}\left(  x\right)  H(B)_{\rho_{x}^{B}}.
\]

\subsection{Quantum Channels}

\subsubsection{Entanglement-breaking Channels}

A noisy quantum channel $\mathcal{N}$\ is entanglement-breaking if it outputs
a separable state whenever half of any entangled state is the input to the
channel \cite{HSR03}. By the methods in Ref.~\cite{HSR03}, a channel is
entanglement-breaking if it produces a separable state when its input is half
of a maximally entangled state. More precisely, suppose that $\left\vert
\Phi\right\rangle ^{AA^{\prime}}$ is the maximally entangled qudit state:%
\begin{equation}
\left\vert \Phi\right\rangle ^{AA^{\prime}}\equiv\frac{1}{\sqrt{D}}\sum
_{i=0}^{D-1}\left\vert i\right\rangle ^{A}\left\vert i\right\rangle
^{A^{\prime}}. \label{eq:max-entangled}%
\end{equation}
A channel $\mathcal{N}^{A^{\prime}\rightarrow B}$ is entanglement-breaking if
and only if%
\[
\mathcal{N}^{A^{\prime}\rightarrow B}(\Phi^{AA^{\prime}})=\sum_{x}p_{X}\left(
x\right)  \rho_{x}^{A}\otimes\sigma_{x}^{B},
\]
where $p_{X}\left(  x\right)  $ is an arbitrary probability distribution and
$\rho_{x}^{A}$ and $\sigma_{x}^{B}$ are arbitrary density operators.

We can also think of an entanglement-breaking channel as a
quantum-classical-quantum channel \cite{HSR03}, in the sense that it first
applies a noisy channel $\mathcal{N}_{1}$, it performs a complete von Neumann
measurement of the resulting state, and it finally applies another noisy
channel $\mathcal{N}_{2}$. Consequently, we can write the action of an
entanglement-breaking channel $\mathcal{N}_{\text{EB}}$ as follows:%
\begin{equation}
\mathcal{N}_{\text{EB}}\left(  \rho\right)  =\sum_{x}\text{Tr}\left\{
\Lambda_{x}\rho\right\}  \sigma_{x}, \label{eq:ent-break-form}%
\end{equation}
where $\left\{  \Lambda_{x}\right\}  $ is a positive-operator-valued
measurement (POVM) and the states $\sigma_{x}$ are arbitrary. Additionally,
any entanglement-breaking channel admits a Kraus representation whose Kraus
operators are unit rank: $N_{i}=\left\vert \xi_{i}\right\rangle ^{B}%
\left\langle \varsigma_{i}\right\vert ^{A^{\prime}}$ \cite{HSR03}. Note that
the sets $\{\left\vert \xi_{i}\right\rangle ^{B}\}$ and $\{\left\vert
\varsigma_{i}\right\rangle ^{A^{\prime}}\}$ each do not necessarily form an
orthonormal set.

The classical capacity of an entanglement-breaking channel single-letterizes
\cite{shor:4334}, and its quantum capacity vanishes because it destroys
entanglement and thus cannot transmit quantum information.

\subsubsection{Hadamard Channels}

\label{sec:Hadamard-channel}A Hadamard channel is a quantum channel whose
complementary channel is entanglement-breaking~\cite{KMNR07}. We can write its
output as the Hadamard product (element-wise multiplication) of a
representation of the input density operator with another operator
\footnote{Strictly speaking, King~\textit{et~al}.~did not refer to such
channels as \textquotedblleft Hadamard channels,\textquotedblright\ but rather
as \textquotedblleft channels of the Hadamard form.\textquotedblright\ Here,
we loosely refer to them simply as \textquotedblleft Hadamard
channels.\textquotedblright}. To briefly review how this comes about, suppose
that the complementary channel$~\left(  \mathcal{N}^{c}\right)  ^{A^{\prime
}\rightarrow E}$\ of a channel$~\mathcal{N}^{A^{\prime}\rightarrow B}$ is
entanglement-breaking. Then, using the fact in the previous section that its
Kraus operators $\left\vert \xi_{i}\right\rangle ^{E}\left\langle \zeta
_{i}\right\vert ^{A^{\prime}}$ are unit rank and the construction in
(\ref{eq:isometric-from-Kraus}) for an isometric extension, we can write an
isometric extension$~U_{\mathcal{N}^{c}}$\ for $\mathcal{N}^{c}$ as%
\begin{align}
U_{\mathcal{N}^{c}}\sigma U_{\mathcal{N}^{c}}^{\dag}  &  =\sum_{i,j}\left\vert
\xi_{i}\right\rangle ^{E}\left\langle \zeta_{i}\right\vert ^{A^{\prime}}%
\sigma\left\vert \zeta_{j}\right\rangle ^{A^{\prime}}\left\langle \xi
_{j}\right\vert ^{E}\otimes\left\vert i\right\rangle ^{B}\left\langle
j\right\vert ^{B}\nonumber\\
&  =\sum_{i,j}\left\langle \zeta_{i}\right\vert ^{A^{\prime}}\sigma\left\vert
\zeta_{j}\right\rangle ^{A^{\prime}}\left\vert \xi_{i}\right\rangle
^{E}\left\langle \xi_{j}\right\vert ^{E}\otimes\left\vert i\right\rangle
^{B}\left\langle j\right\vert ^{B}. \label{eq:iso-Hadamard}%
\end{align}
The sets $\{\left\vert \xi_{i}\right\rangle ^{E}\}$ and $\{\left\vert
\zeta_{i}\right\rangle ^{A^{\prime}}\}$ each do not necessarily consist of
orthonormal states, but the set $\{\left\vert i\right\rangle ^{B}\}$ does.
Tracing over the system $E$ gives the original channel from system $A^{\prime
}$ to $B$:%
\begin{equation}
\mathcal{N}_{\text{H}}^{A^{\prime}\rightarrow B}\left(  \sigma\right)
=\sum_{i,j}\left\langle \zeta_{i}\right\vert ^{A^{\prime}}\sigma\left\vert
\zeta_{j}\right\rangle ^{A^{\prime}}\left\langle \xi_{j}|\xi_{i}\right\rangle
^{E}\left\vert i\right\rangle ^{B}\left\langle j\right\vert ^{B}.
\label{eq:hadamard-product}%
\end{equation}
Let $\Sigma$ denote the matrix with elements$~\left[  \Sigma\right]
_{i,j}=\left\langle \zeta_{i}\right\vert ^{A^{\prime}}\sigma\left\vert
\zeta_{j}\right\rangle ^{A^{\prime}}$, a representation of the input state
$\sigma$, and let $\Gamma$ denote the matrix with elements$~\left[
\Gamma\right]  _{i,j}=\left\langle \xi_{j}|\xi_{i}\right\rangle ^{E}$. Then,
from (\ref{eq:hadamard-product}), it is clear that the output of the channel
is the Hadamard product $\ast$\ of $\Sigma$ and $\Gamma^{\dag}$ with respect
to the basis $\{\left\vert i\right\rangle ^{B}\}$:%
\[
\mathcal{N}_{\text{H}}^{A^{\prime}\rightarrow B}\left(  \sigma\right)
=\Sigma\ast\Gamma^{\dag}.
\]
For this reason, such a channel is known as a Hadamard channel.

Hadamard channels are degradable. If Bob performs a von Neumann measurement of
his state in the basis$~\{\left\vert i\right\rangle ^{B}\}$ and prepares the
state $\left\vert \xi_{i}\right\rangle ^{E}$ conditional on the outcome of the
measurement, this procedure simulates the complementary channel$~\left(
\mathcal{N}^{c}\right)  ^{A^{\prime}\rightarrow E}$ and also implies that the
degrading map~$\mathcal{D}^{B\rightarrow E}$ is entanglement-breaking. To be
more precise, the Kraus operators of the degrading map$~\mathcal{D}%
^{B\rightarrow E}$\ are $\{\left\vert \xi_{i}\right\rangle ^{E}\left\langle
i\right\vert ^{B}\}$ so that%
\begin{align*}
\mathcal{D}^{B\rightarrow E}(\mathcal{N}_{\text{H}}^{A^{\prime}\rightarrow
B}\left(  \sigma\right)  )  &  =\sum_{i}\left\vert \xi_{i}\right\rangle
^{E}\left\langle i\right\vert ^{B}\mathcal{N}^{A^{\prime}\rightarrow B}\left(
\sigma\right)  \left\vert i\right\rangle ^{B}\left\langle \xi_{i}\right\vert
^{E}\\
&  =\sum_{i}\left\langle i\right\vert ^{A^{\prime}}\sigma\left\vert
i\right\rangle ^{A^{\prime}}\left\vert \xi_{i}\right\rangle \left\langle
\xi_{i}\right\vert ^{E},
\end{align*}
demonstrating that this degrading map effectively simulates the complementary
channel$~\left(  \mathcal{N}_{\text{H}}^{c}\right)  ^{A^{\prime}\rightarrow E}$. Note
that we can view this degrading map as the composition of two maps:\ a first
map $\mathcal{D}_{1}^{B\rightarrow Y}$ performs the von Neumann measurement, leading to a classical variable $Y$,
and a second map $\mathcal{D}_{2}^{Y\rightarrow E}$
performs the state preparation, conditional on the value of
the classical variable $Y$. We can therefore write
$\mathcal{D}^{B\rightarrow E}=\mathcal{D}_{2}^{Y\rightarrow E%
}\circ\mathcal{D}_{1}^{B\rightarrow Y}$. This observation is crucial to our proof of the
single-letterization of both the CQ\ and CE\ capacity regions of Hadamard channels.
These structural relationships are summarized in the following commutative diagram:
\begin{diagram}
&            &            &  B                 \\
&            & \ruTo^{\mathcal{N}_{\text{H}}}  &  \dDotsto_{\mathcal{D}_1}         \\
& A^{\prime}           &            &  \; \; \quad\quad\quad\quad Y \; \mbox
{(classical)}            \\
&            & \rdTo_{\mathcal{N}_{\text{H}}^c} &  \dDotsto_{\mathcal{D}_2}
\\
&            &              &  E
\end{diagram}%

We show in the next two sections that generalized dephasing channels and
cloning channels are both members of the Hadamard class because their
complementary channels are entanglement-breaking.

\subsubsection{Generalized Dephasing Channels}

\label{sec:generalized-dephasing}Generalized dephasing channels model physical
processes where there is no loss of energy but there is a loss of quantum
coherence \cite{book2000mikeandike}, a type of quantum noise that dominates for example in
superconducting qubits~\cite{BDKS08}. The respective input and output systems
$A^{\prime}$ and $B$ of such channels are of the same dimension. Let
$\{\left\vert i\right\rangle ^{A^{\prime}}\}$ and $\{\left\vert i\right\rangle
^{B}\}$ be some respective preferred orthonormal bases for these systems, the
first of which we call the dephasing basis. The channel does not affect any
state that is diagonal in the dephasing basis, but it mixes coherent
superpositions of these basis states.

An isometric extension $U_{\mathcal{N}_{\text{GD}}}^{A^{\prime}\rightarrow
BE}$ of a generalized dephasing channel $\mathcal{N}_{\text{GD}}^{A^{\prime
}\rightarrow B}$ has the form:%
\begin{equation}
U_{\mathcal{N}_{\text{GD}}}^{A^{\prime}\rightarrow BE}\equiv\sum_{i}\left\vert
i\right\rangle ^{B}\left\langle i\right\vert ^{A^{\prime}}\otimes\left\vert
\vartheta_{i}\right\rangle ^{E}, \label{eq:iso-gen-dephasing}%
\end{equation}
where the set $\{\left\vert \vartheta_{i}\right\rangle ^{E}\}$ is not
necessarily an orthonormal set. The output of a generalized dephasing channel
is as follows:%
\begin{equation}
\mathcal{N}_{\text{GD}}(\sigma)=\sum_{i,j}\langle\vartheta_{j}|\vartheta
_{i}\rangle^{E}\mathop{\left<i\,\right|}\nolimits\sigma
\mathop{\left|j\right>}^{A^{\prime}}\nolimits|i\rangle\langle j|^{B}.\nonumber
\end{equation}
If the states $\left\vert \vartheta_{i}\right\rangle ^{E}$ are orthonormal,
the channel is a completely dephasing channel $\Delta^{A^{\prime}\rightarrow
B}$:%
\begin{equation}
\Delta^{A^{\prime}\rightarrow B}(\sigma)\equiv\sum_{i}|i\rangle^{B}\langle
i|^{A^{\prime}}\sigma|i\rangle^{A^{\prime}}\langle i|^{B}.\nonumber
\end{equation}

We obtain the complementary channel of a generalized dephasing channel by
tracing over Bob's system $B$:%
\[
\text{Tr}_{B}\left\{  U_{\mathcal{N}_{\text{GD}}}\sigma U_{\mathcal{N}%
_{\text{GD}}}^{\dag}\right\}  =\sum_{i}\left\langle i\right\vert ^{A^{\prime}%
}\sigma\left\vert i\right\rangle ^{A^{\prime}}\left\vert \vartheta
_{i}\right\rangle \left\langle \vartheta_{i}\right\vert ^{E},
\]
which we recognize to be an entanglement-breaking channel of the form in
(\ref{eq:ent-break-form}). Thus, a generalized dephasing channel is a Hadamard
channel because its complementary channel is entanglement-breaking. In fact,
the isometric extension in~(\ref{eq:iso-gen-dephasing}) of the generalized
dephasing channel appears remarkably similar to the isometric extension
in~(\ref{eq:iso-Hadamard}) of the Hadamard channel, with the exception that
the states $\{\left\vert i\right\rangle ^{A^{\prime}}\}$ of the generalized
dephasing channel form an orthonormal basis.

A completely dephasing channel $\Delta$\ commutes with a generalized dephasing
channel $\mathcal{N}_{\text{GD}}$ because%
\[
(\mathcal{N}_{\text{GD}}\circ\Delta)(\sigma)=(\Delta\circ\mathcal{N}%
_{\text{GD}})(\sigma)=\sum_{i}\left\langle i\right\vert \sigma\left\vert
i\right\rangle ^{A^{\prime}}|i\rangle\langle i|^{B}.
\]
The property $\mathcal{N}_{\text{GD}}^{c}=\mathcal{N}_{\text{GD}}^{c}%
\circ\Delta$ also holds for the complementary channel:%
\[
\mathcal{N}_{\text{GD}}^{c}(\sigma)=(\mathcal{N}_{\text{GD}}^{c}\circ
\Delta)(\sigma)=\sum_{i}\mathop{\left<i\,\right|}\nolimits\rho
\mathop{\left|i\right>}\nolimits^{A^{\prime}}|\vartheta_{i}\rangle
\!\langle\vartheta_{i}|^{E}.
\]

The simplest example of a generalized dephasing channel is a qubit dephasing
channel. The action of the $p$-dephasing qubit channel is%
\begin{equation}
\mathcal{N}\left(  \sigma\right)  =(1-p)\sigma+p\Delta\left(  \sigma\right)  ,
\label{eq:qubit-dephasing}%
\end{equation}
where $\Delta$ in this case is%
\begin{equation}
\Delta(\sigma)=\frac{1}{2}\left(  \sigma+Z\sigma Z\right)  ,\nonumber
\end{equation}
and $Z$ is the Pauli matrix $\sigma_{Z}$. Hence an isometric extension
$U_{\mathcal{N}}^{A^{\prime}\rightarrow BE}$ of the qubit dephasing channel
has the form:%
\[
U_{\mathcal{N}}^{A^{\prime}\rightarrow BE}=\sqrt{1-\frac{p}{2}}\ I\otimes
\left\vert 0\right\rangle ^{E}+\sqrt{\frac{p}{2}}\ Z\otimes\left\vert
1\right\rangle ^{E},
\]
where $I$ is the identity operator. Therefore, the following is a
complementary channel $\mathcal{N}^{c}$\ of a qubit dephasing channel:%
\begin{multline*}
\mathcal{N}^{c}\left(  \sigma\right)  =\frac{p}{2}\left\vert 0\right\rangle
\left\langle 0\right\vert ^{E}+\left(  1-\frac{p}{2}\right)  \left\vert
1\right\rangle \left\langle 1\right\vert ^{E}\\
+\sqrt{\left(  1-\frac{p}{2}\right)  \frac{p}{2}}\text{Tr}\left\{  \sigma
Z\right\}  \left(  \left\vert 0\right\rangle \left\langle 1\right\vert
^{E}+\left\vert 1\right\rangle \left\langle 0\right\vert ^{E}\right)  ,
\end{multline*}
and we observe that a bit flip on the input state does not change the
eigenvalues of the resulting environment output state.

\subsubsection{Cloning Channels}

\label{sec:cloning}The no-cloning theorem forbids the cloning of arbitrary
quantum states \cite{nat1982}. However, nothing prevents one from performing
approximate cloning provided the fidelity of cloning is not too high. A
universal $1\rightarrow N$ cloner is a device that approximately copies the
input with maximal copy fidelity independent of the input state \cite{GM97}.
We refer to such a device as a cloning channel \cite{B09,BDHM09}. Such a
decoherence process occurs naturally during stimulated
emission~\cite{MH82,SWZ00,LSHB02}.

We focus on $1\rightarrow N$ qubit cloning channels where a single qubit
serves as the input, and the output is $N$ identical approximate copies on $N$
respective qubit systems. These universal cloners are unitarily
covariant~\cite{H02}, in the sense that a\ unitary $V$ on the input qubit maps
to an irreducible representation $R_{V}$ of $V$ on the output state:%
\begin{equation}
\mathcal{N}(V\rho V)=R_{V}\mathcal{N}(\rho)R_{V}^{\dagger}.\nonumber
\end{equation}

We present the action of an isometric extension of a $1\rightarrow N$ cloning
channel on a basis
$\{\mathop{\left|0\right>}\nolimits,\mathop{\left|1\right>}\nolimits\}$ for a
qubit input system $A^{\prime}$. Let $\mathop{\left|j\right>}\nolimits^{B}$ be
an orthonormal basis of normalized completely symmetric states for the output
system $B$ that consists of $N$ qubits:%
\[
\{\mathop{\left|j\right>}\nolimits^{B}\equiv
\mathop{\left|N-j, j\right>}\nolimits\}_{j=0}^{N},
\]
where $\left\vert N-j,j\right\rangle ^{B}$ denotes a normalized state on an
$N$-qubit system that is a uniform superposition of computational basis states
with $N-j$ \textquotedblleft zeros\textquotedblright\ and $j$
\textquotedblleft ones.\textquotedblright\ Let
$\mathop{\left|i\right>}\nolimits^{E}$ be an orthonormal basis for the
environment$~E$:%
\[
\{\mathop{\left|i\right>}\nolimits^{E}\equiv
\mathop{\left|N-i-1, i\right>}\nolimits\}_{i=0}^{N-1},
\]
where $\left\vert N-i-1,i\right\rangle ^{E}$ denotes a normalized state on an
$\left(  N-1\right)  $-qubit system that is a uniform superposition of
computational basis states with $N-i-1$ \textquotedblleft
zeros\textquotedblright\ and $i$ \textquotedblleft ones.\textquotedblright%
\ Then an isometric extension $U_{\mathcal{N}_{\text{Cl}}}^{A^{\prime
}\rightarrow BE}$ of the $1\rightarrow N$ cloning channel $\mathcal{N}$ has
the form:%
\begin{multline}
U_{\mathcal{N}_{\text{Cl}}}^{A^{\prime}\rightarrow BE}\equiv\frac{1}%
{\sqrt{\Delta_{N}}}\sum_{i=0}^{N-1}\sqrt{N-i}%
\mathop{\left|i\right>}\nolimits^{B}\left\langle 0\right\vert ^{A^{\prime}%
}\otimes\mathop{\left|i\right>}\nolimits^{E}\label{eq:ucl}\\
+\frac{1}{\sqrt{\Delta_{N}}}\sum_{i=0}^{N-1}\sqrt{i+1}%
\mathop{\left|i+1\right>}\nolimits^{B}\left\langle 1\right\vert ^{A^{\prime}%
}\otimes\mathop{\left|i\right>}\nolimits^{E},
\end{multline}
where $\Delta_{N}\equiv N\left(  N+1\right)  /2$. A set of Kraus operators for
the channel $\mathcal{N}_{\text{Cl}}$ is as follows:%
\[
\left\{  \frac{1}{\sqrt{\Delta_{N}}}\left(  \sqrt{N-i}%
\mathop{\left|i\right>}\nolimits^{B}\left\langle 0\right\vert ^{A^{\prime}%
}+\sqrt{i+1}\mathop{\left|i+1\right>}\nolimits^{B}\left\langle 1\right\vert
^{A^{\prime}}\right)  \right\}  _{i=0}^{N-1},
\]
and a set of Kraus operators for the complementary channel $\mathcal{N}%
_{\text{Cl}}^{c}$ is as follows:%
\begin{align*}
&  \sqrt{N}\left\vert 0\right\rangle ^{E}\left\langle 0\right\vert
^{A^{\prime}},\\
&  \left\{  \sqrt{N-i}\left\vert i\right\rangle ^{E}\left\langle 0\right\vert
^{A^{\prime}}+\sqrt{i}\left\vert i-1\right\rangle ^{E}\left\langle
1\right\vert ^{A^{\prime}}\right\}  _{i=1}^{N-1},\\
&  \sqrt{N}\left\vert N-1\right\rangle ^{E}\left\langle 1\right\vert
^{A^{\prime}}.
\end{align*}
We can rewrite the Kraus operators for the complementary channel of a
$1\rightarrow2$ cloning channel as follows:%
\[
\left\{
\begin{array}
[c]{c}%
\sqrt{\frac{1}{3}}\left\vert +\right\rangle \left\langle +\right\vert
,\sqrt{\frac{1}{3}}\left\vert -\right\rangle \left\langle -\right\vert
,\sqrt{\frac{1}{3}}\left\vert 0\right\rangle \left\langle 0\right\vert
,\sqrt{\frac{1}{3}}\left\vert 1\right\rangle \left\langle 1\right\vert ,\\
\sqrt{\frac{1}{3}}\left\vert +_{Y}\right\rangle \left\langle +_{Y}\right\vert
\sigma_{Z},\sqrt{\frac{1}{3}}\left\vert -_{Y}\right\rangle \left\langle
-_{Y}\right\vert \sigma_{Z}%
\end{array}
\right\}  ,
\]
where $\left\vert +\right\rangle $ and $\left\vert -\right\rangle $ are the
eigenstates of the Pauli $X$ matrix and $\left\vert +_{Y}\right\rangle $ and
$\left\vert -_{Y}\right\rangle $ are the eigenstates of the Pauli $Y$ matrix.
The representation of the complementary channel with unit rank Kraus operators
explicitly demonstrates that it is entanglement-breaking (as first observed
with a positive partial transpose argument in Ref.~\cite{B09}), and the other
arguments in Ref.~\cite{B09} demonstrate that every $1\rightarrow N$ cloning
channel is entanglement-breaking. Therefore, a $1\rightarrow N$ cloning
channel is in the class of Hadamard channels.

\subsubsection{Unruh channels}

The Unruh channel is a natural channel to consider in the context of quantum
field theory \cite{BHP09}. The output of the Unruh channel is the quantum
state detected by a uniformly accelerated observer when the input is a
dual-rail photonic qubit prepared by a Minkowski observer. The mathematical
structure of the output of an Unruh channel $\mathcal{N}_{\text{U}}$ is an
infinite-dimensional block-diagonal density matrix, where the $N^{\text{th}}$
block is an instance of a $1\rightarrow N$ cloning channel:%
\begin{equation}
\mathcal{N}_{\text{U}}(\sigma)\equiv\bigoplus_{l=2}^{\infty}p_{l}\left(
z\right)  S_{l}(\sigma), \label{eq:Unruh-channel}%
\end{equation}
where%
\[
p_{l}\left(  z\right)  \equiv(1-z)^{3}z^{l-2}\Delta_{l-1},
\]
the \textquotedblleft acceleration parameter\textquotedblright\ $z\in
\lbrack0,1)$ is a strictly increasing function of acceleration, $\Delta
_{l-1}\equiv(l-1)l/2$, and $S_{l}$ is the output of a $1\rightarrow(l-1)$
cloning channel. The complementary channel $\mathcal{N}_{\text{U}}^{c}$ is
similar, with $S_{l}$ replaced by $S_{l}^{c}$, the complementary channel of a
$1\rightarrow(l-1)$ cloning channel.

\section{Review of Capacity Regions}

\label{sec:region-review}We review several trade-off capacity regions that
have appeared in the quantum Shannon theory literature. We first review the
Devetak-Shor result for the classically-enhanced quantum (CQ) capacity region
\cite{cmp2005dev}. We then review Shor's results on the entanglement-assisted
classical (CE) capacity region \cite{arx2004shor}, followed by a review of the
natural generalization to the triple trade-off region for
entanglement-assisted classical and quantum (CQE) coding \cite{HW08a}.

\subsection{Classically-Enhanced Quantum Capacity Region}

\label{sec:devshor}Consider a protocol that exploits a noisy quantum
channel$~\mathcal{N}$\ to transmit both classical and quantum information. The
goal of such a protocol is to transmit as much of these resources
as possible with vanishing error probability and fidelity approaching unity in
the limit of a large number of uses of the channel$~\mathcal{N}$. More
precisely, the protocol transmits one classical message from a set of $M$
messages and an arbitrary quantum state of dimension $K$ using a large number
$n$ uses of the quantum channel. The classical rate of transmission is
$C\equiv\frac{\log\left(  M\right)  }{n}$ bits per channel use, and the
quantum rate of transmission is $Q\equiv\frac{\log\left(  K\right)  }{n}$
qubits per channel use. If there is a scheme that transmits classical data at
rate $C$ with vanishing error probability and quantum data at rate $Q$ with
fidelity approaching unity in the limit of a large number of uses of the
channel, we say that the rates $C$ and $Q$ form an achievable rate pair
$\left(  C,Q\right)  $.

Devetak and Shor's main result in Ref.~\cite{cmp2005dev} is that all
achievable rate pairs lie in the following classically-enhanced quantum (CQ)
capacity region of $\mathcal{N}$:%
\begin{equation}
\mathcal{C}_{\text{CQ}}(\mathcal{N})\equiv\overline{\bigcup_{k=1}^{\infty
}\frac{1}{k}\mathcal{C}_{\text{CQ}}^{(1)}(\mathcal{N}^{\otimes k})},
\label{eq:multi-letter-CEQ}%
\end{equation}
where $\overline{Z}$ is the closure of a set $Z$, and the \textquotedblleft
one-shot\textquotedblright\ region $\mathcal{C}^{(1)}(\mathcal{N})$ is as
follows:%
\begin{equation}
\mathcal{C}_{\text{CQ}}^{(1)}(\mathcal{N})\equiv\bigcup_{\rho}\mathcal{C}%
_{\text{CQ,}\rho}^{(1)}(\mathcal{N}).\nonumber
\end{equation}
The \textquotedblleft one-shot, one-state\textquotedblright\ region
$\mathcal{C}_{\text{CQ,}\rho}^{(1)}(\mathcal{N})$ is the set of all $C,Q\geq0$
such that%
\begin{align}
C  &  \leq I(X;B)_{\rho},\nonumber\\
Q  &  \leq I(A\rangle BX)_{\rho},\nonumber
\end{align}
where $\rho$ is a classical-quantum state of the following form:%
\begin{equation}
\rho^{XABE}\equiv\sum_{x}p_{X}\left(  x\right)  \left\vert x\right\rangle
\left\langle x\right\vert ^{X}\otimes U_{\mathcal{N}}^{A^{\prime}\rightarrow
BE}(\phi_{x}^{AA^{\prime}}), \label{eq:CEQ-state}%
\end{equation}
the states $\phi_{x}^{AA^{\prime}}$ are pure, and the dimension of the classical
system is finite~\cite{cmp2005dev}. For general channels, the
multi-letter characterization in (\ref{eq:multi-letter-CEQ}) is necessary, but
for certain channels, such as erasure channels and generalized dephasing
channels, the CQ\ capacity region admits a single-letter
characterization~\cite{cmp2005dev}. In
Section~\ref{sec:single-letter-CEQ-Hadamard}, we show that the CQ\ region for
all Hadamard channels admits a single-letter characterization.

\subsection{Entanglement-Assisted Classical Capacity Region}

Now consider a protocol that exploits shared noiseless entanglement and a
noisy quantum channel$~\mathcal{N}$\ to transmit classical information. The
goal of such a protocol is to transmit as much classical information as
possible with vanishing error probability while consuming as little
entanglement as possible in the limit of a large number of uses of the
channel$~\mathcal{N}$. More precisely, the protocol transmits one classical
message from a set of $M$ messages using a large number $n$ uses of the
quantum channel and a maximally entangled state $\Phi^{T_{A}T_{B}}$ of the
form in (\ref{eq:max-entangled}) where the sender possesses the system $T_{A}%
$, the receiver possesses the system $T_{B}$, and $D$ is the Schmidt rank of
the entanglement. The classical rate of transmission is $C\equiv\frac
{\log\left(  M\right)  }{n}$ bits per channel use, and the rate of
entanglement consumption is $E\equiv\frac{\log\left(  D\right)  }{n}$ ebits
per channel use. If there is a scheme that transmits classical data at rate
$C$ with vanishing error probability and consumes entanglement at rate $E$ in
the limit of a large number of uses of the channel, we say that the rates $C$
and $E$ form an achievable rate pair $\left(  C,E\right)  $.

Shor's main result in Ref.~\cite{arx2004shor} is that all achievable rate
pairs lie in the following entanglement-assisted classical capacity region of
$\mathcal{N}$:%
\begin{equation}
\mathcal{C}_{\text{CE}}(\mathcal{N})=\overline{\bigcup_{k=1}^{\infty}\frac
{1}{k}\mathcal{C}_{\text{CE}}^{(1)}(\mathcal{N}^{\otimes k})},
\label{eq:multi-letter-EAC}%
\end{equation}
where the \textquotedblleft one-shot\textquotedblright\ region $\mathcal{C}%
_{\text{CE}}^{(1)}(\mathcal{N})$ is a union of the \textquotedblleft one-shot,
one-state\textquotedblright\ regions $\mathcal{C}_{\text{CE,}\rho}%
^{(1)}(\mathcal{N})$. The one-shot, one-state region $\mathcal{C}%
_{\text{CE,}\rho}^{(1)}(\mathcal{N})$ is the set of all $C,E\geq0$ such that%
\begin{align}
C  &  \leq I(AX;B)_{\rho},\nonumber\\
E  &  \geq H\left(  A|X\right)  _{\rho},\nonumber
\end{align}
where $\rho$ is again a state of the form in (\ref{eq:CEQ-state}).

Hsieh and Wilde later gave a more refined \textquotedblleft
trapezoidal\textquotedblright\ characterization of the one-shot, one-state
region $\mathcal{C}_{\text{CE,}\rho}^{(1)}(\mathcal{N})$ \cite{HW08a}, but the
one-shot regions $\mathcal{C}_{\text{CE}}^{(1)}(\mathcal{N})$\ resulting from
both Shor's \textquotedblleft rectangular\textquotedblright\ characterization
and Hsieh and Wilde's pentagonal characterization are equivalent, and the
characterization above suffices for our purposes here.

For general channels, the multi-letter characterization in
(\ref{eq:multi-letter-EAC}) is necessary, but for certain channels, such as
the erasure channels and generalized dephasing channels, the CE\ capacity
region admits a single-letter characterization as Ref.~\cite{HW08a} stated and
as we show in full detail here for the generalized dephasing channels. In
fact, in Section~\ref{sec:single-letter-EAC-Hadamard}, we show that the
CE\ region for all Hadamard channels, of which a generalized dephasing channel
is an example, admits a single-letter characterization.

\subsection{The Capacity Region for Entanglement-Assisted Transmission of
Classical and Quantum Information}

\label{sec:EACQ-review}The natural generalization of the two scenarios we have
just considered is entanglement-assisted classical and quantum
(CQE)\ communication. The goal of such a protocol is to transmit as much
classical information with vanishing error probability and quantum information
with fidelity approaching unity while consuming as little entanglement as
possible in the limit of a large number of uses of the channel$~\mathcal{N}$.
More precisely, the protocol transmits one classical message from a set of $M$
messages and an arbitrary quantum state of dimension $K$ using a large number
$n$ uses of the quantum channel and a maximally entangled state $\Phi
^{T_{A}T_{B}}$ of dimension $D$. The classical rate of transmission is
$C\equiv\frac{\log\left(  M\right)  }{n}$ bits per channel use, the quantum
rate of transmission is $Q\equiv\frac{\log\left(  K\right)  }{n}$ qubits per
channel use, and the rate of entanglement consumption is $E\equiv\frac
{\log\left(  D\right)  }{n}$ ebits per channel use. If there is a scheme that
transmits classical data, transmits quantum data, and consumes entanglement at
respective rates $C$, $Q$, and $E$ with vanishing error probability and
fidelity approaching unity in the limit of a large number of uses of the
channel, we say that the rates $C$, $Q$, and $E$ form an achievable rate
triple $\left(  C,Q,E\right)  $.

Hsieh and Wilde's main result in Ref.~\cite{HW08a} is that all achievable rate
triples lie in the following CQE capacity region for the channel $\mathcal{N}%
$:%
\[
\mathcal{C}_{\text{CQE}}(\mathcal{N})=\overline{\bigcup_{k=1}^{\infty}\frac
{1}{k}\mathcal{C}_{\text{CQE}}^{(1)}(\mathcal{N}^{\otimes k})},
\]
where the one-shot region $\mathcal{C}_{\text{CQE}}^{(1)}(\mathcal{N})$ is the
union of the \textquotedblleft one-shot, one-state\textquotedblright\ regions
$\mathcal{C}_{\text{CQE,}\rho}^{(1)}(\mathcal{N})$. The \textquotedblleft
one-shot, one-state\textquotedblright\ region $\mathcal{C}_{\text{CQE,}\rho
}^{(1)}(\mathcal{N})$ is the set of all $C,Q,E\geq0$ such that%
\begin{align}
C+2Q  &  \leq I(AX;B)_{\rho},\label{eq:EAC-plane}\\
Q  &  \leq I(A\rangle BX)_{\rho}+E,\\
C+Q  &  \leq I(X;B)_{\rho}+I(A\rangle BX)_{\rho}+E.\label{eq:EACQ-4}
\end{align}

One of the observations in Ref.~\cite{HW08a}\ is that there is a way to
single-letterize the CQE\ capacity region provided one can show that both the
CQ and CE\ trade-off curves single-letterize. We review these arguments
briefly. Suppose that we have shown that both the CQ and CE\ trade-off curves
single-letterize. Then three surfaces specify the boundary of the
CQE\ capacity region so that we can simplify the characterization in
(\ref{eq:EAC-plane}-\ref{eq:EACQ-4}). The first surface to consider is that
formed by combining the CE\ trade-off curve with the \textquotedblleft
inverse\textquotedblright\ of the super-dense coding protocol. Recall that the
super-dense coding protocol exploits a noiseless ebit and a noiseless qubit
channel to transmit two classical bits~\cite{PhysRevLett.69.2881}. Let
$(C_{\text{CE}}\left(  s_1\right)  ,0,E_{\text{CE}}\left(  s_1\right)  )$
denote a parametrization of all points on the CE trade-off curve with respect
to some parameter $s_1\in\left[  0,1/2\right]  $, and recall that each point
on the trade-off curve has corresponding entropic quantities of the form
$\left(  I\left(  AX;B\right)  ,0,H\left(  A|X\right)  \right)  $. Then the
surface formed by combining the CE\ trade-off curve with the inverse of
super-dense coding is%
\[
\left\{
\begin{array}
[c]{c}%
(C_{\text{CE}}\left(  s_1\right)  -2Q,Q,E_{\text{CE}}\left(  s_1\right)
-Q):s_1\in\left[  0,1/2\right]  ,\\
0\leq Q\leq\min\left\{  \frac{1}{2}C_{\text{CE}}\left(  s_1\right)
,E_{\text{CE}}\left(  s_1\right)  \right\}
\end{array}
\right\}  .
\]
This surface forms an outer bound for the capacity region. Were it not so,
then one could combine points outside it with super-dense coding to outperform
points on the CE\ trade-off curve, contradicting the optimality of this
trade-off curve.

The next surface to consider is that formed by combining the CQ\ trade-off
curve with the \textquotedblleft inverse\textquotedblright\ of the
entanglement distribution protocol. Recall that the entanglement distribution
protocol exploits a noiseless qubit channel to establish a shared noiseless
ebit. Let $(C_{\text{CQ}}\left(  s_2\right)  ,Q_{\text{CQ}}\left(  s_2\right)
,0)$ denote a parametrization of all points on the CQ trade-off curve with
respect to some parameter $s_2\in\left[  0,1/2\right]  $, and recall that each
point on the trade-off curve has corresponding entropic quantities of the form
$\left(  I\left(  X;B\right)  ,I\left(  A\rangle BX\right)  ,0\right)  $. Then
the surface formed by combining the CQ trade-off curve with the inverse of
entanglement distribution is%
\[
\{(C_{\text{CQ}}\left(  s_2\right)  ,Q_{\text{CQ}}\left(  s_2\right)
+E,E):s_2\in\left[  0,1/2\right]  ,\ \ E\geq0\}.
\]
This surface also forms an outer bound for the capacity region. Were it not
so, then one could combine points outside it with entanglement distribution to
outperform points on the CQ trade-off curve, contradicting the optimality of
this trade-off curve.

The final surface to consider is the following regularization of the plane
that (\ref{eq:EAC-plane}) specifies:%
\begin{equation}
C+2Q\leq\frac{1}{n}h\left(  \mathcal{N}^{\otimes n}\right)  ,
\label{eq:EA-bound}%
\end{equation}
for all $n\geq1$, where%
\[
h\left(  \mathcal{N}\right)  \equiv\max_{\rho}I\left(  AX;B\right)  ,
\]
and $\rho$ is a state of the form in (\ref{eq:CEQ-state}).
Lemma~\ref{thm:EA-plane-single-letter}\ below states that $h\left(
\mathcal{N}^{\otimes n}\right)  $ actually single-letterizes for any quantum
channel $\mathcal{N}$:%
\[
h\left(  \mathcal{N}^{\otimes n}\right)  =nh\left(  \mathcal{N}\right)  ,
\]
so that the computation of the boundary $h\left(  \mathcal{N}^{\otimes
n}\right)  /n$ is tractable. Its proof is a consequence of the
single-letterization of the entanglement-assisted classical
capacity~\cite{ieee2002bennett}, but we provide it in
Appendix~\ref{sec:proof-EAC-plane} for completeness.

\begin{lemma}
\label{thm:EA-plane-single-letter}The plane in (\ref{eq:EA-bound})\ admits a
single-letter characterization for any noisy quantum channel $\mathcal{N}$:%
\[
h\left(  \mathcal{N}^{\otimes n}\right)  =nh\left(  \mathcal{N}\right)  .
\]

\end{lemma}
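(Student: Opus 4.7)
The plan is to reduce the claimed additivity of $h$ to the additivity of the entanglement-assisted classical capacity $C_{E}(\mathcal{N})\equiv\max_{|\phi\rangle^{AA^{\prime}}}I(A;B)_{\mathcal{N}(\phi)}$, which was established by Bennett, Shor, Smolin, and Thapliyal~\cite{ieee2002bennett}. Concretely, I would first prove the identity $h(\mathcal{N})=C_{E}(\mathcal{N})$ for every channel $\mathcal{N}$, and then invoke $C_{E}(\mathcal{N}^{\otimes n})=n\,C_{E}(\mathcal{N})$ to conclude $h(\mathcal{N}^{\otimes n})=n\,h(\mathcal{N})$.

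For the inequality $h(\mathcal{N})\leq C_{E}(\mathcal{N})$, let $\rho^{XAA^{\prime}}$ be an admissible classical-quantum state of the form in~(\ref{eq:CEQ-state}) and purify the classical register by adjoining an auxiliary system $S$ that coherently copies $X$:
\[
|\psi\rangle^{XSAA^{\prime}}\equiv\sum_{x}\sqrt{p_{X}(x)}\,|x\rangle^{X}|x\rangle^{S}|\phi_{x}\rangle^{AA^{\prime}}.
\]
Tracing out $S$ recovers $\rho^{XAA^{\prime}}$ exactly. Setting $\tilde{A}\equiv XSA$, the state $|\psi\rangle^{\tilde{A}A^{\prime}}$ is pure on $\tilde{A}A^{\prime}$, so applying the isometric extension $U_{\mathcal{N}}^{A^{\prime}\rightarrow BE}$ yields a pure state on $\tilde{A}BE$. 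By the very definition of $C_{E}$, we have $I(\tilde{A};B)_{\psi}\leq C_{E}(\mathcal{N})$. Monotonicity of quantum mutual information under the partial trace over $S$ then gives $I(AX;B)_{\rho}=I(AX;B)_{\psi}\leq I(\tilde{A};B)_{\psi}\leq C_{E}(\mathcal{N})$, and maximizing over admissible $\rho$ produces $h(\mathcal{N})\leq C_{E}(\mathcal{N})$.

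For the reverse inequality, pick the admissible state with a deterministic $X$ (so that $H(X)=0$) and take $|\phi\rangle^{AA^{\prime}}$ to attain $C_{E}(\mathcal{N})$; since the classical register is trivial we get $I(AX;B)_{\rho}=I(A;B)_{\mathcal{N}(\phi)}=C_{E}(\mathcal{N})$, whence $h(\mathcal{N})\geq C_{E}(\mathcal{N})$. Combining both bounds proves $h(\mathcal{N})=C_{E}(\mathcal{N})$, and the additivity of $h$ is immediate from the additivity of $C_{E}$. There is no substantive obstacle — the whole argument is a short reduction — and the only point requiring care is to arrange the purification so that $AX$ lies on the same side of the bipartition as the newly adjoined ancilla $S$, so that the data-processing step cleanly eliminates $S$ and matches the resulting mutual information against the quantity defining $C_{E}$.
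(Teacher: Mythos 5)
Your proof is correct, but it follows a genuinely different route from the one written out in Appendix~\ref{sec:proof-EAC-plane}. The paper proves Lemma~\ref{lem:additivity-EA-plane} directly: starting from $I(AX;B_{1}B_{2})_{\rho}=H(B_{1}B_{2}|E_{1}E_{2}X)_{\rho}+H(B_{1}B_{2})_{\rho}$, it applies strong subadditivity and subadditivity to split the expression into two terms, each recognizable as $I(AA_{2}X;B_{1})_{\theta}$ and $I(AA_{1}X;B_{2})_{\omega}$ for admissible states of the respective single channels, and then concludes by the same induction used in Corollary~\ref{thm:CEQ-single-letter}. You instead establish the identity $h(\mathcal{N})=C_{E}(\mathcal{N})$ --- by coherently copying the classical register into an ancilla $S$ so that $XSA$ purifies the channel input, invoking the definition of $C_{E}$ with an enlarged reference system (legitimate, since $I(A;B)$ depends only on the reduced input state), and using monotonicity under discarding $S$ --- and then import additivity of $C_{E}$ from Ref.~\cite{ieee2002bennett}. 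All steps check out: the purification does recover $\rho^{XAA^{\prime}}$ upon tracing out $S$, and the deterministic-$X$ choice gives the matching lower bound. Your argument is in fact the one the paper alludes to when it remarks that the lemma ``follows from the single-letterization of the entanglement-assisted classical capacity,'' but declines to spell out. What your reduction buys is brevity and the conceptually clarifying observation that the plane $C+2Q\leq h(\mathcal{N})$ is nothing but the entanglement-assisted classical capacity; what the paper's direct proof buys is self-containedness (it does not need the BSST converse), additivity across two \emph{different} channels rather than only tensor powers of one, and a visible structural parallel with the Hadamard-specific additivity proofs of Lemmas~\ref{lem:CEQ-base-case} and~\ref{lem:additivity-EAC}, which follow the same entropic template.
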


The above three surfaces all form outer bounds on the CQE capacity region, but
is it clear that we can achieve points along the boundaries? To answer this
question, we should consider the intersection of the first and second
surfaces, found by solving the following equation for $Q$ and $E$:%
\begin{multline*}
(C_{\text{CE}}\left(  s_1\right)  -2Q,Q,E_{\text{CE}}\left(  s_1\right)  -Q)\\
=(C_{\text{CQ}}\left(  s_2\right)  ,Q_{\text{CQ}}\left(  s_2\right)  +E,E).
\end{multline*}
Using the entropic expressions for the trade-off curves and solving the above
equation gives that all points along the intersection have entropic quantities
of the following form:%
\[
\left(  I\left(  X;B\right)  ,\frac{1}{2}I\left(  A;B|X\right)  ,\frac{1}%
{2}I\left(  A;E|X\right)  \right)  .
\]
Ref.~\cite{HW08a}\ constructed a protocol, dubbed the \textquotedblleft
classically-enhanced father protocol,\textquotedblright\ that can achieve the
above rates for CQE communication. Thus, by combining this protocol with
super-dense coding and entanglement distribution, we can achieve all points
inside the first and second surfaces with entanglement consumption below a
certain rate. Finally, by combining this protocol with super-dense coding,
entanglement distribution, and the wasting of entanglement, we can achieve all
points that lie inside all three surfaces, and thus we can achieve the full
CQE\ capacity region. We summarize these results as the following theorem.

\begin{theorem}
Suppose the CQ and CE\ trade-off curves of a quantum channel $\mathcal{N}$
single-letterize. Then the full CQE\ capacity region of $\mathcal{N}$
single-letterizes:%
\[
\mathcal{C}_{\emph{CQE}}\left(  \mathcal{N}\right)  =\mathcal{C}_{\emph{CQE}%
}^{\left(  1\right)  }\left(  \mathcal{N}\right)  .
\]

\end{theorem}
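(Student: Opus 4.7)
The plan is to combine a converse argument---showing that the three bounding surfaces sketched in the discussion preceding the theorem are outer bounds on $\mathcal{C}_{\text{CQE}}(\mathcal{N})$---with an achievability argument based on the classically-enhanced father protocol of Ref.~\cite{HW08a}, using the single-letter hypotheses on the CQ and CE trade-off curves together with Lemma~\ref{thm:EA-plane-single-letter} to guarantee that every piece of the boundary is single-letter.

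For the converse, I would treat each surface by a contradiction argument that converts a hypothetical beyond-the-boundary CQE protocol into a CQ, CE, or entanglement-assisted classical protocol violating optimality of the respective single-letter curve. A rate triple $(C,Q,E)$ strictly outside the first surface, composed catalytically with super-dense coding, would yield a CE protocol with rates $(C+2Q,\,E+Q)$ that beats the assumed-optimal CE trade-off curve. Symmetrically, a triple strictly outside the second surface, composed with entanglement distribution, would produce a CQ protocol beating the assumed-optimal CQ curve. The third surface $C + 2Q \leq h(\mathcal{N})$ is an outer bound by a direct application of Lemma~\ref{thm:EA-plane-single-letter}.

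For achievability, I would invoke the classically-enhanced father protocol, which for any single-letter state $\rho^{XABE}$ of the form in~(\ref{eq:CEQ-state}) attains the rate triple
\[
\bigl(I(X;B)_{\rho},\ \tfrac{1}{2} I(A;B|X)_{\rho},\ \tfrac{1}{2} I(A;E|X)_{\rho}\bigr).
\]
Solving the intersection equation of the first two surfaces, displayed in the excerpt, yields exactly this triple, so the father protocol lives on the common boundary curve of surfaces~1 and~2. Composing it with super-dense coding, entanglement distribution, and free wasting of entanglement then sweeps out the interior of all three surfaces, matching the converse region and giving $\mathcal{C}_{\text{CQE}}(\mathcal{N}) = \mathcal{C}_{\text{CQE}}^{(1)}(\mathcal{N})$.

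The main obstacle is verifying that the three single-letter parametrizations line up geometrically: the state $\rho$ realizing a given point on the CE curve must agree, through the resource inversions, with the state realizing the corresponding point on the CQ curve, so that a single classical-quantum ensemble simultaneously saturates both two-dimensional trade-offs where the surfaces meet. Since both trade-off curves are optimized over the same class of states~(\ref{eq:CEQ-state}) and since the three inequalities~(\ref{eq:EAC-plane}--\ref{eq:EACQ-4}) are the tangent planes to the one-shot region at a common state, this consistency is forced by the geometry of $\mathcal{C}_{\text{CQE}}^{(1)}(\mathcal{N})$ itself, but it deserves to be stated explicitly to close the loop between the outer and inner bounds.
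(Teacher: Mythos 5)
Your proposal matches the paper's own argument essentially step for step: the paper establishes the same three outer-bounding surfaces by the same contradiction arguments (composition with super-dense coding, with entanglement distribution, and Lemma~\ref{thm:EA-plane-single-letter} for the plane), and achieves the boundary with the classically-enhanced father protocol at the intersection point $\left(I(X;B),\tfrac{1}{2}I(A;B|X),\tfrac{1}{2}I(A;E|X)\right)$ combined with super-dense coding, entanglement distribution, and wasting of entanglement. Your closing remark about the CQ and CE parametrizations agreeing through a common state $\rho$ is a point the paper leaves implicit (it follows from the conditional purity of $ABE$ given $X$, which makes the three corner points of the one-shot polyhedron for a fixed $\rho$ mutually consistent), so it is a reasonable addition rather than a divergence.
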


We apply the above theorem in the next section. We first show that both the
CQ\ and CE trade-off curves single-letterize for all Hadamard channels, and it
then follows that the full CQE\ capacity region single-letterizes for these channels.

\section{Single-Letterization of the CQ and CE\ Trade-off Curves for Hadamard
Channels}

\label{sec:single-letter-CEQ-EAC-Hadamard}

\subsection{CQ\ Trade-off Curve}

\label{sec:single-letter-CEQ-Hadamard}For the CQ\ region, we would like to
maximize both the classical and quantum communication rates, but we cannot
have both be simultaneously optimal. Thus, we must trade between these
resources. If we are willing to reduce the quantum communication rate by a
little, then we can communicate more classical information and vice versa.

Our main theorem below states and proves that the following function generates
points on the CQ trade-off curve for Hadamard channels:%
\begin{equation}
f_{\lambda}\left(  \mathcal{N}\right)  \equiv\max_{\rho}I\left(  X;B\right)
_{\rho}+\lambda I\left(  A\rangle BX\right)  _{\rho},
\label{eq:CEQ-max-function}%
\end{equation}
where the state $\rho$ is of the form in (\ref{eq:CEQ-state}) and $\lambda
\geq1$.

\begin{theorem}
\label{thm:CQ-single-letter}For any fixed $\lambda\geq1$, the function in
(\ref{eq:CEQ-max-function}) leads to a point%
\[
(I(X;B)_{\rho},I\left(  A\rangle BX\right)  _{\rho})
\]
on the CQ trade-off curve, provided $\rho$ maximizes (\ref{eq:CEQ-max-function}).
\end{theorem}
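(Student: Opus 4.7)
The plan is to prove the statement by establishing
\begin{equation*}
f_\lambda(\mathcal{N}^{\otimes n}) \leq n f_\lambda(\mathcal{N}) \qquad \forall\, n \geq 1,
\end{equation*}
which shows that the multi-letter regularization in (\ref{eq:multi-letter-CEQ}) collapses to its single-letter version and thereby places the point $(I(X;B)_\rho, I(A\rangle BX)_\rho)$ on the boundary of $\mathcal{C}_{\text{CQ}}(\mathcal{N})$. Restricting attention to $\lambda \geq 1$ is without loss of generality: because the sender can always use a quantum channel as a classical bit channel by encoding in a distinguishable basis, the CQ region is closed under converting a qubit rate into a classical bit rate one-for-one, so every supporting hyperplane of its Pareto boundary takes the form $C + \lambda Q = \text{const}$ with $\lambda \geq 1$.

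Fix $n$ and a state $\rho^{X A^n A'^n}$, let $\rho^{X A^n B^n E^n}$ be its image under $U_{\mathcal{N}}^{\otimes n}$, and introduce the classical registers $Y_i$ by applying the measurement $\mathcal{D}_1$ from the Hadamard factorization to each $B_i$. I will rewrite the objective as
\begin{equation*}
I(X;B^n) + \lambda I(A^n\rangle B^n X) = H(B^n) + (\lambda-1) H(B^n|X) - \lambda H(E^n|X)
\end{equation*}
and bound the three entropies separately. Subadditivity gives $H(B^n) \leq \sum_i H(B_i)$. The chain rule together with data processing under the classical coarse-graining $B_1^{i-1}\to Y_1^{i-1}$ gives $H(B^n|X) \leq \sum_i H(B_i | X Y_1^{i-1})$. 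The main and nontrivial step is the matching lower bound
\begin{equation*}
H(E^n|X) \geq \sum_i H(E_i | X Y_1^{i-1}).
\end{equation*}

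This third bound is where the Hadamard structure is indispensable. Projecting $B_i$ onto $|y_i\rangle^{B_i}$ in the isometric form (\ref{eq:iso-Hadamard}) leaves $E_i$ in the pure state $|\xi_{y_i}\rangle^{E_i}$, whence $H(E_i|XY_1^i)=0$ and in particular $H(E^n|XY^n)=0$. I will therefore write $H(E^n|X) = I(Y^n;E^n|X)$, expand the right-hand side by the chain rule as $\sum_i I(Y_i;E^n|X Y_1^{i-1})$, discard all of $E^n$ except $E_i$ in each summand by data processing, and then use $H(E_i|X Y_1^i)=0$ to identify the surviving term with $H(E_i|X Y_1^{i-1})$.

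Combining the three bounds with the nonnegative coefficients $1$, $\lambda-1$, and $\lambda$ (this is where $\lambda\geq 1$ enters), the right-hand side reorganizes into
\begin{equation*}
\sum_i \bigl[ I(X_i;B_i) + \lambda I(A_i\rangle B_i X_i) \bigr], \qquad X_i \equiv X Y_1^{i-1},
\end{equation*}
upon introducing for each $i$ the single-letter state on $(X_i, A_i, A_i', B_i, E_i)$ obtained by purifying the conditional marginal of $A_i'$ given $X_i$, for which $H(A_i B_i | X_i) = H(E_i | X_i)$ holds automatically. This single-letter state is of the form (\ref{eq:CEQ-state}), so each summand is at most $f_\lambda(\mathcal{N})$, completing the required inequality. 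The hardest step will be the lower bound on $H(E^n|X)$: the chain-rule decomposition via the classical register $Y$ fails for general degradable channels and depends crucially on the factorization $\mathcal{D} = \mathcal{D}_2 \circ \mathcal{D}_1$ through a classical variable, which is exactly what defines the Hadamard class.
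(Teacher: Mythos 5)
Your proposal is correct, and its skeleton matches the paper's: both arguments hinge on the factorization $\mathcal{D}^{B\rightarrow E}=\mathcal{D}_{2}^{Y\rightarrow E}\circ\mathcal{D}_{1}^{B\rightarrow Y}$ of the Hadamard degrading map through a classical variable, and both reduce the theorem to additivity of $f_{\lambda}$ plus a supporting-hyperplane observation. The differences are real but modest. First, the paper proves a two-channel lemma $f_{\lambda}(\mathcal{N}_{1}\otimes\mathcal{N}_{2})\leq f_{\lambda}(\mathcal{N}_{1})+f_{\lambda}(\mathcal{N}_{2})$ with only $\mathcal{N}_{1}$ Hadamard and $\mathcal{N}_{2}$ arbitrary (Lemma~\ref{lem:CEQ-base-case}) and then inducts (Corollary~\ref{thm:CEQ-single-letter}); you unroll this into a direct $n$-letter telescoping, which invokes the Hadamard structure of every letter---equally valid here, slightly less general as a lemma. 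Second, your treatment of the $B$-entropies (data processing of the conditioning systems through $\mathcal{D}_{1}$) is identical to the paper's, but for the environment term the paper uses monotonicity of $H(E_{2}|E_{1}X)$ under the preparation map $\mathcal{D}_{2}$, whereas you use the identity $H(E^{n}|X)=I(Y^{n};E^{n}|X)$, which rests on $H(E^{n}|XY^{n})=0$, i.e.\ on the fact that projecting $B_{i}$ in the basis of (\ref{eq:iso-Hadamard}) collapses $E_{i}$ to the pure state $\left\vert \xi_{y_{i}}\right\rangle$. These are two faces of the same structural fact---the purity of $E$ given $Y$ is exactly what makes $\mathcal{D}_{2}$ exist---but your version makes the role of the classical register more transparent and avoids one application of monotonicity. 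Third, for the restriction to $\lambda\geq1$ you give the operational argument (a qubit rate can always be converted one-for-one into a classical bit rate), which is the Devetak--Shor justification; the paper instead supplies an entropic proof (Lemma~\ref{lem:max-CQ-lambda}) and notes it as an alternative to that operational reasoning. One small caveat you share with the theorem statement itself: the supporting-hyperplane step silently assumes $\lambda$ sweeps out the whole boundary, which fails on linear segments of the trade-off curve; the paper quarantines this issue in an appendix by taking a convex hull, and your argument would need the same footnote. None of this is a gap in substance.
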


\begin{proof}
This theorem follows from the results of Lemmas~\ref{claim:Lmult},
\ref{lem:max-CQ-lambda}, and \ref{lem:CEQ-base-case} and
Corollary~\ref{thm:CEQ-single-letter} below.
\end{proof}

\begin{lemma}
\label{claim:Lmult}For any fixed $\lambda\geq0$, the function in
(\ref{eq:CEQ-max-function}) leads to a point $(I(X;B)_{\rho},I\left(  A\rangle
BX\right)  _{\rho})$ on the one-shot CQ trade-off curve in the sense of Theorem~\ref{thm:CQ-single-letter}.
\end{lemma}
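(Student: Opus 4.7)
The plan is to establish Lemma~\ref{claim:Lmult} by a standard support-function (Lagrangian) argument: the value $f_\lambda(\mathcal{N})$ equals the maximum of $C+\lambda Q$ over the one-shot region $\mathcal{C}_{\text{CQ}}^{(1)}(\mathcal{N})$, and since $\lambda\geq 0$, any maximizer must sit on the upper-right Pareto boundary, which is precisely the one-shot CQ trade-off curve.

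First I would establish the identity
\[
f_\lambda(\mathcal{N}) \;=\; \max\bigl\{\,C+\lambda Q \,:\, (C,Q)\in\mathcal{C}_{\text{CQ}}^{(1)}(\mathcal{N})\,\bigr\}.
\]
The inequality $\leq$ is immediate: for each admissible state $\rho$ of the form in (\ref{eq:CEQ-state}), the corner $(I(X;B)_\rho,\,I(A\rangle BX)_\rho)$ lies in $\mathcal{C}_{\text{CQ}}^{(1)}(\mathcal{N})$ by definition of the one-shot, one-state rectangle. The reverse inequality follows because every rate pair in the one-shot region is dominated componentwise by some such corner, and for $\lambda\geq 0$ this domination is preserved under the weighted sum.

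Second, I would show that any maximizer $\rho^*$ of $f_\lambda$ produces a Pareto-optimal pair $(C^*,Q^*)\equiv(I(X;B)_{\rho^*},\,I(A\rangle BX)_{\rho^*})$. If some $(C',Q')\in\mathcal{C}_{\text{CQ}}^{(1)}(\mathcal{N})$ strictly dominated $(C^*,Q^*)$, then any state $\rho'$ witnessing $(C',Q')$ would satisfy $I(X;B)_{\rho'}+\lambda I(A\rangle BX)_{\rho'}\geq C'+\lambda Q' > C^*+\lambda Q^*$ for $\lambda>0$, contradicting the optimality of $\rho^*$. The boundary case $\lambda=0$ is handled separately by observing that the maximizer then pins down the classical endpoint of the curve (the HSW capacity), with the associated coherent information supplying the paired quantum coordinate.

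The main obstacle, such as it is, is merely bookkeeping: I should verify that $\mathcal{C}_{\text{CQ}}^{(1)}(\mathcal{N})$ is convex so that its Pareto boundary is genuinely swept out by the supporting hyperplanes $C+\lambda Q=\text{const}$ as $\lambda$ ranges over $[0,\infty)$. Convexity I would obtain by introducing an auxiliary classical mixing label $W$ with distribution $(p,1-p)$ and extending the classical register $X\mapsto XW$: the chain-rule bound $I(XW;B)\geq p\,I(X;B)_0+(1-p)\,I(X;B)_1$ together with the exact linearity $I(A\rangle B(XW))=p\,I(A\rangle BX)_0+(1-p)\,I(A\rangle BX)_1$, which holds because the conditioning register is classical, realizes every convex combination of achievable rate pairs. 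No multi-letter machinery enters at this stage, since the statement concerns only the one-shot region.
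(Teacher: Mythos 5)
Your core argument is the same as the paper's: a state maximizing $I(X;B)_{\rho}+\lambda I(A\rangle BX)_{\rho}$ cannot be dominated by another admissible state, since domination would strictly increase the objective for $\lambda\geq 0$ --- the paper phrases this as a one-line contradiction with $Q$ held fixed and $C$ improved. The extra support-function and convexity machinery (the auxiliary register $W$) is fine but addresses the separate question of whether varying $\lambda$ sweeps out the entire curve, which the lemma does not claim and which the paper defers to Appendix~\ref{sec:continuity-trade-off-curve}.
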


\begin{proof}
We argue by contradiction. Suppose that a particular state $\rho^{XAB}$ of the
form in (\ref{eq:CEQ-state}) maximizes (\ref{eq:CEQ-max-function}).\ Suppose
further that it does not lead to a point on the trade-off curve. That is,
given the constraint $Q=I(A\rangle BX)_{\rho}$, there is some other state
$\sigma^{XAB}$ of the form in (\ref{eq:CEQ-state}) such that $I(A\rangle
BX)_{\sigma}=I(A\rangle BX)_{\rho}=Q$ where $C=I(X;B)_{\sigma}>I(X;B)_{\rho}$.
But this result implies the following inequality for all $\lambda\geq0$:%
\[
I(X;B)_{\sigma}+\lambda I(A\rangle BX)_{\sigma}>I(X;B)_{\rho}+\lambda
I(A\rangle BX)_{\rho},
\]
contradicting the fact that the state $\rho^{XAB}$ maximizes
(\ref{eq:CEQ-max-function}).
\end{proof}

\begin{lemma}
\label{lem:max-CQ-lambda}We obtain all points on the one-shot CQ trade-off
curve by considering $\lambda\geq1$ in the maximization in
(\ref{eq:CEQ-max-function}) because the maximization optimizes only the
classical capacity for all $\lambda$ such that $0\leq\lambda<1$.
\end{lemma}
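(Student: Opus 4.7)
The plan is to show that for $0\le\lambda<1$ the supremum in (\ref{eq:CEQ-max-function}) collapses to the one-shot Holevo capacity $\chi(\mathcal{N})\equiv\max_{\rho}I(X;B)_{\rho}$ and is attained by any HSW-optimal pure-state ensemble, for which the coherent information $I(A\rangle BX)_{\rho}$ vanishes. Consequently the only trade-off point produced for such $\lambda$ is the classical-capacity corner $(\chi(\mathcal{N}),0)$, and the nontrivial part of the CQ trade-off curve is traced out entirely by $\lambda\ge 1$.

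The crux is the single-state inequality
\begin{equation}
I(X;B)_{\rho}+I(A\rangle BX)_{\rho}\;\le\;\chi(\mathcal{N})\label{eq:plan-key}
\end{equation}
for every state $\rho^{XABE}$ of the form in (\ref{eq:CEQ-state}). I would prove (\ref{eq:plan-key}) by \emph{refining the classical register}: decompose each reduced state $\rho_{x}^{A^{\prime}}\equiv\mathrm{Tr}_{A}\{\phi_{x}^{AA^{\prime}}\}$ into pure components $\rho_{x}^{A^{\prime}}=\sum_{y}p(y|x)|\psi_{x,y}\rangle\!\langle\psi_{x,y}|^{A^{\prime}}$, and form the augmented classical-quantum state $\sigma^{XYB}$ with classical label $(X,Y)$ and conditional pure inputs $|\psi_{x,y}\rangle^{A^{\prime}}$ sent through $\mathcal{N}$. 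Since the purifying $A$ register can be taken trivial for $\sigma$, $I(XY;B)_{\sigma}$ is a bona fide Holevo quantity and $I(XY;B)_{\sigma}\le\chi(\mathcal{N})$. It therefore suffices to show $I(X;B)_{\rho}+I(A\rangle BX)_{\rho}\le I(XY;B)_{\sigma}$. Purity of $|\phi_{x}\rangle^{ABE}$ gives $H(AB|X)_{\rho}=H(E|X)_{\rho}=\sum_{x}p(x)H(\mathcal{N}^{c}(\rho_{x}^{A^{\prime}}))$, and purity of each $|\psi_{x,y}\rangle^{A^{\prime}}$ gives $H(B|XY)_{\sigma}=\sum_{x,y}p(x,y)H(\mathcal{N}^{c}(|\psi_{x,y}\rangle\!\langle\psi_{x,y}|))$. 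The $B$ marginal is unchanged by the refinement ($\rho^{B}=\sigma^{B}=\sum_{x}p(x)\mathcal{N}(\rho_{x}^{A^{\prime}})$), so the desired inequality collapses to $H(B|XY)_{\sigma}\le H(AB|X)_{\rho}$, i.e.\ to
\begin{equation*}
\sum_{x}p(x)\sum_{y}p(y|x)H(\mathcal{N}^{c}(|\psi_{x,y}\rangle\!\langle\psi_{x,y}|))\le\sum_{x}p(x)H(\mathcal{N}^{c}(\rho_{x}^{A^{\prime}})),
\end{equation*}
which is immediate from concavity of the von Neumann entropy applied to the complementary-channel outputs.

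With (\ref{eq:plan-key}) in hand, the lemma follows by writing, for $\lambda\in[0,1]$,
\begin{equation*}
I(X;B)_{\rho}+\lambda I(A\rangle BX)_{\rho}=(1-\lambda)\,I(X;B)_{\rho}+\lambda\,[I(X;B)_{\rho}+I(A\rangle BX)_{\rho}]\le(1-\lambda)\chi(\mathcal{N})+\lambda\chi(\mathcal{N})=\chi(\mathcal{N}),
\end{equation*}
using $I(X;B)_{\rho}\le\chi(\mathcal{N})$ together with (\ref{eq:plan-key}). The upper bound is saturated by any Holevo-optimal pure-state ensemble $\{(p_{X}(x),|\psi_{x}\rangle^{A^{\prime}})\}$: the purifying $A$ register is trivial, so $I(A\rangle BX)_{\rho}=0$ and $I(X;B)_{\rho}=\chi(\mathcal{N})$, giving $f_{\lambda}(\mathcal{N})=\chi(\mathcal{N})$ for every $\lambda\in[0,1)$. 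I expect the main obstacle to be the clean bookkeeping with the two purifications $|\phi_{x}\rangle^{AA^{\prime}}$ (of $\rho_{x}^{A^{\prime}}$) and $|\phi_{x}\rangle^{ABE}$ (under $U_{\mathcal{N}}$) needed to rewrite both sides of (\ref{eq:plan-key}) as entropies of complementary-channel outputs; once that is done, concavity of entropy closes the argument in a single line.
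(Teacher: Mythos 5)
Your proof is correct and rests on the same mechanism as the paper's: replacing the mixed conditional inputs $\rho_{x}^{A^{\prime}}$ by a pure-state refinement indexed by an auxiliary classical variable $Y$ is exactly what the paper accomplishes by performing a von Neumann measurement on the purifying system $A$ (yielding its state $\sigma^{XYBE}$), and your concavity step $\sum_{y}p(y|x)H(\mathcal{N}^{c}(\psi_{x,y}))\leq H(\mathcal{N}^{c}(\rho_{x}^{A^{\prime}}))$ is the paper's ``conditioning reduces entropy'' inequality $H(E|XY)_{\sigma}\leq H(E|X)_{\sigma}$ in disguise. The bookkeeping differs: the paper bounds $H(B)+(\lambda-1)H(B|X)-\lambda H(E|X)$ in one pass, using the sign of $\lambda-1$ together with the purity of $BE$ given $XY$ to collapse everything to $I(XY;B)_{\sigma}$, whereas you isolate the single endpoint inequality $I(X;B)_{\rho}+I(A\rangle BX)_{\rho}\leq\chi(\mathcal{N})$ and then interpolate via the convex combination $(1-\lambda)I(X;B)+\lambda\left[I(X;B)+I(A\rangle BX)\right]$. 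Your route buys a slightly sharper conclusion: by exhibiting a Holevo-optimal pure-state ensemble (with trivial $A$, hence vanishing coherent information) as a maximizer, you obtain the exact equality $f_{\lambda}(\mathcal{N})=\chi(\mathcal{N})$ for every $\lambda\in[0,1)$ and identify the unique resulting trade-off point as the classical-capacity corner $(\chi(\mathcal{N}),0)$, while the paper records only the upper bound $f_{\lambda}(\mathcal{N})\leq\max_{\rho}I(X;B)_{\rho}$ before concluding that the curve starts at $\lambda\geq1$.
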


\begin{proof}
Consider a state $\rho^{XABE}$ of the form in (\ref{eq:CEQ-state}). Suppose
that we perform a von Neumann measurement of the system $A$, resulting in a
classical variable$~Y$, and let $\sigma^{XYBE}$ denote the resulting state.
Then the following chain of inequalities holds for all $\lambda$ such that
$0\leq\lambda<1$:%
\begin{align*}
&  I(X;B)_{\rho}+\lambda I(A\rangle BX)_{\rho}\\
&  =H(B)_{\rho}+(\lambda-1)H(B|X)_{\rho}-\lambda H(E|X)_{\rho}\\
&  =H(B)_{\sigma}+(\lambda-1)H(B|X)_{\sigma}-\lambda H(E|X)_{\sigma}\\
&  \leq H(B)_{\sigma}+(\lambda-1)H(B|XY)_{\sigma}-\lambda H(E|XY)_{\sigma}.\\
&  =H(B)_{\sigma}+(\lambda-1)H(B|XY)_{\sigma}-\lambda H(B|XY)_{\sigma}\\
&  =H\left(  B\right)  _{\sigma}-H\left(  B|XY\right)  _{\sigma}\\
&  =I\left(  XY;B\right)  _{\sigma}.
\end{align*}
The first equality follows from definitions and because the state $\rho
^{XABE}$\ on systems $A$, $B$, and $E$ is pure when conditioned on the
classical variable $X$. The second equality follows because the von Neumann
measurement does not affect the systems involved in the entropic expressions.
The inequality follows because $0\leq\lambda<1$ and conditioning reduces
entropy. The third equality follows because the reduced state of
$\sigma^{XYBE}$ on systems $B$ and $E$ is pure when conditioned on both$~X$
and$~Y$. The last two equalities follow from algebra and the definition of the
quantum mutual information.

Thus, it becomes clear that the maximization of the original quantity for
$0\leq\lambda<1$ is always less than the classical capacity because$~I\left(
XY;B\right)  _{\sigma}\leq\max_{\rho}I\left(  X;B\right)  $. It then follows
that the trade-off curve really starts when $\lambda\geq1$.
\end{proof}

We remark that the above proof gives an alternate mathemathical justification for
considering only $\lambda\geq1$ in the maximization than does the original
operational reason given in
Ref.~\cite{cmp2005dev}.

The following lemma is the crucial one that leads to our main result in this
section:\ the single-letterization of the CQ trade-off curve for Hadamard channels.

\begin{lemma}
\label{lem:CEQ-base-case}The following additivity relation holds for a
Hadamard channel $\mathcal{N}_{1}$ and any other channel $\mathcal{N}_{2}$:%
\[
f_{\lambda}(\mathcal{N}_{1}\otimes\mathcal{N}_{2})=f_{\lambda}(\mathcal{N}%
_{1})+f_{\lambda}(\mathcal{N}_{2}).
\]

\end{lemma}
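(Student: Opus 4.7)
The plan is to prove the two inclusions separately. The inequality $f_\lambda(\mathcal{N}_1 \otimes \mathcal{N}_2) \geq f_\lambda(\mathcal{N}_1) + f_\lambda(\mathcal{N}_2)$ is immediate: tensor the optimizers for each channel and take the product of their classical registers; both $I(X;B)$ and $I(A\rangle BX)$ are additive under such tensor products. The substantive direction is the upper bound, which is where the Hadamard hypothesis on $\mathcal{N}_1$ is essential, and on which I focus, in the case $\lambda \geq 1$ relevant to Theorem~\ref{thm:CQ-single-letter}.

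For the upper bound, I would start from an arbitrary CEQ-type state $\rho^{XAB_1B_2E_1E_2}$ produced by $U_{\mathcal{N}_1} \otimes U_{\mathcal{N}_2}$. Using purity on $AB_1B_2E_1E_2$ conditional on $X$, I rewrite
\[
I(X;B_1B_2)_\rho + \lambda I(A\rangle B_1B_2 X)_\rho = H(B_1B_2) + (\lambda-1)H(B_1B_2|X) - \lambda H(E_1E_2|X).
\]
Subadditivity $H(B_1B_2) \leq H(B_1) + H(B_2)$ and the chain rule then bound this by
\[
\bigl[H(B_1) + (\lambda-1)H(B_1|X) - \lambda H(E_1|X)\bigr] + \bigl[H(B_2) + (\lambda-1)H(B_2|B_1X) - \lambda H(E_2|E_1X)\bigr].
\]
The first bracket is already the $f_\lambda$-objective for $\mathcal{N}_1$ on the state over $XA_1B_1E_1$ with $A_1 \equiv AA'_2$ purifying the $\mathcal{N}_1$-input, hence bounded by $f_\lambda(\mathcal{N}_1)$.

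The crux is the second bracket, where I would invoke the Hadamard factorization $\mathcal{D}^{B_1 \to E_1} = \mathcal{D}_2^{Y_1 \to E_1} \circ \mathcal{D}_1^{B_1 \to Y_1}$, with $\mathcal{D}_1$ the von Neumann measurement in the basis $\{|i\rangle^{B_1}\}$ and $\mathcal{D}_2$ a classical-to-quantum preparation. Applying $\mathcal{D}_1$ to $\rho$ produces an extended state $\tau^{XAY_1B_2E_1E_2}$ with classical $Y_1$. Two data-processing inequalities apply (at each fixed value of $X$, then averaged): since $\tau^{B_2Y_1}_x = \mathcal{D}_1(\rho^{B_2B_1}_x)$, we get $H(B_2|Y_1X)_\tau \geq H(B_2|B_1X)_\rho$; and since the Hadamard identity $\mathcal{N}_1^c = \mathcal{D}_2 \circ \mathcal{D}_1 \circ \mathcal{N}_1$ gives $\rho^{E_1E_2}_x = \mathcal{D}_2(\tau^{Y_1E_2}_x)$, we get $H(E_2|E_1X)_\rho \geq H(E_2|Y_1X)_\tau$. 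For $\lambda \geq 1$ both push the cross term the same way:
\[
(\lambda-1)H(B_2|B_1X)_\rho - \lambda H(E_2|E_1X)_\rho \leq (\lambda-1)H(B_2|XY_1)_\tau - \lambda H(E_2|XY_1)_\tau.
\]

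Finally, I would package the right-hand side with the unchanged $H(B_2)$ and identify $X' \equiv (X, Y_1)$ as the combined classical label. For this to yield a bound by $f_\lambda(\mathcal{N}_2)$, the state $\tau^{X'AB_2E_2}$ must itself be of CEQ form, i.e.\ pure on $AB_2E_2$ conditional on $X'$. This is where I would use the explicit Hadamard isometry in (\ref{eq:iso-Hadamard}): writing the joint state through $U_{\mathcal{N}_1}|\phi_x\rangle^{AA'_1A'_2} = \sum_i |\chi_{x,i}\rangle^{AA'_2}|i\rangle^{B_1}|\xi_i\rangle^{E_1}$ and then applying $U_{\mathcal{N}_2}$, the projective measurement of $B_1$ forces a product structure $|\tilde\xi_y\rangle^{E_1} \otimes |\Xi_{x,y}\rangle^{AB_2E_2}$ after the outcome $y$, so $E_1$ decouples and $AB_2E_2$ is pure conditional on $(x,y)$. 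I expect this packaging step to be the main obstacle: the two data-processing moves are routine, and it is the \emph{classical} nature of the intermediate register $Y_1$ in the Hadamard factorization, together with the resulting conditional purity on $AB_2E_2$, that allows the extended label $X' = (X, Y_1)$ to index a legitimate CEQ input for $\mathcal{N}_2$ and thereby closes the additivity argument.
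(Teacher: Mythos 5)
Your proof is correct and follows essentially the same route as the paper's: the trivial superadditivity by tensoring optimizers, then the rewriting via conditional purity, subadditivity plus the chain rule, and two monotonicity-of-conditional-entropy steps that exploit the measure-and-prepare factorization $\mathcal{D}^{B_1\rightarrow E_1}=\mathcal{D}_{2}\circ\mathcal{D}_{1}$ of the Hadamard degrading map. The only (immaterial) difference is in the final packaging: the paper keeps $E_1$ inside the reference system, bounding the second bracket by $I(XY;B_2)_{\sigma}+\lambda I(AE_1\rangle B_2XY)_{\sigma}$ using only the conditional purity of $AB_2E_1E_2$ given $XY$, whereas you prove the slightly stronger fact that the rank-one measurement decouples $E_1$ into a product state so that $A$ alone can serve as the reference --- both yield a legitimate CEQ-form state for $\mathcal{N}_2$ and close the argument.
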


\begin{proof}
The inequality $f_{\lambda}(\mathcal{N}_{1}\otimes\mathcal{N}_{2})\geq
f_{\lambda}(\mathcal{N}_{1})+f_{\lambda}(\mathcal{N}_{2})$ trivially holds for
all quantum channels, because the maximization on the RHS\ is a restriction of
the maximization in the LHS\ to a tensor product of states of the form in
(\ref{eq:CEQ-state}). Therefore, we prove the non-trivial inequality
$f_{\lambda}(\mathcal{N}_{1}\otimes\mathcal{N}_{2})\leq f_{\lambda
}(\mathcal{N}_{1})+f_{\lambda}(\mathcal{N}_{2})$ when $\mathcal{N}_{1}$ is a
Hadamard channel.

Suppose that $\mathcal{N}_{1}^{A_{1}\rightarrow B_{1}}$ is a Hadamard channel
and $\mathcal{N}_{2}^{A_{2}\rightarrow B_{2}}$ is any quantum channel with
respective complementary channels $\left(  \mathcal{N}_{1}^{c}\right)
^{A_{1}\rightarrow E_{1}}$ and $\left(  \mathcal{N}_{2}^{c}\right)
^{A_{2}\rightarrow E_{2}}$. Then $\mathcal{N}_{1}^{c}$ is
entanglement-breaking because $\mathcal{N}_{1}$ is a Hadamard channel.
Therefore, there exists a degrading map $\mathcal{D}^{B_{1}\rightarrow E_{1}}$
because $\mathcal{N}_{1}^{c}$ is entanglement-breaking. The degrading map
consists of a measurement with classical output on system $Y$ followed by the
preparation of a state on $E_{1}$ depending on the measurement outcome (as
described in Section~\ref{sec:Hadamard-channel}). We can therefore write
$\mathcal{D}^{B_{1}\rightarrow E_{1}}=\mathcal{D}_{2}^{Y\rightarrow E_{1}%
}\circ\mathcal{D}_{1}^{B_{1}\rightarrow Y}$.

Let
\begin{align}
\psi^{XAA_{1}A_{2}} &  \equiv\sum_{x}p_{X}\left(  x\right)  \proj{x}^{X}%
\ox\proj{\phi_x}^{AA_{1}A_{2}},\nonumber\\
\theta^{XAB_{1}E_{1}A_{2}} &  \equiv U_{1}\psi U_{1}^{\dag},\nonumber\\
\rho^{XAB_{1}E_{1}B_{2}E_{2}} &  \equiv(U_{1}\otimes U_{2})\psi(U_{1}\otimes
U_{2})^{\dag},\label{eq:double-CEQ-state}%
\end{align}
where $U_{j}{}^{A_{j}\rightarrow B_{j}E_{j}}$ is the isometric extension of
$\mathcal{N}_{j}$. Suppose further that $\rho$ is the state that maximizes
$f_{\lambda}(\mathcal{N}_{1}\otimes\mathcal{N}_{2})$. Also, let $\sigma
\equiv\mathcal{D}_{1}(\rho)$ and note that this state is of the following
form:%
\begin{multline*}
\sigma^{XYAB_{2}E_{1}E_{2}}\equiv\\
\sum_{x}p_{X}\left(  x\right)  p_{Y|X}\left(  y|x\right)  \proj{x}^{X}%
\otimes\proj{y}^{Y}\otimes\proj{\phi_{x,y}}^{AB_{2}E_{1}E_{2}}.
\end{multline*}

Then the following inequalities hold for all $\lambda\geq1$:%
\begin{widetext}%
\begin{align*}
f_{\lambda}(\mathcal{N}_{1}\otimes\mathcal{N}_{2}) &  =I(X;B_{1}B_{2})_{\rho
}+\lambda I(A\rangle B_{1}B_{2}X)_{\rho}\\
&  =H(B_{1}B_{2})_{\rho}+(\lambda-1)H(B_{1}B_{2}|X)_{\rho}-\lambda
H(E_{1}E_{2}|X)_{\rho}\\
&  \leq H(B_{1})_{\rho}+(\lambda-1)H(B_{1}|X)_{\rho}-\lambda H(E_{1}|X)_{\rho
}+H(B_{2})_{\rho}+(\lambda-1)H(B_{2}|B_{1}X)_{\rho}-\lambda H(E_{2}%
|E_{1}X)_{\rho}\\
&  \leq H(B_{1})_{\rho}+(\lambda-1)H(B_{1}|X)_{\rho}-\lambda H(E_{1}|X)_{\rho
}+H(B_{2})_{\sigma}+(\lambda-1)H(B_{2}|YX)_{\sigma}-\lambda H(E_{2}%
|YX)_{\sigma}\\
&  =\left[  I(X;B_{1})_{\theta}+\lambda I(AA_{2}\rangle B_{1}X)_{\theta
}\right]  +\left[  I(XY;B_{2})_{\sigma}+\lambda I(AE_{1}\rangle B_{2}%
XY)_{\sigma}\right]  \\
&  \leq f_{\lambda}(\mathcal{N}_{1})+f_{\lambda}(\mathcal{N}_{2}).
\end{align*}%
\end{widetext}%
The first equality follows by definition and the assumption that $\rho$
maximizes $f_{\lambda}(\mathcal{N}_{1}\otimes\mathcal{N}_{2})$. The second
equality follows from entropic manipulations and the fact that $H(AB_{1}%
B_{2}|X)=H(E_{1}E_{2}|X)$ for the state $\rho$. The first inequality follows
from subadditivity of entropy and the chain rule \cite{book2000mikeandike}.
The second inequality uses two applications of the monotonicity of conditional
entropy with respect to quantum channels acting on the conditioned system.
Specifically, $H(B_{2}|B_{1}X)_{\rho}\leq H(B_{2}|YX)_{\sigma}$ because of the
existence of the map $\mathcal{D}_{1}$ while $H(E_{2}|YX)_{\sigma}\leq
H(E_{2}|E_{1}X)_{\rho}$ because of the existence of the map $\mathcal{D}_{2}$.
The third equality follows because $H(E_{1}|X)_{\rho}=H(AA_{2}B_{1}%
|X)_{\theta}$ and $H(E_{2}|YX)_{\rho}=H(AE_{1}B_{2}|YX)_{\sigma}$. The final
inequality follows because $\theta$ and $\sigma$ are both states of the form
in (\ref{eq:CEQ-state}).
\end{proof}

\begin{corollary}
\label{thm:CEQ-single-letter}The one-shot CQ trade-off curve is equal to the
regularized CQ\ trade-off curve when the noisy quantum channel $\mathcal{N}$
is a Hadamard channel:%
\[
f_{\lambda}\left(  \mathcal{N}^{\otimes n}\right)  =nf_{\lambda}\left(
\mathcal{N}\right)  .
\]

\end{corollary}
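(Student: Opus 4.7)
The plan is to prove this by straightforward induction on $n$, with all of the real work already having been done in Lemma~\ref{lem:CEQ-base-case}. The base case $n=1$ is the identity $f_\lambda(\mathcal{N}) = f_\lambda(\mathcal{N})$, which holds tautologically.

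For the inductive step, I would assume that $f_\lambda(\mathcal{N}^{\otimes(n-1)}) = (n-1) f_\lambda(\mathcal{N})$ and then decompose $\mathcal{N}^{\otimes n} = \mathcal{N} \otimes \mathcal{N}^{\otimes(n-1)}$. Since $\mathcal{N}$ is a Hadamard channel by assumption, Lemma~\ref{lem:CEQ-base-case} applies with $\mathcal{N}_1 = \mathcal{N}$ (Hadamard) and $\mathcal{N}_2 = \mathcal{N}^{\otimes(n-1)}$ (which need not be anything special, since the lemma only requires the first channel to be Hadamard). This gives
\[
f_\lambda(\mathcal{N}^{\otimes n}) = f_\lambda(\mathcal{N}) + f_\lambda(\mathcal{N}^{\otimes(n-1)}) = f_\lambda(\mathcal{N}) + (n-1)f_\lambda(\mathcal{N}) = n f_\lambda(\mathcal{N}),
\]
completing the induction.

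There is no real obstacle in this argument, since the essential additivity content sits inside Lemma~\ref{lem:CEQ-base-case}. The only point worth flagging is that one must use the asymmetric form of the lemma, which requires only the \emph{first} tensor factor to be Hadamard, so that the argument goes through even though $\mathcal{N}^{\otimes(n-1)}$ is not itself being treated as Hadamard in the decomposition; symmetrically, one could peel off a single copy from the other side with the same result. Combined with Theorem~\ref{thm:CQ-single-letter} and Lemmas~\ref{claim:Lmult} and \ref{lem:max-CQ-lambda}, this corollary establishes that the full regularized CQ trade-off curve coincides with the one-shot curve, which is the single-letterization claim advertised for Hadamard channels.
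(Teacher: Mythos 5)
Your proposal is correct and matches the paper's own proof essentially verbatim: both proceed by induction on $n$, peeling off one copy of $\mathcal{N}$ and invoking Lemma~\ref{lem:CEQ-base-case} with $\mathcal{N}_1=\mathcal{N}$ Hadamard and $\mathcal{N}_2=\mathcal{N}^{\otimes(n-1)}$ arbitrary. Your remark that only the first tensor factor needs to be Hadamard is exactly the point the paper relies on as well.
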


\begin{proof}
We prove the result using induction on $n$. The base case for $n=1$ is
trivial. Suppose the result holds for $n$: $f_{\lambda}(\mathcal{N}^{\otimes
n})=nf_{\lambda}(\mathcal{N})$. The following chain of equalities then proves
the inductive step:%
\begin{align*}
f_{\lambda}(\mathcal{N}^{\otimes n+1})  &  =f_{\lambda}(\mathcal{N}%
\otimes\mathcal{N}^{\otimes n})\\
&  =f_{\lambda}(\mathcal{N})+f_{\lambda}(\mathcal{N}^{\otimes n})\\
&  =f_{\lambda}(\mathcal{N})+nf_{\lambda}(\mathcal{N}).
\end{align*}
The first equality follows by expanding the tensor product. The second
critical equality follows from the application of
Lemma~\ref{lem:CEQ-base-case} because $\mathcal{N}$ is a Hadamard channel. The
final equality follows from the induction hypothesis.
\end{proof}

There is one last point that we should address concerning the CQ trade-off curve.
There is the possibility in this trade-off problem that the parameter $\lambda$ 
does not parametrize all points on the trade-off curve, potentially
leading to a gap in the trade-off curve. We address this concern in Appendix~\ref{sec:continuity-trade-off-curve}
by proving that one gets the entire trade-off curve by varying $\lambda$ and taking the convex hull of the resulting points. A similar proof holds for the CE trade-off curve.

\subsection{CE\ Trade-off Curve}

\label{sec:single-letter-EAC-Hadamard}For the CE region, we would like to
maximize both the classical communication rate while minimizing the
entanglement consumption rate, but we cannot have both be simultaneously
optimal. Thus, we must trade between these resources. If we are willing to
reduce the classical communication rate by a little, then the protocol does
not require as much entanglement consumption.

Our main theorem below states that the following function generates points on
the CE trade-off curve:%
\begin{equation}
g_{\lambda}\left(  \mathcal{N}\right)  \equiv\max_{\rho}I\left(  AX;B\right)
_{\rho}-\lambda H\left(  A|X\right)  _{\rho}, \label{eq:EAC-max-function}%
\end{equation}
where the state $\rho$ is of the form in (\ref{eq:CEQ-state}) and
$0\leq\lambda\leq1$.

\begin{theorem}
\label{thm:CE-single-letter}For $0\leq\lambda\leq1$, the function in
(\ref{eq:EAC-max-function}) leads to a point%
\[
(I(AX;B)_{\rho},H\left(  A|X\right)  _{\rho})
\]
on the CE trade-off curve, provided $\rho$ maximizes (\ref{eq:EAC-max-function}).
\end{theorem}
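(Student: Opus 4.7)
The plan is to replicate the four-step architecture of Theorem~\ref{thm:CQ-single-letter}. First, the one-shot, one-state statement (the analog of Lemma~\ref{claim:Lmult}) follows immediately by the same contradiction argument: any $\sigma$ with $H(A|X)_\sigma = H(A|X)_\rho$ but $I(AX;B)_\sigma > I(AX;B)_\rho$ would strictly exceed $\rho$ on $g_\lambda$ for every $\lambda \geq 0$. Second, the restriction to $0 \leq \lambda \leq 1$ (the analog of Lemma~\ref{lem:max-CQ-lambda}) follows from checking that for $\lambda > 1$ the Lagrangian is maximized on pure-input ensembles: $(1-\lambda)H(A|X) \leq 0$ then forces the optimizer onto $H(A|X) = 0$, so $g_\lambda$ just recovers the HSW classical capacity and parametrizes no genuine trade-off points.

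The substantive content, as in the CQ case, is the additivity lemma: for Hadamard $\mathcal{N}_1$ and arbitrary $\mathcal{N}_2$,
\[
g_\lambda(\mathcal{N}_1 \otimes \mathcal{N}_2) \leq g_\lambda(\mathcal{N}_1) + g_\lambda(\mathcal{N}_2).
\]
My first move is to rewrite $g_\lambda = H(B) + (1-\lambda)H(A|X) - H(E|X)$, using the purity identity $H(B|AX) = H(E|X) - H(A|X)$ that comes from $|\phi_x\rangle^{ABE}$ being pure. Then, for a maximizer $\rho$ on $XAB_1E_1B_2E_2$ with underlying input state $\psi^{XAA_1A_2}$, I would use purity of $|\phi_x\rangle^{AB_1E_1B_2E_2}$ to rewrite $H(A|X)_\rho = H(B_1E_1B_2E_2|X)_\rho$, apply subadditivity $H(B_1B_2) \leq H(B_1) + H(B_2)$, and apply the chain rule on both $H(B_1E_1B_2E_2|X) = H(B_1E_1|X) + H(B_2E_2|B_1E_1X)$ and $H(E_1E_2|X) = H(E_1|X) + H(E_2|E_1X)$.

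The Hadamard structure enters through the measurement-and-prepare factorization $\mathcal{D} = \mathcal{D}_2 \circ \mathcal{D}_1$ of Section~\ref{sec:Hadamard-channel}. Setting $\sigma \equiv \mathcal{D}_1(\rho)$ so that classical $Y$ replaces $B_1$, data processing through $\mathcal{D}_1$ combined with conditioning monotonicity yields $H(B_2E_2|B_1E_1X)_\rho \leq H(B_2E_2|YX)_\sigma$, while data processing through $\mathcal{D}_2$, combined with the degrading identity $\mathcal{D}_2(\sigma^{XYE_2}) = \rho^{XE_1E_2}$, yields $H(E_2|E_1X)_\rho \geq H(E_2|YX)_\sigma$. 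These two inequalities carry coefficients $(1-\lambda)$ and $-1$ respectively in the bound above, so both align provided $0 \leq \lambda \leq 1$. After substitution, the terms regroup precisely as $g_\lambda(\theta) + g_\lambda(\sigma)$, where $\theta = U_1\psi U_1^\dagger$ is a valid CE state for $\mathcal{N}_1$ with purification reference $AA_2$ and classical variable $X$, and $\sigma$ is a valid CE state for $\mathcal{N}_2$ with enlarged classical variable $XY$ and purification reference $AE_1$ (pure because a von Neumann measurement of $B_1$ of the pure state $|\phi_x\rangle$ in a fixed basis leaves each conditional branch pure).

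The main obstacle will be the bookkeeping: specifically, verifying via the purity identities $H(AA_2|X)_\theta = H(B_1E_1|X)_\theta$ and $H(AE_1|XY)_\sigma = H(B_2E_2|XY)_\sigma$ that the regrouped terms really do match the single-channel $g_\lambda$ expressions, so that the chain of data-processing inequalities together with the sign condition $0 \leq \lambda \leq 1$ closes cleanly. Once the additivity lemma is in place, the induction $g_\lambda(\mathcal{N}^{\otimes n}) = n\, g_\lambda(\mathcal{N})$ proceeds verbatim from the argument of Corollary~\ref{thm:CEQ-single-letter}, completing the single-letterization of the CE trade-off curve for Hadamard channels.
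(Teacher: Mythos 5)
Your proposal follows the paper's proof essentially step for step: the same contradiction argument for the one-shot statement (Lemma~\ref{claim:LShor}), the same restriction of the parameter range (Lemma~\ref{lem:max-CE-lambda}), the same additivity lemma driven by the rewriting $g_\lambda = H(B)+(1-\lambda)H(A|X)-H(E|X)$, the chain rule, and the two data-processing inequalities through the measure-and-prepare factorization $\mathcal{D}=\mathcal{D}_2\circ\mathcal{D}_1$ with the sign condition $0\leq\lambda\leq 1$ (Lemma~\ref{lem:additivity-EAC}), followed by the identical induction (Corollary~\ref{thm:EAC-single-letter}). The only loose spot is your one-line justification of the $\lambda>1$ exclusion --- the penalty term being nonpositive does not by itself force $H(A|X)=0$ at the optimum; the paper instead measures $A$ and bounds $H(B)-H(E|X)$ by $I(XY;B)$ --- but this does not affect the substance of the argument.
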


\begin{proof}
The proof of this theorem proceeds similarly to that of
Theorem~\ref{thm:CQ-single-letter} in the previous section. It follows from
the results of Lemmas~\ref{claim:LShor} and \ref{lem:max-CE-lambda} and
Corollary~\ref{thm:EAC-single-letter} in Appendix~\ref{sec:CE-trade-off-curve}.
\end{proof}

\section{Parametrization of the Trade-off Curves}

\label{sec:param-curves}%

\begin{figure*}
[ptb]
\begin{center}
\includegraphics[
natheight=5.572800in,
natwidth=12.766400in,
height=3.0943in,
width=7.0534in
]%
{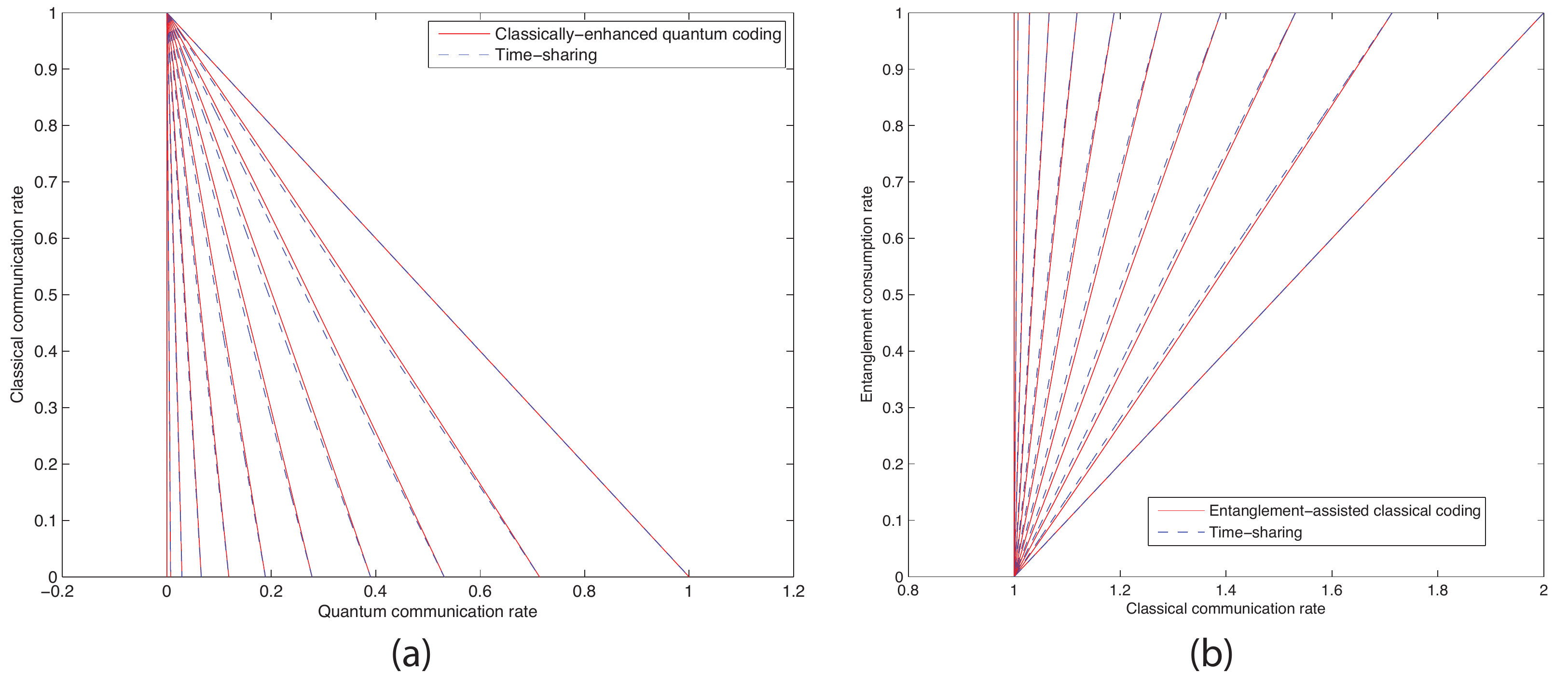}%
\caption{(Color online) Plot of (a) the CQ trade-off curve and (b) the CE
trade-off curve for a $p$-dephasing qubit channel for $p=0$, $0.1$, $0.2$,
\ldots, $0.9$, $1$. The trade-off curves for $p=0$ correspond to those of a
noiseless qubit channel and are the rightmost trade-off curve in each plot.
The trade-off curves for $p=1$ correspond to those for a classical channel,
and are the leftmost trade-off curves in each plot. Each trade-off curve
between these two extremes beats a time-sharing strategy, but these two
extremes do not beat time-sharing.}%
\label{fig:CQ-EC-dephasing}%
\end{center}
\end{figure*}

The results in the previous section demonstrate that the CQE capacity region
for all Hadamard channels single-letterizes. These results imply that we can
actually compute the CQE capacity region for these channels, by the arguments
in Section~\ref{sec:EACQ-review}. In this section, we consider several
instances of Hadamard channels and show how we can exactly characterize their
corresponding CQE capacity region. We first consider the qubit dephasing
channel and the $1\rightarrow N$ cloning channel, and then show how to apply
the results for the $1\rightarrow N$ cloning channel to the Unruh channel.

\subsection{Qubit Dephasing Channel}

This section briefly recalls the parametrizations for the CQ and CE\ trade-off
curves for a qubit dephasing channel. Devetak and Shor gave a particular
parametrization\ for the CQ trade-off curve for the case of a qubit dephasing
channel in Appendix~B of Ref.~\cite{cmp2005dev}, and Hsieh and Wilde followed
by giving a parametrization for the CE\ trade-off curve for qubit dephasing
channels~\cite{HW08a}. For the purposes of completion and comparison, we
recall these parametrizations below, and
Appendices~\ref{sec:CQ-dephasing-qubit}\ and
\ref{sec:CE-trade-off-qubit-dephasing} of this paper provide full proofs of
these assertions.

\begin{theorem}
\label{thm:dephasing-CEQ}All points on the CQ trade-off curve for a
$p$-dephasing qubit channel have the following form:%
\[
\left(  1-H_{2}(\mu),H_{2}(\mu)-H_{2}(\gamma\left(  \mu,p\right)  )\right)  ,
\]
and all points on the CE trade-off curve for a $p$-dephasing qubit channel
have the following form:%
\[
\left(  1+H_{2}(\mu)-H_{2}(\gamma\left(  \mu,p\right)  ),H_{2}(\mu)\right)  ,
\]
where $\mu\in\left[  0,1/2\right]  $, $H_{2}$ is the binary entropy function,
and%
\[
\gamma\left(  \mu,p\right)  \equiv\frac{1}{2}+\frac{1}{2}\sqrt{1-16\cdot
\frac{p}{2}\left(  1-\frac{p}{2}\right)  \mu(1-\mu)}.
\]

\end{theorem}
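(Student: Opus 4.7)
The plan is to apply Theorem~\ref{thm:CQ-single-letter} and Theorem~\ref{thm:CE-single-letter}, which reduce the CQ and CE trade-off curves to one-shot maximizations of $f_\lambda$ and $g_\lambda$ over classical-quantum states of the form~(\ref{eq:CEQ-state}). Since the qubit dephasing channel is Hadamard, these single-letter formulas apply directly, so the task is to identify the optimal ensemble $\{(p_X(x),\phi_x^{AA^\prime})\}$ and evaluate the associated entropies.

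First I would exploit the full Pauli covariance of the dephasing channel: $\mathcal{N}(g\sigma g^\dagger)=g\mathcal{N}(\sigma)g^\dagger$ for every Pauli $g$, while $\mathcal{N}^c(\sigma)$ has spectrum depending only on $|\Tr\{\sigma Z\}|$ (as visible from the explicit matrix form in Section~\ref{sec:generalized-dephasing}); moreover Pauli conjugation on $A^\prime$ preserves the spectrum of $\rho_x^{A^\prime}$ and the modulus $|\Tr\{\rho_x^{A^\prime}Z\}|$. Symmetrizing any ensemble by the Pauli group on $A^\prime$ and absorbing the group label into $X$ therefore leaves $H(B|X)$, $H(E|X)$, and $H(A|X)$ unchanged, while forcing the channel-averaged input to be maximally mixed and raising $H(B)$ to $1$ by concavity. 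One may then restrict attention to purifications $\ket{\phi_\mu}^{AA^\prime}=\sqrt{\mu}\,\ket{00}+\sqrt{1-\mu}\,\ket{11}$ with $\mu\in[0,1/2]$, up to a local isometry on $A$.

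The remainder is a direct computation. Acting with the isometric extension on $\ket{\phi_\mu}$ produces the tripartite pure state $\sqrt{\mu}\,\ket{00}^{AB}\ket{+_\theta}^{E}+\sqrt{1-\mu}\,\ket{11}^{AB}\ket{-_\theta}^{E}$ with $\braket{+_\theta}{-_\theta}=1-p$, from which one reads $H(A)_{\phi_\mu}=H(B)_{\phi_\mu}=H_2(\mu)$ and computes the spectrum $\{\gamma(\mu,p),1-\gamma(\mu,p)\}$ of $\psi^{AB}_\mu$ (equivalently of $\mathcal{N}^c(\rho_\mu^{A^\prime})$), giving $H(E)_{\phi_\mu}=H_2(\gamma(\mu,p))$. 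For a symmetrized ensemble concentrated at a single $\mu$, $H(B)=1$ and the conditional entropies take the single-state values above, so expanding $f_\lambda=H(B)+(\lambda-1)H(B|X)-\lambda H(E|X)$ and $g_\lambda=H(B)+H(B|X)-H(E|X)-\lambda H(A|X)$ reduces each maximization to a one-variable optimization of $\mu$; varying $\lambda$ sweeps $\mu$ through $[0,1/2]$, and substitution yields the claimed pairs $(1-H_2(\mu),\,H_2(\mu)-H_2(\gamma(\mu,p)))$ and $(1+H_2(\mu)-H_2(\gamma(\mu,p)),\,H_2(\mu))$.

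The main obstacle is carrying out the Pauli-symmetrization reduction cleanly: one must verify that attaching the Pauli-group label to $X$ yields a state still of the form~(\ref{eq:CEQ-state}), and that the invariance properties above genuinely hold for the exact conditional entropies appearing in $f_\lambda$ and $g_\lambda$ rather than being violated by some cross-term. Once the symmetrized ensemble is established as a valid achievable configuration that is at least as good as any asymmetric one, the remaining spectral calculations for $\psi^{AB}_\mu$ and the algebraic rewriting of $f_\lambda$ and $g_\lambda$ are routine and are carried out in detail in Appendices~\ref{sec:CQ-dephasing-qubit} and~\ref{sec:CE-trade-off-qubit-dephasing}.
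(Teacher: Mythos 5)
Your proposal is correct in outline and reaches the theorem by essentially the same route as the paper: invoke the Hadamard single-letterization (Theorems~\ref{thm:CQ-single-letter} and~\ref{thm:CE-single-letter}), symmetrize to pin the optimal ensemble down to the two-state, $\mu$-parametrized family, and then compute $H(B)$, $H(B|X)$, $H(E|X)$, $H(A|X)$ --- your spectral computation of $\gamma(\mu,p)$ via the overlap $\braket{+_\theta}{-_\theta}=1-p$ matches the determinant calculation in Appendix~\ref{sec:CQ-dephasing-qubit}. The one substantive difference is how the reduction to diagonal conditional states is obtained. You twirl over the full Pauli group first (the route the paper takes for the cloning channel in Lemma~\ref{lem:parametrize-CEQ-cloning}) and then ``restrict attention'' to the diagonal purifications $\sqrt{\mu}\ket{00}+\sqrt{1-\mu}\ket{11}$; but the Pauli twirl by itself does not force diagonality --- after symmetrization the conditional states can still carry arbitrary transverse Bloch components, which change $H(B|X)$ and $H(A|X)$ while leaving $H(E|X)$ fixed. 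You still need the separate observation that, at fixed $|\Tr\{\rho_x Z\}|$, zeroing the transverse components can only increase $H(B)_{\mathcal{N}(\rho_x)}$ and $H(A^{\prime})_{\rho_x}$, and that these quantities enter $f_\lambda$ and $g_\lambda$ with nonnegative coefficients precisely on the ranges $\lambda\geq1$ and $0\leq\lambda\leq1$. The paper isolates this step as Lemmas~\ref{lem:diagonal-dephasing-CEQ} and~\ref{lem:diagonal-dephasing-EAC}, proved by inserting the completely dephasing channel $\Delta$ and using $\mathcal{N}\circ\Delta=\Delta\circ\mathcal{N}$ together with $\mathcal{N}^{c}\circ\Delta=\mathcal{N}^{c}$; having diagonalized, it then needs only the bit-flip (not the full Pauli group) to drive $H(B)$ to $1$. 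Your argument closes exactly along these lines, and you correctly flag this reduction as the point requiring care; everything downstream of it is right.
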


Figure~\ref{fig:CQ-EC-dephasing}\ plots both the CQ\ and CE trade-off curves
for several $p$-dephasing qubit channels, where $p$\ varies from zero to one
in increments of $1/10$. The figure demonstrates that both
classically-enhanced quantum coding and entanglement-assisted classical coding
beat a time-sharing strategy for any $p$ such that $0<p<1$.

We do not plot the full CQE\ capacity region for the dephasing qubit channel
but instead point the interested reader to Figure~4\ of Ref.~\cite{HW08a}.

\subsection{$1\rightarrow N$ Cloning Channels}

\label{sec:dscl}We now compute exact expressions for the CQ\ and CE\ trade-off
curves, plot them for several $1\rightarrow N$ cloning channels, and plot the
CQE\ capacity region for a particular $1\rightarrow N$ cloning channel. The
following theorem states these expressions, and the lemmas and subsequent calculation following it
provide a proof.

\begin{theorem}
\label{thm:cloning-CEQ}All points on the CQ trade-off curve for a
$1\rightarrow N$ cloning channel have the following form:%
\[
\left(  \log\left(  {N+1}\right)  -H\left(  \frac{\lambda_{i}\left(
\mu\right)  }{\Delta_{N}}\right)  ,H\left(  \frac{\lambda_{i}\left(
\mu\right)  }{\Delta_{N}}\right)  -H\left(  \frac{\eta_{i}\left(  \mu\right)
}{\Delta_{N}}\right)  \right)  ,
\]
and all points on the CE trade-off curve for a $1\rightarrow N$ cloning
channel have the following form:%
\[
\bigg(  \log\left(  {N+1}\right)  +H_{2}\left(  \mu\right)  -H\left(  \eta
_{i}\left(  \mu\right)  /\Delta_{N}\right)  ,\ H_{2}\left(  \mu\right)
\bigg)  ,
\]
where $H$ is the entropy function $H\left(  \cdot\right)  \equiv-\sum
_{i}\left(  \cdot\right)  \log\left(  \cdot\right)  $,%
\begin{align*}
\Delta_{N}  &  \equiv N\left(  N+1\right)  /2,\\
\lambda_{i}\left(  \mu\right)   &  \equiv(N-2i)\mu+i\ \ \ \text{for\ \ \ }%
0\leq i\leq N,\\
\eta_{i}\left(  \mu\right)   &  \equiv(N-1-2i)\mu+i+1\ \ \ \text{for\ \ \ }%
0\leq i\leq N-1,\\
\mu &  \in\left[  0,1/2\right]  .
\end{align*}

\end{theorem}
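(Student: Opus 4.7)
The plan is to exploit the $SU(2)$-covariance of the $1\to N$ cloning channel, together with the single-letter trade-off functions $f_{\lambda}$ and $g_{\lambda}$ established in Theorems~\ref{thm:CQ-single-letter} and~\ref{thm:CE-single-letter}, to reduce each maximization over classical-quantum states to a scalar optimization over the spectrum $\{1-\mu,\mu\}$ of the signal qubits. I would first establish a twirling/covariance lemma: for any candidate ensemble $\{p_X(x),\phi_x^{AA'}\}$, randomizing over $V\in SU(2)$ on the input qubit yields an ensemble whose average input is $I/2$ and hence whose output has $H(B)=\log(N+1)$ (the entropy of the completely mixed state on the $(N{+}1)$-dimensional symmetric output subspace, since a direct check shows $\mathcal{N}_{\text{Cl}}(I/2)$ equals that state). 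The conditional entropies $H(B|X)$, $H(E|X)$, and $H(A|X)$ are all invariant under this twirling because the covariance identity $\mathcal{N}_{\text{Cl}}(V\sigma V^{\dagger})=R_V\mathcal{N}_{\text{Cl}}(\sigma)R_V^{\dagger}$ and its complementary analogue preserve output spectra. Therefore twirling can only increase $f_\lambda$ and $g_\lambda$, allowing us to restrict attention to ensembles with a single fiducial spectrum $\{1-\mu,\mu\}$.

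With this reduction, the second step is a direct calculation starting from the fiducial input $\rho^{A'}=(1-\mu)|0\rangle\langle 0|+\mu|1\rangle\langle 1|$ and a purification $|\phi\rangle^{AA'}$. Linearity plus the isometric-extension expression~(\ref{eq:ucl}) makes both channel outputs diagonal: $\mathcal{N}_{\text{Cl}}(\rho^{A'})$ is diagonal in $\{|i\rangle^B\}_{i=0}^{N}$ with $i$th entry $\bigl[(1-\mu)(N-i)+\mu i\bigr]/\Delta_N$, and $\mathcal{N}_{\text{Cl}}^{c}(\rho^{A'})$ is diagonal in $\{|i\rangle^E\}_{i=0}^{N-1}$ with $i$th entry $\bigl[(1-\mu)(N-i)+\mu(i+1)\bigr]/\Delta_N$. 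After the harmless $\mu\mapsto 1-\mu$ relabelling allowed because $\mu$ ranges over $[0,1/2]$, these coincide with $\lambda_i(\mu)/\Delta_N$ and $\eta_i(\mu)/\Delta_N$ respectively. Consequently
\[
H(B)=\log(N+1),\ H(B|X)=H(\lambda_i(\mu)/\Delta_N),\ H(E|X)=H(\eta_i(\mu)/\Delta_N),\ H(A|X)=H_2(\mu).
\]
Substituting into $(I(X;B),\,I(A\rangle BX))$ for the CQ pair and into $(I(AX;B),\,H(A|X))$ for the CE pair---using the purity identity $H(B|AX)=H(E|X)-H(A|X)$ which follows from $\phi_x^{ABE}$ being pure---reproduces the claimed parametrizations exactly.

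The main obstacle is making the twirling step rigorous for both trade-off functions simultaneously. For $f_\lambda$ with $\lambda\geq 1$, rewrite the objective as $H(B)+(\lambda-1)H(B|X)-\lambda H(E|X)$; the twirl leaves the last two terms invariant and weakly increases $H(B)$ by concavity, so the symmetrized ensemble is at least as good. For $g_\lambda$ with $0\leq\lambda\leq 1$ the analogous rewrite is $H(B)-H(E|X)+(1-\lambda)H(A|X)$, and again only $H(B)$ changes under the twirl. The second subtlety is that density operators on a qubit are, up to unitary equivalence, parametrized by a single spectral value, so sweeping $\mu\in[0,1/2]$ exhausts all optimal fiducial choices; any apparent gaps in the resulting trade-off curves are filled by the convex-hull argument of Appendix~\ref{sec:continuity-trade-off-curve}.
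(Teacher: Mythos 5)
Your proposal is correct and follows essentially the same route as the paper: exploit covariance to symmetrize the ensemble, use concavity to push $H(B)$ up to $\log(N+1)$ while the conditional entropies are unchanged, reduce by an expectation-versus-maximum step to a single input spectrum $\{1-\mu,\mu\}$, and then compute the output and environment spectra $\lambda_i(\mu)/\Delta_N$ and $\eta_i(\mu)/\Delta_N$ directly from the isometric extension. The only cosmetic difference is that you twirl over all of $SU(2)$ whereas the paper symmetrizes the ensemble with the discrete Pauli group (reserving the continuous twirl for showing $\mathcal{N}_{\text{Cl}}(I/2)$ is maximally mixed on the symmetric subspace); both yield the same entropic bounds, provided you remember that the saturating ensemble must contain the bit-flipped partner so that the average input is $I/2$.
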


\begin{lemma}
\label{lem:parametrize-CEQ-cloning}An ensemble of the following form
parametrizes all points on the CQ\ trade-off curve for a $1\rightarrow N$
cloning channel:%
\begin{equation}
\frac{1}{2}\left\vert 0\right\rangle \left\langle 0\right\vert ^{X}\otimes
\psi_{0}^{AA^{\prime}}+\frac{1}{2}\left\vert 1\right\rangle \left\langle
1\right\vert ^{X}\otimes\psi_{1}^{AA^{\prime}}, \label{cq-state-cloning-mu}%
\end{equation}
where $\psi_{0}^{AA^{\prime}}$ and $\psi_{1}^{AA^{\prime}}$ are pure states,
defined as follows for $\mu\in\left[  0,1/2\right]  $:%
\begin{align}
\text{\emph{Tr}}_{A}\left\{  \psi_{0}^{AA^{\prime}}\right\}   &
=\mu\left\vert 0\right\rangle \left\langle 0\right\vert ^{A^{\prime}}+\left(
1-\mu\right)  \left\vert 1\right\rangle \left\langle 1\right\vert ^{A^{\prime
}},\label{eq:cloning-mu-1}\\
\text{\emph{Tr}}_{A}\left\{  \psi_{1}^{AA^{\prime}}\right\}   &  =\left(
1-\mu\right)  \left\vert 0\right\rangle \left\langle 0\right\vert ^{A^{\prime
}}+\mu\left\vert 1\right\rangle \left\langle 1\right\vert ^{A^{\prime}}.
\label{eq:cloning-mu-2}%
\end{align}

\end{lemma}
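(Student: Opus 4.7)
My plan is to maximize
\[
f_\lambda(\mathcal{N}_{\text{Cl}}) = H(B)_\rho + (\lambda-1)H(B|X)_\rho - \lambda H(E|X)_\rho,
\]
where I have used $I(A\rangle BX)_\rho = H(B|X)_\rho - H(E|X)_\rho$ (valid because $\rho^{XABE}$ is pure conditional on $X$), and exploit the $\mathrm{SU}(2)$ covariance $\mathcal{N}_{\text{Cl}}(V\sigma V^\dag) = R_V \mathcal{N}_{\text{Cl}}(\sigma) R_V^\dag$ and its analogue for the complement. Each qubit marginal $\rho_x^{A^{\prime}}$ has spectrum $(\mu_x, 1-\mu_x)$ with $\mu_x \in [0, 1/2]$, so covariance makes the conditional entropies functions of the spectrum alone: $H(B|X)_\rho = \sum_x p_X(x)\,a(\mu_x)$ and $H(E|X)_\rho = \sum_x p_X(x)\,e(\mu_x)$, with $a(\mu) = a(1-\mu)$ and $e(\mu) = e(1-\mu)$. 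Furthermore, Schur's lemma applied to the $(N+1)$-dimensional irreducible representation $R_V$ on the symmetric subspace forces $\mathcal{N}_{\text{Cl}}(I/2) = I/(N+1)$, so $a(1/2) = \log(N+1)$ is the global maximum of $a$.

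I would then perform a bit-flip symmetrization to push $H(B)$ up to $\log(N+1)$ at no cost. For each $x$ let $V_x$ denote the bit-flip in the eigenbasis of $\rho_x^{A^{\prime}}$; augment the classical label by a uniform bit $y \in \{0,1\}$, associating $\phi_x^{AA^{\prime}}$ to $(x,0)$ and $(I_A \otimes V_x)\phi_x^{AA^{\prime}}(I_A \otimes V_x)^\dag$ to $(x,1)$. Each pair of $A^{\prime}$-marginals averages to $I/2$, so the augmented ensemble has $\rho^{A^{\prime}} = I/2$ and hence $H(B) = \log(N+1)$. The new conditional entropies coincide with the old $H(B|X)_\rho$ and $H(E|X)_\rho$, since $\rho_x^{A^{\prime}}$ and $V_x \rho_x^{A^{\prime}} V_x^\dag$ share the same spectrum while $a$ and $e$ depend only on the spectrum. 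Thus the symmetrized ensemble achieves $f_\lambda$-value at least as large as the original for every $\lambda \geq 0$, and one may restrict to ensembles with $\rho^{A^{\prime}} = I/2$ without loss of generality.

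Restricted to such ensembles, the objective collapses to
\[
f_\lambda = \log(N+1) + \sum_x p_X(x)\bigl[(\lambda-1)a(\mu_x) - \lambda e(\mu_x)\bigr],
\]
a linear functional of the distribution of $\mu_x$ on $[0, 1/2]$ that is therefore maximized by a point mass at some $\mu^* \in [0, 1/2]$. I would then exhibit the explicit optimum claimed by the lemma: take $\mu = \mu^*$, let $\psi_0^{AA^{\prime}}$ and $\psi_1^{AA^{\prime}}$ be any purifications of the diagonal marginals in (\ref{eq:cloning-mu-1})--(\ref{eq:cloning-mu-2}), and form the equiprobable classical-quantum mixture (\ref{cq-state-cloning-mu}). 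A direct computation shows that this ensemble satisfies $\rho^{A^{\prime}} = I/2$ and has conditional entropies $a(\mu^*)$ and $e(\mu^*)$, attaining the maximum of $f_\lambda$. I expect the main obstacle to be the symmetrization step: the basis-dependent bit flips $V_x$ must simultaneously leave $H(B|X)$ and $H(E|X)$ invariant while pushing $H(B)$ up to its global maximum, and pinning this combination of basis-dependent and global covariance down cleanly is what unlocks the final one-parameter optimization.
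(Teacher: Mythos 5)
Your proposal is correct and follows essentially the same route as the paper's proof: rewrite the objective as $H(B)+(\lambda-1)H(B|X)-\lambda H(E|X)$, use covariance of the cloning channel and its complement to argue the conditional entropies depend only on the input spectra, symmetrize to force $\rho^{A'}=I/2$ and hence $H(B)=\log(N+1)$, and reduce to a one-parameter optimization whose maximum is attained by the two-state bit-flipped ensemble. The only cosmetic difference is that you symmetrize with a basis-dependent two-element bit-flip twirl while the paper applies the full four-element Pauli twirl; both leave the conditional entropies invariant by the same covariance argument and drive the average input to the maximally mixed state.
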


\begin{proof}
Consider a classical-quantum state with a finite number of conditional
density operators $\rho_{x}^{A^{\prime}}$:%
\begin{equation}
\rho^{XA^{\prime}}=\sum_{x} p_{X}\left(  x\right)  |x\rangle\langle
x|^{X}\otimes\rho_{x}^{A^{\prime}}.\nonumber
\end{equation}
Let $\mathcal{N}_{\text{Cl}}^{A^{\prime}\rightarrow B}$ denote the cloning
channel, $U_{\mathcal{N}_{\text{Cl}}}^{A^{\prime}\rightarrow BE}$ an isometric
extension, $(\mathcal{N}_{\text{Cl}}^{c})^{A^{\prime}\rightarrow E}$ a
complementary channel, and let $\rho^{XBE}$ denote the state resulting from
applying the isometric extension $U_{\mathcal{N}_{\text{Cl}}}^{A^{\prime
}\rightarrow BE}$ of the cloning channel to system $A^{\prime}$ of
$\rho^{XA^{\prime}}$.

We can form a new classical-quantum state with quadruple the number of
conditional density operators by applying all Pauli matrices to the original
conditional density operators:%
\[
\sigma^{X J A^{\prime}}\equiv\sum_{x} \sum_{j=0}^{3}\frac{1}{4}%
p_{X}\left(  x\right)  |x\rangle\langle x|^{X}\otimes\left\vert j\right\rangle
\left\langle j\right\vert ^{J}\otimes\sigma_{j}\rho_{x}^{A^{\prime}}\sigma
_{j},
\]
where $\sigma_{j}$ labels the four Pauli matrices:\ $\sigma_{0}\equiv I$,
$\sigma_{1}\equiv X$, $\sigma_{2}\equiv Y$, $\sigma_{3}\equiv Z$. Let
$\sigma^{XJBE}$ denote the state resulting from sending $A^{\prime}$ through
the isometric extension $U_{\mathcal{N}_{\text{Cl}}}^{A^{\prime}\rightarrow
BE}$ of the cloning channel.

Recall from Section~\ref{sec:cloning}\ that the cloning channel is covariant.
In fact, the following relationships hold for any input density operator
$\sigma$ and any unitary $V$ acting on the input system $A^{\prime}$:%
\begin{align*}
\mathcal{N}_{\text{Cl}}\left(  V\sigma V^{\dag}\right)   &  =R_{V}%
\mathcal{N}_{\text{Cl}}\left(  \sigma\right)  R_{V}^{\dag},\\
\mathcal{N}_{\text{Cl}}^{c}\left(  V\sigma V^{\dag}\right)   &  =S_{V}%
\mathcal{N}_{\text{Cl}}^{c}\left(  \sigma\right)  S_{V}^{\dag},
\end{align*}
where $R_{V}$ and $S_{V}$ are higher-dimensional irreducible representations
of the unitary $V$ on the respective systems $B$ and $E$. The state
$\sigma^{B}$ is equal to the maximally mixed state on the symmetric subspace
for the following reasons:%
\begin{align}
\sigma^{B}  &  = \mathcal{N}_{\text{Cl}}\left(  \sigma^{A^{\prime}}\right)
= \mathcal{N}_{\text{Cl}}\left(  \frac{I^{A^{\prime}}}%
{2}\right)  =\mathcal{N}_{\text{Cl}}\left(  \int V\omega V^{\dag}%
\ \text{d}V\right) \nonumber \\
&  =\int R_{V}\mathcal{N}\left(  \omega\right)  R_{V^{\dag}}\ \text{d}%
V=\frac{1}{N+1}\sum_{i=0}^{N}\left\vert i\right\rangle \left\langle
i\right\vert ^{B},\label{eq:cloning-unital-relation}
\end{align}
where the third equality exploits the linearity and covariance of the cloning
channel $\mathcal{N}_{\text{Cl}}$.

Then the following chain of inequalities holds for all $\lambda\geq1$:%
\begin{align}
&  I\left(  X;B\right)  _{\rho}+\lambda I\left(  A\rangle BX\right)  _{\rho
}\nonumber\\
&  =H\left(  B\right)  _{\rho}+\left(  \lambda-1\right)  H\left(  B|X\right)
_{\rho}-\lambda H\left(  E|X\right)  _{\rho}\nonumber\\
&  =H\left(  B\right)  _{\rho}+\left(  \lambda-1\right)  H\left(  B|XJ\right)
_{\sigma}-\lambda H\left(  E|XJ\right)  _{\sigma}\nonumber\\
&  \leq H\left(  B\right)  _{\sigma}+\left(  \lambda-1\right)  H\left(
B|XJ\right)  _{\sigma}-\lambda H\left(  E|XJ\right)  _{\sigma}\nonumber\\
&  =\log\left(  N+1\right)  +\left(  \lambda-1\right)  H\left(  B|XJ\right)
_{\sigma}-\lambda H\left(  E|XJ\right)  _{\sigma}\nonumber\\
&  =\log\left(  N+1\right)  +\sum_{x}p_{X}\left(  x\right)  \left[  \left(
\lambda-1\right)  H\left(  B\right)  _{\rho_{x}}-\lambda H\left(  E\right)
_{\rho_{x}}\right] \nonumber\\
&  \leq\log\left(  N+1\right)  +\max_{x}\left[  \left(  \lambda-1\right)
H\left(  B\right)  _{\rho_{x}}-\lambda H\left(  E\right)  _{\rho_{x}}\right]
\nonumber\\
&  =\log\left(  N+1\right)  +\left(  \lambda-1\right)  H\left(  B\right)
_{\rho_{x}^{\ast}}-\lambda H\left(  E\right)  _{\rho_{x}^{\ast}}.
\label{eq:last-line}%
\end{align}
The first equality follows by standard entropic manipulations. The second
equality follows because the conditional entropies are invariant under
unitary transformations:%
\begin{align*}
H(B)_{R_{\sigma_{j}}\rho_{x}^{B}R_{\sigma_{j}}^{\dag}}  &  =H(B)_{\rho_{x}%
^{B}},\\
H(E)_{S_{\sigma_{j}}\rho_{x}^{E}S_{\sigma_{j}}^{\dag}}  &  =H(E)_{\rho_{x}%
^{E}},
\end{align*}
where $R_{\sigma_{j}}$ and $S_{\sigma_{j}}$ are higher-dimensional
representations of $\sigma_{j}$ on systems $B$ and $E$, respectively. The
first inequality follows because entropy is concave, i.e., the local state
$\sigma^{B}$ is a mixed version of $\rho^{B}$. The third equality follows
because (\ref{eq:cloning-unital-relation}) implies that $H(B)_{\sigma^{B}%
}=\log\left(  N+1\right)  $. The fourth equality follows because the systems
$X$ and $J$ are classical. The second inequality follows because the maximum
value of a realization of a random variable is not less than its expectation.
The final equality simply follows by defining $\rho_{x}^{\ast}$ to be the
conditional density operator on systems $B$ and $E$ that arises from sending
an arbitrary state through the channel.

The entropies $H(B)_{\rho_{x}^{*}}$ and $H(E)_{\rho_{x}^{*}}$ depend only on
the eigenvalues of the input state $\rho_{x}^{*}$ by the covariance of both
the cloning channel and its complement. We can therefore choose $\rho_{x}^{*}$
to be diagonal in the $\{ |0\rangle,|1\rangle\}$ basis of $A^{\prime}$. The
ensemble defined to consist of the purifications of $\rho_{x}^{*}$ and
$\sigma_{x} \rho_{x}^{*} \sigma_{x}^{*}$ assigned equal probabilities then
saturates the upper bound on $I(X;B)_{\rho}+ \lambda I(A\rangle BX )_{\rho}$
in (\ref{eq:last-line}), which concludes the proof.
\end{proof}
\smallskip
%

\begin{figure*}
[ptb]
\begin{center}
\includegraphics[
natheight=5.293500in,
natwidth=12.780200in,
height=2.9386in,
width=7.0534in
]%
{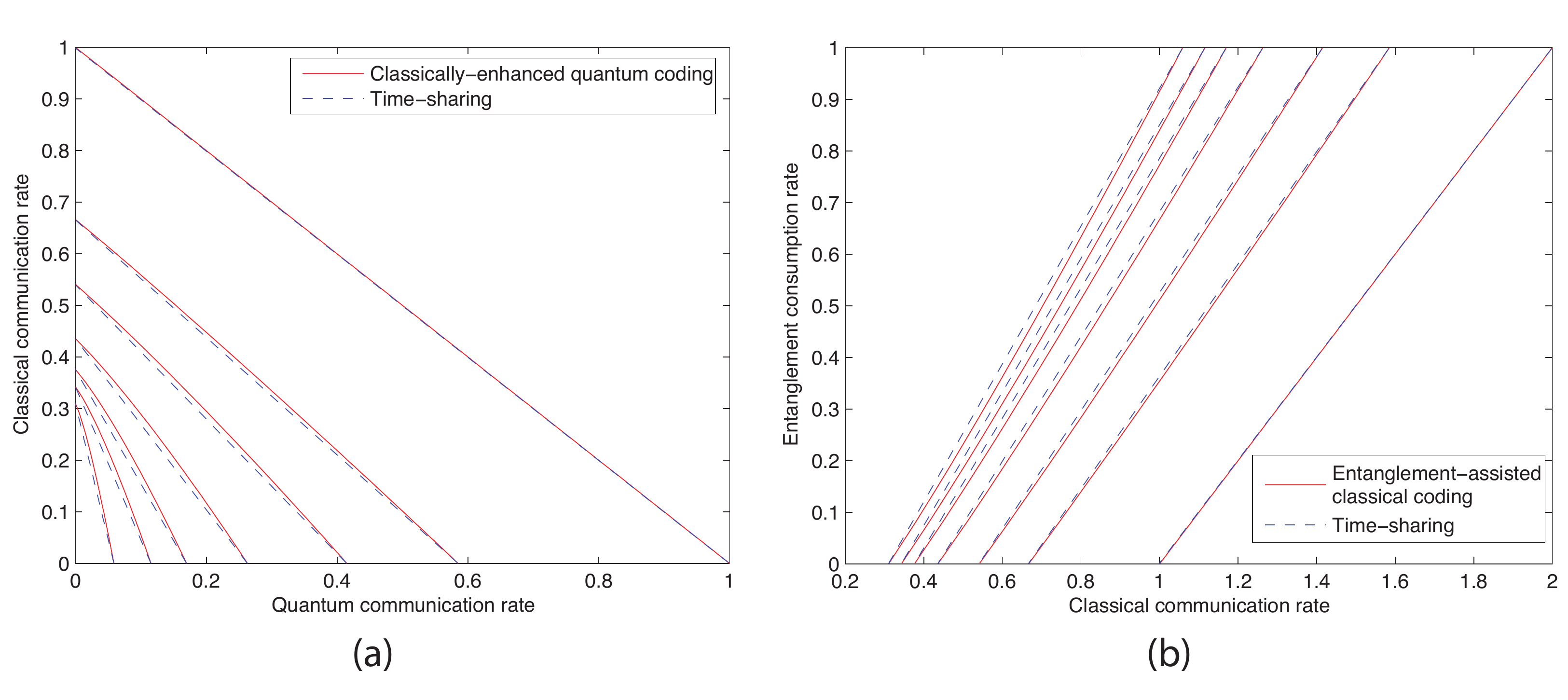}%
\caption{(Color online) Plot of (a) the CQ trade-off curve and (b) the CE
trade-off curve of a $1\rightarrow N$ cloning channel for $N=1$, $2$, $3$,
$5$, $8$, $12$, and $24$. The trade-off curves for $N=1$ correspond to those
for the noiseless qubit channel and are the rightmost trade-off curves in each
plot. In both plots, proceeding left from the $N=1$ curve, we obtain trade-off
curves for $N=2$, $3$, $5$, $8$, $12$, and $24$ and notice that they all beat
a time-sharing strategy by a larger relative proportion when $N$ increases.}%
\label{fig:cloning-CQ-CE}%
\end{center}
\end{figure*}

\begin{proof}
[Proof of CQ trade-off in Theorem~\ref{thm:cloning-CEQ}]We simply need to
compute the Holevo information $I\left(  X;B\right)  $ and the coherent
information $I\left(  A\rangle BX\right)  $ for an ensemble of the form in the
statement of Lemma~\ref{lem:parametrize-CEQ-cloning}.

The purifications of the states in (\ref{eq:cloning-mu-1}%
-\ref{eq:cloning-mu-2}) are the same as in (\ref{eq:dephasing-mu-1-pur}%
-\ref{eq:dephasing-mu-2-pur}). These states lead to classical-quantum states
of the form in (\ref{cq-state-cloning-mu}). An isometric extension
$U_{\mathcal{N}_{\text{Cl}}}$ of the $1\rightarrow N$ cloning channel acts as follows on
these states:%
\begin{multline*}
\mathop{\left|\psi_0\right>}\nolimits^{ABE}=\frac{1}{\sqrt{\Delta_{N}}%
}\bigg[\sum_{i=0}^{N-1}\bigg(\sqrt{\mu}\sqrt{N-i}%
\mathop{\left|0\right>}\nolimits^{A}\mathop{\left|i\right>}\nolimits^{B}\\
+\sqrt{1-\mu}\sqrt{i+1}\mathop{\left|1\right>}\nolimits^{A}%
\mathop{\left|i+1\right>}\nolimits^{B}%
\bigg)\mathop{\left|i\right>}\nolimits^{E}\bigg],
\end{multline*}%
\begin{multline*}
\mathop{\left|\psi_1\right>}\nolimits^{ABE}=\frac{1}{\sqrt{\Delta_{N}}%
}\bigg[\sum_{i=0}^{N-1}\bigg(\sqrt{1-\mu}\sqrt{N-i}%
\mathop{\left|0\right>}\nolimits^{A}\mathop{\left|i\right>}\nolimits^{B}\\
+\sqrt{\mu}\sqrt{i+1}\mathop{\left|1\right>}\nolimits^{A}%
\mathop{\left|i+1\right>}\nolimits^{B}%
\bigg)\mathop{\left|i\right>}\nolimits^{E}\bigg].
\end{multline*}
The state at the output of the isometric extension is as follows:%
\begin{equation}
\rho^{XABE}\equiv\frac{1}{2}\bigg[|0\rangle\!\langle0|^{X}\otimes\psi
_{0}^{ABE}+|1\rangle\!\langle1|^{X}\otimes\psi_{1}{}^{ABE}\bigg].
\nonumber\label{eq:outcl}%
\end{equation}
Defining $\lambda_{i}\left(  \mu\right)  \equiv(N-2i)\mu+i$ and $\Delta
_{N}=N\left(  N+1\right)  /2$ gives%
\begin{align}
\rho^{XB}  &  =\frac{1}{2}\bigg[|0\rangle\!\langle0|^{X}\otimes\psi_{0}%
^{B}+|1\rangle\!\langle1|^{X}\otimes\psi_{1}^{B}\bigg],\nonumber\\
\psi_{0}^{B}  &  =\sum_{i=0}^{N}\frac{\lambda_{i}\left(  \mu\right)  }%
{\Delta_{N}}|i\rangle\!\langle i|^{B},\nonumber\\
\psi_{1}^{B}  &  =\sum_{i=0}^{N}\frac{\lambda_{i}\left(  \mu\right)  }%
{\Delta_{N}}|N-i\rangle\!\langle N-i|^{B},\\
\rho^{B}  &  =\frac{1}{N+1}\sum_{i=0}^{N}|i\rangle\!\langle i|^{B}.
\end{align}
We can then compute the entropies $H\left(  B\right)  $ and $H\left(
B|X\right)  $:%
\begin{align*}
H(B)  &  =\log{(N+1),}\\
H(B|X)  &  =H\left(  \lambda_{i}\left(  \mu\right)  /\Delta_{N}\right)  ,
\end{align*}
giving the following expression for the Holevo information:%
\begin{align}
I(X;B)  &  =H(B)-H(B|X)\nonumber\\
&  =\log{(N+1)-}H\left(  \lambda_{i}\left(  \mu\right)  /\Delta_{N}\right)  ,
\label{eq:holevo-cloning}%
\end{align}
The above expression coincides with the expression for the classical capacity
of the $1\rightarrow N$ cloning channel when $\mu=0$ (Corollary~2 in
Ref.~\cite{B09}):%
\begin{equation}
I(X;B)_{\mu=0}=1-\log{N}+\frac{1}{\Delta_{N}}\sum_{i=0}^{N}i\log{i}\nonumber
\end{equation}
The expression in (\ref{eq:holevo-cloning})\ vanishes when $\mu=\frac{1}{2}$.

We now compute the coherent information $I(A\rangle BX)=H(B|X)-H(E|X)$. The
following states are important in this computation:%
\begin{align}
\rho^{XE}  &  =\frac{1}{2}\bigg[|0\rangle\!\langle0|^{X}\otimes\psi_{0}%
^{E}+|1\rangle\!\langle1|^{X}\otimes\psi_{1}^{E}\bigg],\nonumber\\
\psi_{0}^{E}  &  =\sum_{i=0}^{N-1}\frac{\eta_{i}\left(  \mu\right)  }%
{\Delta_{N}}|i\rangle\!\langle i|^{E},\\
\psi_{1}^{E}  &  =\sum_{i=0}^{N-1}\frac{\eta_{i}\left(  \mu\right)  }%
{\Delta_{N}}|N-1-i\rangle\!\langle N-1-i|^{E},
\end{align}
where $\eta_{i}\left(  \mu\right)  \equiv(N-1-2i)\mu+i+1$. We can then compute
the entropy $H\left(  E|X\right)  $:%
\[
H(E|X)=H\left(  \eta_{i}\left(  \mu\right)  /\Delta_{N}\right)  {,}%
\]
and the coherent information is as follows:%
\begin{equation}
I(A\rangle BX)=H\left(  \lambda_{i}\left(  \mu\right)  /\Delta_{N}\right)
-H\left(  \eta_{i}\left(  \mu\right)  /\Delta_{N}\right)  .\nonumber
\end{equation}
The above expression vanishes when $\mu=0$. It coincides with the quantum
capacity of a $1\rightarrow N$ cloning channel when $\mu=\frac{1}{2}$
(Equation~30 in Ref.~\cite{BDHM09}):%
\[
I(A\rangle BX)_{\mu=1/2}=\log\left(  {\frac{N+1}{N}}\right)  {.}%
\]

\end{proof}

We now turn to the proof of the CE\ trade-off curve for the $1\rightarrow N$
cloning channel.

\begin{lemma}
\label{lem:parametrize-EAC-cloning}An ensemble of the same form as in
Lemma~\ref{lem:parametrize-CEQ-cloning} parametrizes all points on the CE
trade-off curve for a $1\rightarrow N$ cloning channel.
\end{lemma}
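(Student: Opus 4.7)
The plan is to mirror the strategy of Lemma~\ref{lem:parametrize-CEQ-cloning}, adapted to the CE objective $g_\lambda(\mathcal{N}) = I(AX;B)_\rho - \lambda H(A|X)_\rho$ with $0 \le \lambda \le 1$. First I would rewrite the objective using the identity $H(AB|X) = H(E|X)$, which holds because $\phi_x^{ABE}$ is pure for each realization $x$ of the classical register:
\[
g_\lambda = H(B)_\rho + (1-\lambda) H(A|X)_\rho - H(E|X)_\rho.
\]
The constraint $\lambda \le 1$ from Theorem~\ref{thm:CE-single-letter} is exactly what makes the coefficient of $H(A|X)_\rho$ nonnegative, which will be essential later.

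Next I would apply the Pauli symmetrization trick from Lemma~\ref{lem:parametrize-CEQ-cloning}: augment an arbitrary classical-quantum input $\rho^{XA^{\prime}}$ to $\sigma^{XJA^{\prime}}$ by applying each Pauli $\sigma_j$ to $A^{\prime}$ with equal probability, indexed by a new classical register $J$. Three facts then hold: (i) because $\sigma_j$ acts only on $A^{\prime}$, the reduced state on the purifying system $A$ is unchanged, so $H(A|XJ)_\sigma = H(A|X)_\rho$; (ii) because the complementary cloning channel is covariant, Pauli conjugation preserves the $E$-entropies, so $H(E|XJ)_\sigma = H(E|X)_\rho$; (iii) averaging any qubit state over the four Paulis produces $I^{A^{\prime}}/2$, which the cloning channel maps to the maximally mixed state on the symmetric subspace by (\ref{eq:cloning-unital-relation}), so $H(B)_\sigma = \log(N+1)$, and concavity of entropy yields $H(B)_\rho \le H(B)_\sigma$. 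Chaining these three steps and then bounding the resulting expectation by its maximum over $(x,j)$ gives
\[
g_\lambda(\mathcal{N}_{\text{Cl}}) \le \log(N+1) + \max_{\omega} \left[(1-\lambda) H(\omega) - H(\mathcal{N}_{\text{Cl}}^c(\omega))\right],
\]
where $\omega$ ranges over qubit density operators on $A^{\prime}$. Because both entropies on the right depend only on the eigenvalues of $\omega$ (the first trivially, the second by covariance of $\mathcal{N}_{\text{Cl}}^c$), the maximizing $\omega$ may be taken diagonal in the computational basis and parametrized by $\mu \in [0, 1/2]$.

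To close the argument, I would verify that the ensemble in (\ref{cq-state-cloning-mu}) saturates every inequality in the chain: its $B$-marginal is already Pauli-symmetric because the two conditional states are bit-flipped partners with equal weight, so the concavity step is tight; and the two conditional states give identical values of the maximand by covariance, so the expectation-to-maximum bound is tight as well. The resulting point on the curve is then obtained by plugging these purifications into the formulas for $I(AX;B)$ and $H(A|X)$, which is a direct entropic calculation analogous to the one at the end of Section~\ref{sec:dscl}. The main obstacle is simply the bookkeeping required to confirm equality at each step on this ensemble; no new idea beyond the CQ proof is needed, because the nonnegativity of $(1-\lambda)$ lets us handle the $H(A|X)$ term via the trivial equality $H(A|XJ)_\sigma = H(A|X)_\rho$ rather than via the data-processing argument invoked in Lemma~\ref{lem:CEQ-base-case}.
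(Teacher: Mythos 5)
Your proposal is correct and follows essentially the same route as the paper's proof: rewrite $I(AX;B)_{\rho}-\lambda H(A|X)_{\rho}$ as $(1-\lambda)H(A|X)_{\rho}+H(B)_{\rho}-H(E|X)_{\rho}$, apply the Pauli symmetrization with register $J$, use covariance and concavity to bound $H(B)$ by $\log(N+1)$, pass from the expectation to the maximum over conditional states, and reduce to diagonal inputs parametrized by $\mu$. The only quibble is that the nonnegativity of $(1-\lambda)$ is not actually what makes the $H(A|X)$ step work (the equality $H(A|XJ)_{\sigma}=H(A|X)_{\rho}$ and the expectation-to-maximum bound hold for any $\lambda$); the restriction $0\leq\lambda\leq1$ matters instead, via Lemma~\ref{lem:max-CE-lambda}, for ensuring the resulting points lie on the trade-off curve.
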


\begin{proof}
The proof of this lemma proceeds along similar lines as the proof of
Lemma~\ref{lem:parametrize-CEQ-cloning}, using the same states and ideas
concerning the cloning channel. We highlight the major differences by giving
the following chain of inequalities that holds for all $\lambda$ such that
$0\leq\lambda<1$:%
\begin{align*}
&  I\left(  AX;B\right)  _{\rho}-\lambda H\left(  A|X\right)  _{\rho}\\
&  =\left(  1-\lambda\right)  H\left(  A|X\right)  _{\rho}+H\left(  B\right)
_{\rho}-H\left(  E|X\right)  _{\rho}\\
&  =H\left(  B\right)  _{\rho}+\left(  1-\lambda\right)  H\left(  A|XJ\right)
_{\sigma}-H\left(  E|XJ\right)  _{\sigma}\\
&  \leq H\left(  B\right)  _{\sigma}+\left(  1-\lambda\right)  H\left(
A|XJ\right)  _{\sigma}-H\left(  E|XJ\right)  _{\sigma}\\
&  =\log\left(  N+1\right)  +\left(  1-\lambda\right)  H\left(  A|XJ\right)
_{\sigma}-H\left(  E|XJ\right)  _{\sigma}\\
&  =\log\left(  N+1\right)  +\sum_{x}p_{X}\left(  x\right)  \left[  \left(
1-\lambda\right)  H\left(  A\right)  _{\psi_{x}}-H\left(  E\right)  _{\psi
_{x}}\right] \\
&  \leq\log\left(  N+1\right)  +\max_{x}\left[  \left(  1-\lambda\right)
H\left(  A\right)  _{\psi_{x}}-H\left(  E\right)  _{\psi_{x}}\right] \\
&  =\log\left(  N+1\right)  +\left(  1-\lambda\right)  H\left(  A\right)
_{\psi_{x}^{\ast}}-H\left(  E\right)  _{\psi_{x}^{\ast}}.
\end{align*}
We do not provide justifications for the above chain of inequalities because
it follows for similar reasons to the chain of inequalities in
Lemma~\ref{lem:parametrize-CEQ-cloning}.
\end{proof}

\smallskip

\begin{proof}
[Proof of CE trade-off in Theorem~\ref{thm:cloning-CEQ}]The proof follows by
noting that $I\left(  AX;B\right)  =H\left(  A|X\right)  +H\left(  B\right)
-H\left(  E|X\right)  $, $H\left(  A|X\right)  =H_{2}\left(  \mu\right)  $,
and that we have already computed $H\left(  B\right)  $ and $H\left(
E|X\right)  $ in the proof of the CQ trade-off in
Theorem~\ref{thm:cloning-CEQ}.
\end{proof}

Figure~\ref{fig:cloning-CQ-CE}\ plots the CQ\ and CE\ trade-off curves for a
$1\rightarrow N$ cloning channel, for $N=1$, $2$, $3$, $5$, $8$, $12$, and
$24$. The figure demonstrates that both classically-enhanced quantum coding
and entanglement-assisted classical coding beat a time-sharing strategy for a
cloning channel when $N>1$.

Figure~\ref{fig:EACQ-cloning}\ plots the full CQE capacity region for a
$1\rightarrow10$ cloning channel. The figure demonstrates that the
classically-enhanced father protocol, combined with entanglement distribution
and super-dense coding beats a time-sharing strategy because the first two
surfaces described in Section~\ref{sec:EACQ-review}\ are strictly concave for
the cloning channels when $N>1$.%
\begin{figure}
[ptb]
\begin{center}
\includegraphics[
natheight=4.370800in,
natwidth=5.821900in,
height=2.4215in,
width=3.4402in
]%
{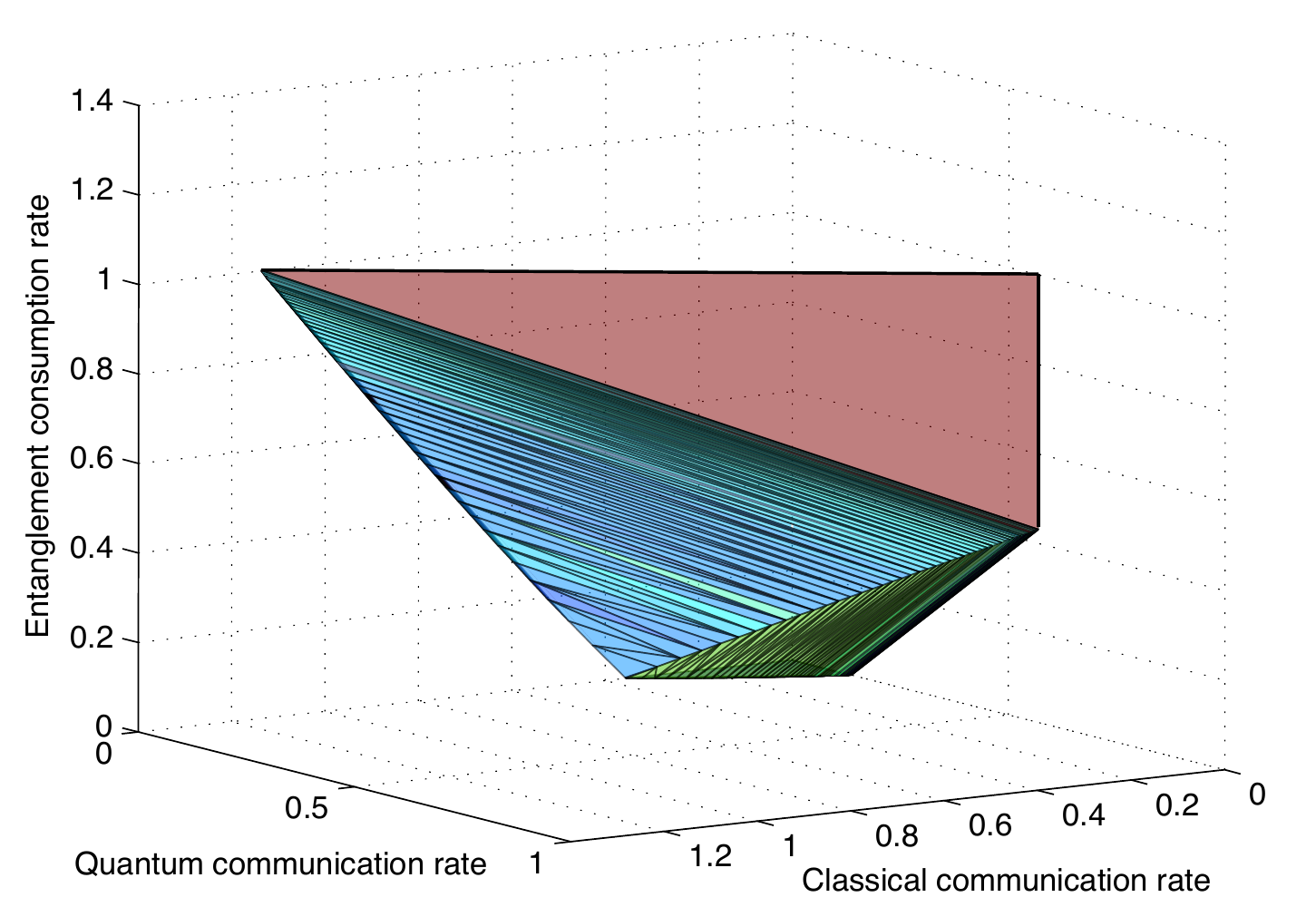}%
\caption{(Color online) The figure plots the CQE capacity region for a
$1\rightarrow10$ cloning channel. It features three distinct surfaces. The
first is the flat vertical plane that arises from the bound $R+2Q\leq
\log\left(  \frac{N+1}{N}\right)  +1$, which is the entanglement-assisted
classical capacity of a $1\rightarrow N$ cloning channel. The plane extends
infinitely upward because we can always achieve these rate triples simply by
wasting entanglement. The second surface is that below and to the left of the
plane, formed by combining the CE trade-off curve with the inverse of
super-dense coding, as described in Section~\ref{sec:EACQ-review}. The final
surface is that below and to the right of the plane, formed by combining the
CQ trade-off curve with the inverse of entanglement distribution, as described
in Section~\ref{sec:EACQ-review}. }%
\label{fig:EACQ-cloning}%
\end{center}
\end{figure}

\subsection{Unruh Channel}

\label{sec:dsunruh}We now compute the trade-off curves for the Unruh channel,
defined in (\ref{eq:Unruh-channel}). Capacity questions are directly linked to
those of the cloning channel because the mathematical structure of the output
of the Unruh channel is an infinite-dimensional block-diagonal density matrix,
with each block containing an occurence of a $1\rightarrow N$ cloning channel.
In fact, maximizing the rate of transmission is equivalent to maximizing it in
each block. This observation gives us the following theorem, and its proof
appears in Appendix~\ref{sec:cq-ce-unruh}.

\begin{theorem}
\label{thm:Unruh-CEQ}All points on the CQ trade-off curve for an Unruh channel
have the following form:%
\[
\left(  C_{\text{\emph{CQ}}}\left(  z,\mu\right)  ,\ Q_{\text{\emph{CQ}}%
}\left(  z,\mu\right)  \right)  ,
\]
and all points on the CE\ trade-off curve for an Unruh channel have the
following form:%
\[
\left(  C_{\text{\emph{CE}}}\left(  z,\mu\right)  ,\ E_{\text{\emph{CE}}%
}\left(  \mu\right)  \right)  ,
\]
where%
\begin{align*}
C_{\text{\emph{CQ}}}\left(  z,\mu\right)   &  \equiv1-\sum_{l=2}^{\infty}%
p_{l}\left(  z\right)  \log{(l-1)}\\
&  +\sum_{l=2}^{\infty}\frac{p_{l}\left(  z\right)  }{\Delta_{l-1}}\sum
_{i=0}^{l-1}\lambda_{i}^{\left(  l-1\right)  }\left(  \mu\right)  \log
{(}\lambda_{i}^{\left(  l-1\right)  }\left(  \mu\right)  {)},
\end{align*}%
\begin{align*}
Q_{\text{\emph{CQ}}}\left(  z,\mu\right)   &  \equiv-\sum_{l=2}^{\infty}%
\frac{p_{l}\left(  z\right)  }{\Delta_{l-1}}\sum_{i=0}^{l-1}\lambda
_{i}^{\left(  l-1\right)  }\left(  \mu\right)  \log{(\lambda_{i}^{\left(
l-1\right)  }\left(  \mu\right)  )}\\
&  +\sum_{l=2}^{\infty}\frac{p_{l}\left(  z\right)  }{\Delta_{l-1}}\sum
_{i=0}^{l-2}\eta_{i}^{\left(  l-1\right)  }\left(  \mu\right)  \log{(\eta
_{i}^{\left(  l-1\right)  }\left(  \mu\right)  )},
\end{align*}%
\begin{align*}
C_{\text{\emph{CE}}}\left(  z,\mu\right)   &  \equiv H_{2}\left(  \mu\right)
+1-\sum_{l=2}^{\infty}p_{l}\left(  z\right)  \log{(l-1)}\\
&  +\sum_{l=2}^{\infty}\frac{p_{l}\left(  z\right)  }{\Delta_{l-1}}\sum
_{i=0}^{l-2}\eta_{i}^{\left(  l-1\right)  }\left(  \mu\right)  \log{(\eta
_{i}^{\left(  l-1\right)  }\left(  \mu\right)  )},\\
E_{\text{\emph{CE}}}\left(  \mu\right)   &  \equiv H_{2}\left(  \mu\right)
,\\
\lambda_{i}^{\left(  l-1\right)  }(\mu)  &  \equiv(l-1-2i)\mu+i,\\
\eta_{i}^{\left(  l-1\right)  }(\mu)  &  \equiv(l-2-2i)\mu+i+1,
\end{align*}
and $H_{2}$ is the binary entropy function.
\end{theorem}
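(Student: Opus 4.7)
The plan is to exploit the block-diagonal form $\mathcal{N}_U = \bigoplus_{l=2}^{\infty} p_l(z)\,S_l$ and reduce everything to weighted averages of the $1\to(l-1)$ cloning-channel quantities of Theorem~\ref{thm:cloning-CEQ}. First, $\mathcal{N}_U$ is itself a Hadamard channel: each block $S_l$ has an entanglement-breaking complement, and since $(\mathrm{id}\otimes\mathcal{N}_U^c)(\Phi^{AA'}) = \bigoplus_l p_l(z)\,(\mathrm{id}\otimes S_l^c)(\Phi^{AA'})$ is a direct sum of separable states (hence separable on $A\otimes\bigoplus_l E_l$), the complement $\mathcal{N}_U^c$ is entanglement-breaking. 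Theorems~\ref{thm:CQ-single-letter} and~\ref{thm:CE-single-letter} then reduce the problem to optimizing the one-shot objectives $f_\lambda(\mathcal{N}_U)$ and $g_\lambda(\mathcal{N}_U)$.

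Next, the same two-state ensemble (\ref{cq-state-cloning-mu}) that parametrizes the cloning-channel curves will parametrize the Unruh-channel curves. The Unruh channel inherits qubit-unitary covariance block-wise: for any qubit unitary $V$, the output transforms by $\bigoplus_l R_V^{(l-1)}$ and the environment by $\bigoplus_l S_V^{(l-1)}$. The Pauli-twirl-and-concavity argument of Lemma~\ref{lem:parametrize-CEQ-cloning} (and its CE counterpart Lemma~\ref{lem:parametrize-EAC-cloning}) goes through verbatim with $\mathcal{N}_{\text{Cl}}$ replaced by $\mathcal{N}_U$: Pauli covariance leaves each conditional entropy invariant, and the twirled unconditional output $\mathcal{N}_U(I/2) = \bigoplus_l p_l(z)(I_l/l)$ upper-bounds $H(B)$ by concavity, pinning the optimizer to conditional states of the two-parameter form (\ref{eq:cloning-mu-1})--(\ref{eq:cloning-mu-2}).

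With the ensemble fixed, I will compute the information quantities using the block-diagonal entropy identity $H(\bigoplus_l p_l \rho_l) = H(\{p_l\}) + \sum_l p_l H(\rho_l)$. Because the weights $p_l(z)$ are input-independent, the term $H(\{p_l(z)\})$ is common to $H(B)$, $H(B|X)$, and $H(E|X)$ and therefore cancels inside $I(X;B)$, $I(A\rangle BX) = H(B|X) - H(E|X)$, and $I(AX;B) = H(A|X) + H(B) - H(E|X)$. Each quantity then reduces to a $p_l(z)$-weighted sum of the corresponding per-block quantity from Section~\ref{sec:dscl}: the block-$l$ conditional $B$-eigenvalues are $\lambda_i^{(l-1)}(\mu)/\Delta_{l-1}$, the conditional $E$-eigenvalues are $\eta_i^{(l-1)}(\mu)/\Delta_{l-1}$, the unconditional block-$l$ output is maximally mixed on its dimension-$l$ symmetric subspace, and $H(A|X) = H_2(\mu)$ independently of $l$. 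Using $\sum_i \lambda_i^{(l-1)}(\mu) = \sum_i \eta_i^{(l-1)}(\mu) = \Delta_{l-1}$ together with $\log\Delta_{l-1} = \log(l-1) + \log l - 1$ and $\sum_l p_l(z) = 1$, the $\log\Delta_{l-1}$ pieces combine with $\sum_l p_l(z)\log l$ to yield the constant $1 - \sum_l p_l(z)\log(l-1)$, and the remaining terms are exactly the unnormalized sums $\sum_i \lambda_i^{(l-1)}(\mu) \log\lambda_i^{(l-1)}(\mu)$ and $\sum_i \eta_i^{(l-1)}(\mu) \log\eta_i^{(l-1)}(\mu)$ of the theorem statement.

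The main obstacle is conceptual rather than computational: one must avoid any explicit truncation or limit when handling the infinite direct sum. This is absorbed into the covariance step above, which bounds the objective block-wise and exhibits a single two-parameter ensemble that simultaneously saturates the bound in every block, avoiding cross-block mixing. The rapid decay $p_l(z) = (1-z)^3 z^{l-2}\Delta_{l-1}$ for $z\in[0,1)$ then ensures convergence of the resulting series, so no additional analytic care is needed once the per-block identities of Theorem~\ref{thm:cloning-CEQ} are in hand.
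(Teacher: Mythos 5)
Your proposal is correct and follows essentially the same route as the paper's proof in Appendix~\ref{sec:cq-ce-unruh}: exploit the block-diagonal structure, reuse the two-parameter ensemble of Lemma~\ref{lem:parametrize-CEQ-cloning}, and compute $H(B)$, $H(B|X)$, and $H(E|X)$ blockwise, with the $\log\Delta_{l-1}$ terms recombining via $\sum_{i}\lambda_{i}^{(l-1)}(\mu)=\sum_{i}\eta_{i}^{(l-1)}(\mu)=\Delta_{l-1}$ into the constant $1-\sum_{l}p_{l}(z)\log(l-1)$. The only differences are presentational: you spell out the blockwise covariance/twirling argument and the entanglement-breaking property of $\mathcal{N}_{\text{U}}^{c}$, which the paper asserts in a sentence before the theorem, and you organize the entropy calculation through the direct-sum entropy identity rather than carrying the $p_{l}(z)/\Delta_{l-1}$ factors inside the logarithms as the paper does.
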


Figure~\ref{fig:unruh-plots}\ plots both the CQ and CE trade-off curves of the
Unruh channel for several values of the acceleration parameter $z=0$, $0.2$,
$0.4$, $0.6$, $0.8$, $0.95$. The figure demonstrates that both
classically-enhanced quantum coding and entanglement-assisted classical coding
beat a time-sharing strategy for an Unruh channel when $z>0$.

Figure~\ref{fig:unruh-triple} plots the full CQE capacity region for an Unruh
channel with acceleration parameter $z=0.95$. The figure demonstrates that the
classically-enhanced father protocol, combined with entanglement distribution
and super-dense coding beats a time-sharing strategy because the first two
surfaces described in Section~\ref{sec:EACQ-review}\ are strictly concave for
the Unruh channel when $z>0$.%
\begin{figure*}
[ptb]
\begin{center}
\includegraphics[
natheight=5.293500in,
natwidth=12.653900in,
height=2.9672in,
width=7.0534in
]%
{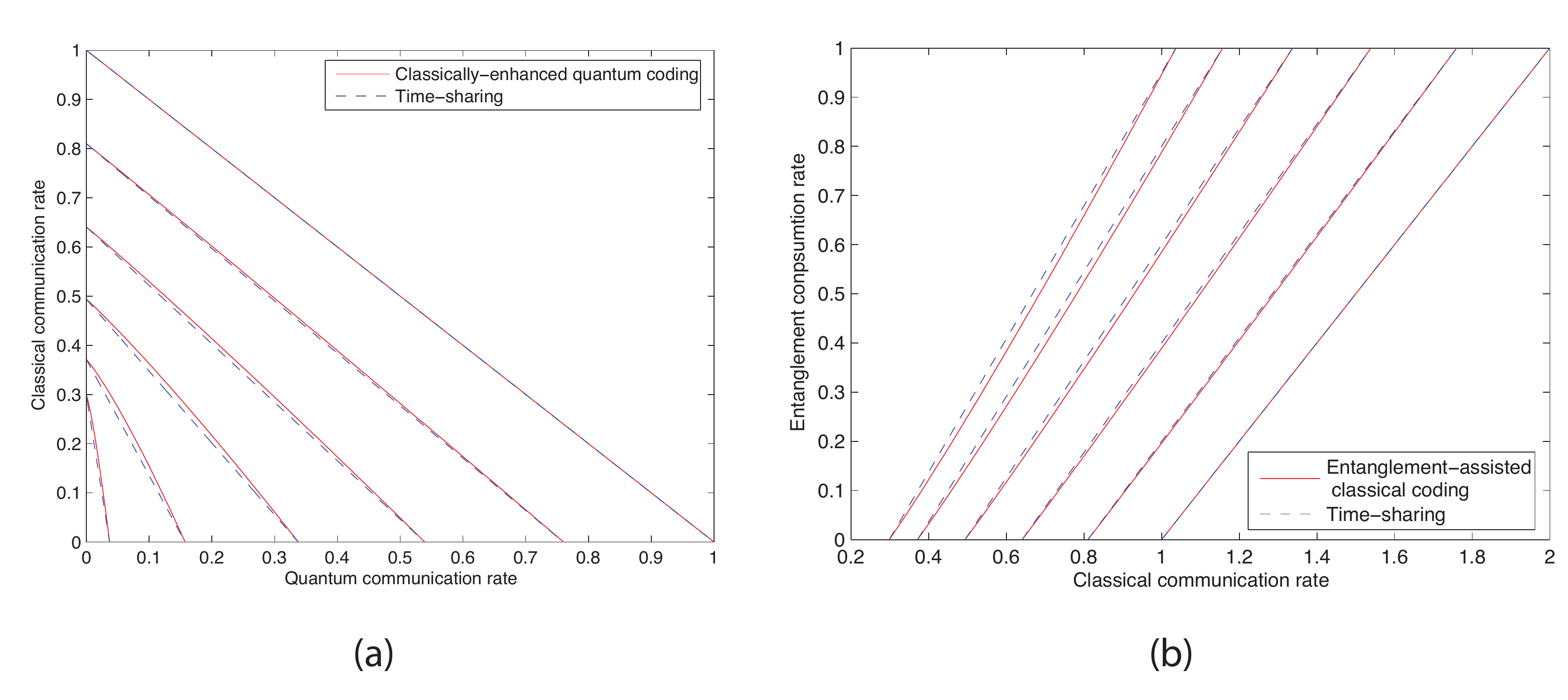}%
\caption{(Color online) Plot of the CQ trade-off curve (a) and the CE
trade-off curve (b) of an Unruh channel for $z=0$, $0.2$, $0.4$, $0.6$, $0.8$,
and $0.95$. The trade-off curves for $z=0$ correspond to those for the
noiseless qubit channel and are the rightmost trade-off curves in each plot.
In both plots, proceeding left from the $z=0$ curve, we obtain trade-off
curves for $z=0.2$, $0.4$, $0.6$, $0.8$, and $0.95$ and notice that they all
beat a time-sharing strategy by a larger relative proportion as $z$
increases.}%
\label{fig:unruh-plots}%
\end{center}
\end{figure*}
\begin{figure}
[ptb]
\begin{center}
\includegraphics[
natheight=4.226300in,
natwidth=5.540000in,
height=2.6308in,
width=3.4411in
]%
{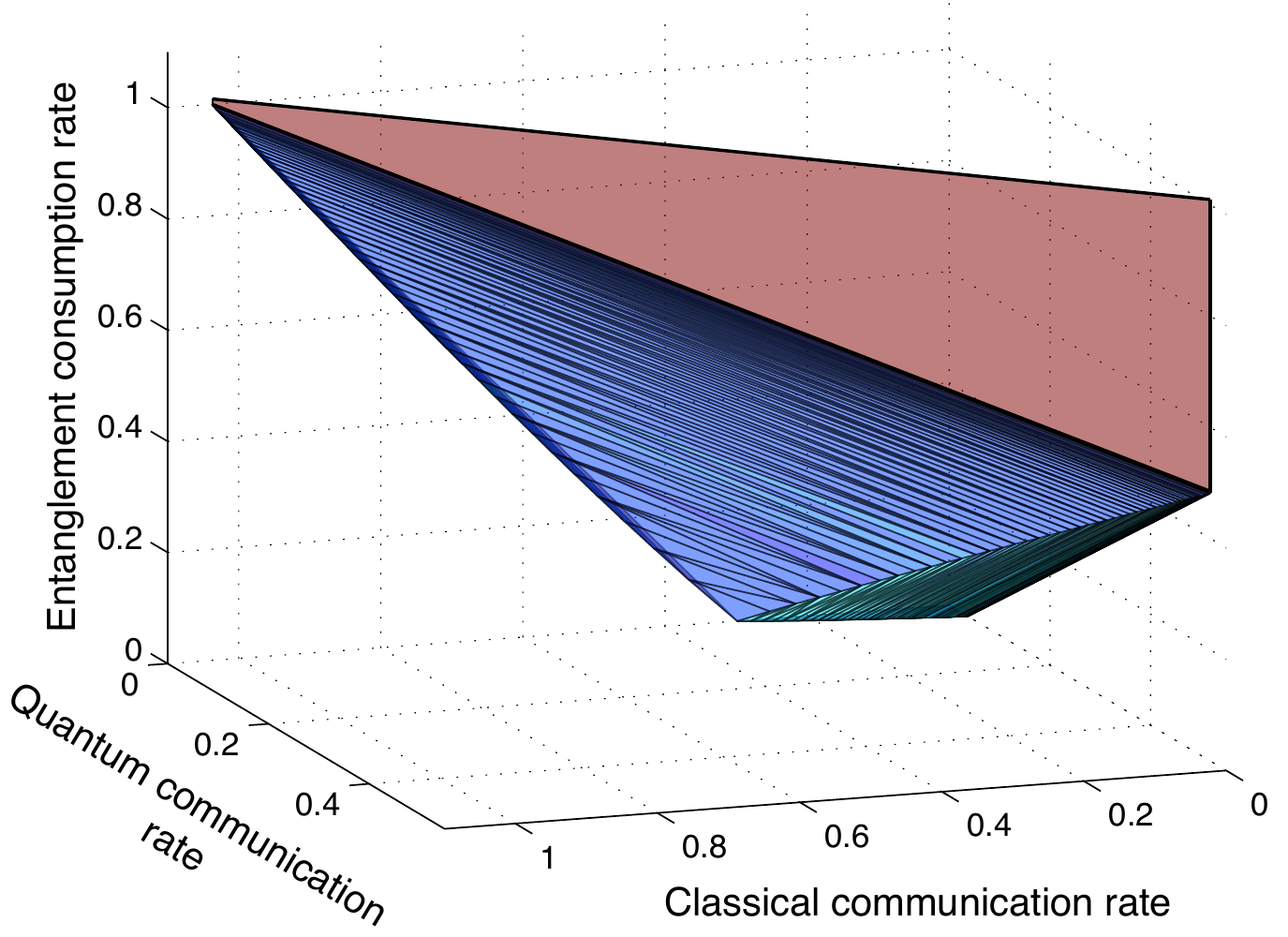}%
\caption{(Color online) The figure plots the CQE capacity region for an Unruh
channel with acceleration parameter $z=0.95$. It features three distinct
surfaces. The first is the flat vertical plane that arises from the bound
$R+2Q\leq\mathcal{C}_{\text{EAC}}$, where $\mathcal{C}_{\text{EAC}}$ is the
entanglement-assisted classical capacity of an Unruh channel. The plane
extends infinitely upward because we can always achieve these rate triples
simply by wasting entanglement. The second surface is that below and to the
left of the plane, formed by combining the CE trade-off curve with the inverse
of super-dense coding, as described in Section~\ref{sec:EACQ-review}. The
final surface is that below and to the right of the plane, formed by combining
the CQ trade-off curve with the inverse of entanglement distribution, as
described in Section~\ref{sec:EACQ-review}. }%
\label{fig:unruh-triple}%
\end{center}
\end{figure}

\section{Measuring the Gain over a Time-Sharing Strategy}

\label{sec:disc}Figures~\ref{fig:CQ-EC-dephasing}, \ref{fig:cloning-CQ-CE},
and \ref{fig:unruh-plots} demonstrate that classically-enhanced quantum coding
and entanglement-assisted classical coding both beat the time-sharing strategy
for the dephasing, cloning, and Unruh channels. Ref.~\cite{cmp2005dev}%
\ provided a simple way to compute the benefit of \textquotedblleft
specialized coding\textquotedblright\ over the time-sharing strategy, simply by
plotting the difference between a trade-off curve and the line of time-sharing
as a function of one of the rates in a trade-off scenario.

The above gain measure illustrates the benefit of specialized coding, but it
ignores the relative improvement that specialized coding may give over
time-sharing for very noisy channels. As a result, that gain measure tends to
zero as one of the capacities tends to zero and thus loses meaning in this
asymptotic limit. For example, consider a cloning channel with $N=1,000,000$.
This channel is particularly noisy for quantum transmission with a low quantum
capacity at approximately $1.5\times10^{-6}~$qubits / channel use, but the
classical capacity is approximately $0.27~$bits / channel use. Suppose that
the sender would like to transmit classical data at a rate of $0.165~$bits /
channel use in order to transmit more quantum information. With a time-sharing
strategy, the sender can transmit quantum data at approximately the rate
$5.9\times10^{-7}~$qubits / channel use, while classically-enhanced quantum
coding transmits quantum data at approximately the rate $7.2\times10^{-7}$
qubits / channel use. This improvement in transmission appears low in absolute
terms, but the relative increase of transmission is substantial, and a measure
that captures this relative increase may be more useful for studying the gain.

We suggest an alternate gain measure that highlights the relative improvement
and is simple to compute numerically for both the CQ\ and CE\ trade-off
curves. Let $A_{\text{CQ}}$ denote the area under the CQ~trade-off curve and
let $A_{\text{TSCQ}}$ denote the area under the line of time-sharing. Then the
relative gain $G_{\text{CQ}}$ for CQ~trading is the ratio of $A_{\text{CQ}}$
to $A_{\text{TSCQ}}$:%
\[
G_{\text{CQ}}\equiv\frac{A_{\text{CQ}}}{A_{\text{TSCQ}}}.
\]
The relative gain measure for the CE trade-off curve is similar. Let
$A_{\text{CE}}$ denote the area under the CE\ trade-off curve and let
$A_{\text{TSCE}}$ denote the area under the line of time-sharing. Then the
relative gain $G_{\text{CE}}$ for CE~trading is the ratio of $A_{\text{TSCQ}}$
to $A_{\text{CE}}$:%
\[
G_{\text{CE}}\equiv\frac{A_{\text{TSCE}}}{A_{\text{CE}}}.
\]
Each of the above relative gains exhibits non-trivial behavior even if one of
the capacities vanishes as the noise of a channel increases. These measures
are also average gains because the area involves an integration over all
points on the trade-off curve. One could generalize these relative gain
measures to the CQE~capacity region by taking the ratio of the volume enclosed
by the bounding surfaces for CQE~capacity region to the volume enclosed by
surfaces obtained by time-sharing.

Figure~\ref{fig:payoff} plots the relative gains~$G_{\text{CQ}}$ and
$G_{\text{CE}}$ as a function of the dephasing parameter $p$ for the dephasing
qubit channel, as a function of the acceleration parameter $z$ for the Unruh
channel, and as a function of the number of clones $N$ for the $1\rightarrow
N$ cloning channel. The accompanying caption features an interpretation of the
results.%
\begin{figure*}
[ptb]
\begin{center}
\includegraphics[
natheight=5.680100in,
natwidth=12.999900in,
height=3.0978in,
width=7.0534in
]%
{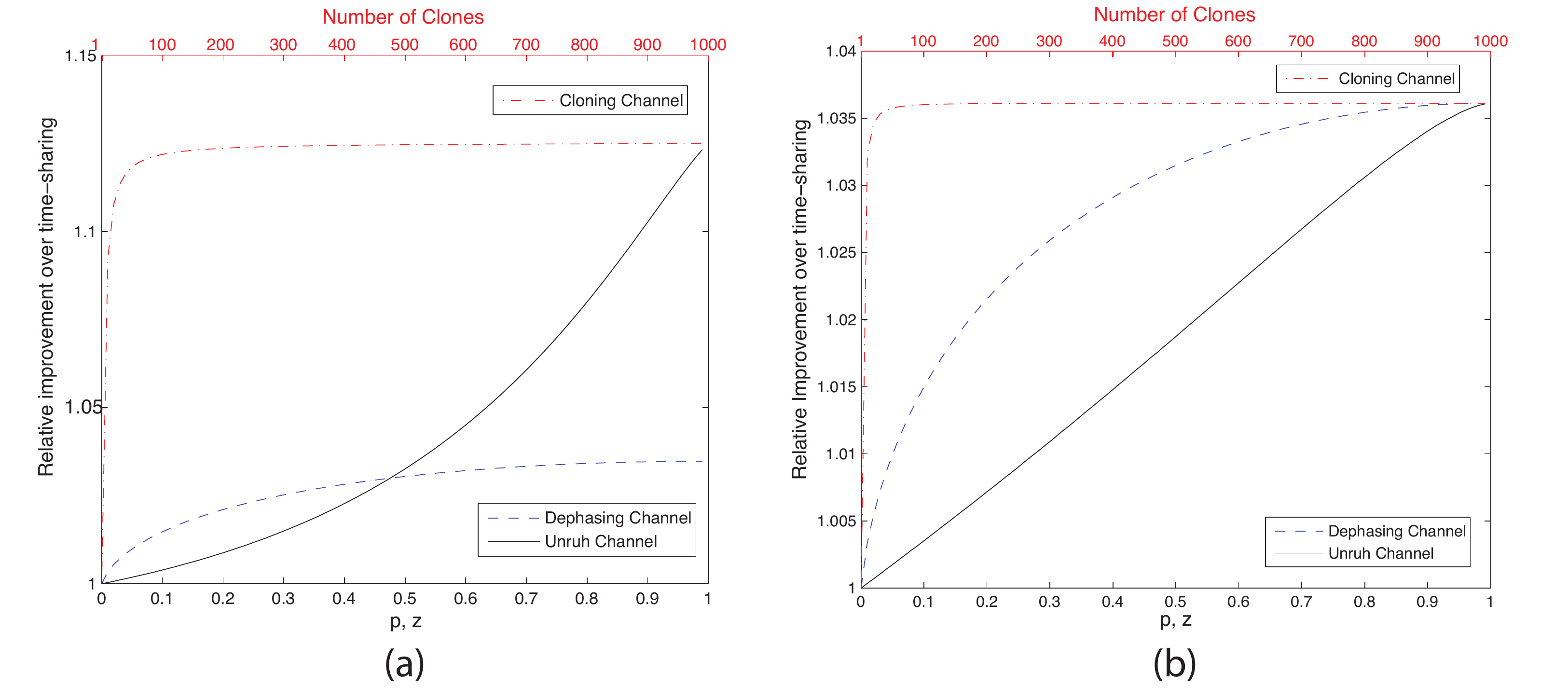}%
\caption{(Color online) The figures plot (a) the relative gain $G_{\text{CQ}}$
for the CQ trade-off curve and (b) the relative gain $G_{\text{CE}}$ for the
CE\ trade-off curve for the dephasing channel, the cloning channel, and the
Unruh channel. The figures plot these relative gains as a function of the
dephasing parameter $p\in\left[  0,1\right]  $ for the $p$-dephasing qubit
channel (bottom horizontal axis), as a function of the acceleration parameter
$z\in\left[  0,1\right]  $ for the Unruh channel (bottom horizontal axis), and
as a function of the number of clones$\ N$ for the $1\rightarrow N$ cloning
channel (top horizontal axis). The plot on the left demonstrates that the
relative gain $G_{\text{CQ}}$ for the cloning channels is best as $N$
increases, while the Unruh channel features an improved relative gain over a
dephasing channel for large accelerations. The plot on the right features
different behavior---the relative gain $G_{\text{CE}}$ of the dephasing
channel outperforms that for the Unruh channel if we consider the parameters
$p$ and $z$ on equal footing, in spite of their drastically different physical
interpretations.}%
\label{fig:payoff}%
\end{center}
\end{figure*}

One criterion that seems to be necessary in order to obtain a large relative
gain for the CQ\ trade-off curve is that the quantum capacity of the channel
should be much smaller than the classical capacity, so that the area between
the inner bounding line of time-sharing and the trade-off curve is relatively large.

\section{Conclusion}

We have proven that the CQE capacity region for all Hadamard channels admits a
single-letter characterization. Particular examples of the Hadamard channels
are generalized dephasing channels and cloning channels, and we have computed
exact formulas that specify their CQE capacity regions. Furthermore, we have
obtained expressions for the CQE capacity region of an Unruh channel because
of its close connection with the cloning channels. The classically-enhanced
father protocol beats a simple time-sharing strategy for all of these
channels, stressing the need for non-trivial coding techniques when trading
multiple resources.

It is interesting to ponder the reason why a particular channel obtains an
improvement over time-sharing. The relative improvements are most significant
for the CQ trade-off, in which case the cloning and Unruh channels exhibit
much more substantial gains than the dephasing channels. In retrospect, it is
perhaps surprising that the dephasing channels exhibit any improvement at all.
Since these channels can transmit classical data noiselessly, it would be
natural to expect that any optimal strategy for sending classical bits would
directly exploit this capability. For CQ trade-off coding, that would entail
allocating some fraction of channel uses to noiseless classical data
transmission and the rest to quantum, which is exactly the time-sharing
strategy. The existence of a nontrivial CQ trade-off indicates that this
strategy is actually not optimal. In contrast, the cloning and Unruh channels
are incapable of sending classical data noiselessly when $N>1$ or $z>0$ so any
communication strategy requires error correction with the attendant
opportunity for nontrivial trade-off coding.

Some future directions for this work are in order. It would be desirable to
discover other channels for which the full CQE\ capacity region
single-letterizes, but it is likely that the technique for proving
single-letterization would be completely different. The ideas exploited here
are that Hadamard channels are degradable and have entanglement-breaking
complementary channels, allowing us to generalize the Devetak-Shor proof
technique in Appendix~B of Ref.~\cite{cmp2005dev}. If other single-letter
examples do exist, one should then determine if such a channel obtains an
improvement over time-sharing and perhaps attempt to uncover a general method
that determines if a channel obtains a gain over time-sharing.

It may be interesting to explore the static case, where two parties share a
bipartite state and exploit this state and some noiseless resources to extract
other noiseless resources. One might consider bipartite states that arise from
Hadamard channels and attempt to single-letterize the static capacity region.
Hsieh and Wilde found formulas for the triple trade-off static capacity region
in Ref.~\cite{HW09}, but the task of single-letterization for the static case
is more difficult than that for the dynamic case because one must consider
quantum instruments applied to many copies of the bipartite state.

It would also be interesting to consider versions of the Unruh channel other
than the original definition in Ref.~\cite{BHP09}\ and determine if the
corresponding CQE capacity region can single-letterize (we refer to this
channel as \textquotedblleft the\textquotedblright\ Unruh channel, but it may
be more appropriate to consider it as \textquotedblleft an\textquotedblright%
\ Unruh channel). One can consider the CQE\ capacity region for a qudit Unruh
channel by exploiting some of the insights in Ref.~\cite{BHP10}. Finally, one
might consider a trade-off capacity region for an Unruh channel that includes
the resource of private quantum information, given that this noiseless
resource appears in relativistic quantum information theory, or one could
consider trading public classical information, private classical information,
and secret key by exploiting the ideas in
Refs.~\cite{hsieh:042306,PhysRevA.80.022306}.

\begin{acknowledgments}
The authors acknowledge useful discussions with Min-Hsiu Hsieh concerning the
CQ and CE trade-off curves and the CQE capacity region. K.~B.~and
P.~H.~acknowledge support from the Office of Naval Research under grant
No.~N000140811249. P.~H.~acknowledges support from the Canada Research Chairs
program, CIFAR, FQRNT, MITACS, NSERC, QuantumWorks, and the Sloan Foundation.
M.~M.~W.~acknowledges support from the MDEIE (Qu\'{e}bec) PSR-SIIRI
international collaboration grant.
\end{acknowledgments}

\appendix

\section{Single-letterization of the Plane in (\ref{eq:EAC-plane})}

\label{sec:proof-EAC-plane}The following additivity lemma aids in proving the
additivity of the plane in (\ref{eq:EAC-plane}).

\begin{lemma}
\label{lem:additivity-EA-plane}The following additivity relation holds for any
two quantum channels $\mathcal{N}_{1}$ and $\mathcal{N}_{2}$:%
\[
h\left(  \mathcal{N}_{1}\otimes\mathcal{N}_{2}\right)  =h\left(
\mathcal{N}_{1}\right)  +h\left(  \mathcal{N}_{2}\right)  .
\]

\end{lemma}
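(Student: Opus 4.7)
The plan is to reduce the claimed additivity to the already-known additivity of the entanglement-assisted (EA) classical capacity~\cite{ieee2002bennett}, by showing that the extra classical register $X$ in the definition of $h$ is actually superfluous: $h(\mathcal{N})=C_{\text{E}}(\mathcal{N})$ for every quantum channel $\mathcal{N}$, where $C_{\text{E}}(\mathcal{N})\equiv\max_{\phi}I(A;B)_{\mathcal{N}(\phi^{AA'})}$ is maximized over pure bipartite inputs $\phi^{AA'}$. Once this identity is in hand, the additivity of $h$ follows from that of $C_{\text{E}}$.

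First, I would unpack $I(AX;B)_\rho$ for an arbitrary state $\rho$ of the form in~(\ref{eq:CEQ-state}). Using the joint-entropy theorem together with the purity of each $\phi_x^{AA'}$ (so that $H(A)_{\phi_x}=H(A')_{\phi_x}$ and $H(AB)_{U_\mathcal{N}\phi_x U_\mathcal{N}^{\dagger}}=H(E)_{\mathcal{N}^c(\phi_x^{A'})}$), a short calculation gives
\[
I(AX;B)_\rho \;=\; H(B)_\rho \;+\; \sum_x p_X(x)\bigl[H(\rho_x^{A'})-H(\mathcal{N}^c(\rho_x^{A'}))\bigr],
\]
where $\rho_x^{A'}\equiv\text{Tr}_A\phi_x^{AA'}$ and $H(B)_\rho=H(\mathcal{N}(\bar{\rho}^{A'}))$ with $\bar{\rho}^{A'}\equiv\sum_x p_X(x)\,\rho_x^{A'}$.

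Second, I would compare this expression to the EA mutual-information functional evaluated on the ensemble average, $I(\bar\rho^{A'},\mathcal{N})\equiv H(\bar\rho^{A'})+H(\mathcal{N}(\bar\rho^{A'}))-H(\mathcal{N}^c(\bar\rho^{A'}))$. Direct subtraction yields
\[
I(\bar\rho^{A'},\mathcal{N}) - I(AX;B)_\rho \;=\; \chi\bigl(\{p_X(x),\rho_x^{A'}\}\bigr) - \chi\bigl(\{p_X(x),\mathcal{N}^c(\rho_x^{A'})\}\bigr),
\]
which is nonnegative by monotonicity of the Holevo quantity under the CPTP map $\mathcal{N}^c$. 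Therefore $I(AX;B)_\rho\le I(\bar\rho^{A'},\mathcal{N})\le C_{\text{E}}(\mathcal{N})$, so $h(\mathcal{N})\le C_{\text{E}}(\mathcal{N})$. The reverse inequality is trivial: restricting to a single-valued register $X$ recovers the EA formula. Hence $h(\mathcal{N})=C_{\text{E}}(\mathcal{N})$.

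Third, having made this identification I would simply invoke the additivity of the EA classical capacity of Bennett, Shor, Smolin, and Thapliyal~\cite{ieee2002bennett} to conclude that $h(\mathcal{N}_1\otimes\mathcal{N}_2)=C_{\text{E}}(\mathcal{N}_1\otimes\mathcal{N}_2)=C_{\text{E}}(\mathcal{N}_1)+C_{\text{E}}(\mathcal{N}_2)=h(\mathcal{N}_1)+h(\mathcal{N}_2)$. The only substantive step is the identification $h=C_{\text{E}}$; within it, the only subtlety is recognizing the gap between $I(\bar\rho^{A'},\mathcal{N})$ and $I(AX;B)_\rho$ as the data-processing slack for $\chi$ under the complementary channel $\mathcal{N}^c$. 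Everything else is Stinespring bookkeeping, and the additivity of $h$ then falls out immediately from the known additivity of $C_{\text{E}}$.
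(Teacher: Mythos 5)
Your proof is correct, but it takes a genuinely different route from the one written in the appendix. The paper's proof is a direct, self-contained entropic argument: it takes the maximizing state on $\mathcal{N}_1\otimes\mathcal{N}_2$, rewrites $I(AX;B_1B_2)_\rho$ as $H(B_1B_2|E_1E_2X)_\rho+H(B_1B_2)_\rho$, splits this with one application each of strong subadditivity and subadditivity, and then recognizes the two resulting pieces as quantities of the form $I(AA_2X;B_1)_\theta$ and $I(AA_1X;B_2)_\omega$, each bounded by the corresponding single-channel $h$. You instead prove the cleaner structural fact that the classical register is superfluous, $h(\mathcal{N})=C_{\text{E}}(\mathcal{N})$ --- with the key observation that the gap $I(\bar\rho^{A'},\mathcal{N})-I(AX;B)_\rho$ equals $\chi(\{p_X(x),\rho_x^{A'}\})-\chi(\{p_X(x),\mathcal{N}^c(\rho_x^{A'})\})\geq 0$ by data processing for the Holevo quantity under $\mathcal{N}^c$ --- and then imports the known additivity of the entanglement-assisted mutual-information functional from Ref.~\cite{ieee2002bennett}. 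Your identity and the sign of the slack both check out, and the reduction is exactly the shortcut the paper itself alludes to when it remarks that the lemma ``is a consequence of the single-letterization of the entanglement-assisted classical capacity'' but is proved directly ``for completeness.'' What your approach buys is the extra insight $h=C_{\text{E}}$ and a shorter argument; what the paper's approach buys is self-containedness and a proof whose structure deliberately mirrors the harder, Hadamard-specific additivity lemmas (Lemmas~\ref{lem:CEQ-base-case} and~\ref{lem:additivity-EAC}), where no such reduction to a known additive quantity is available.
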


\begin{proof}
The inequality $h(\mathcal{N}_{1}\otimes\mathcal{N}_{2})\geq h(\mathcal{N}%
_{1})+h(\mathcal{N}_{2})$ trivially holds for all quantum channels, because
the maximization on the RHS\ is a restriction of the maximization in the
LHS\ to a tensor product of states of the form in (\ref{eq:CEQ-state}).
Therefore, we prove the non-trivial inequality $h(\mathcal{N}_{1}%
\otimes\mathcal{N}_{2})\leq h(\mathcal{N}_{1})+h(\mathcal{N}_{2})$.

Let%
\begin{align*}
\psi^{XAA_{1}A_{2}}  &  \equiv\sum_{x}p_{X}\left(  x\right)  \proj{x}^{X}%
\ox\proj{\ph_x}^{AA_{1}A_{2}},\\
\theta^{XAB_{1}E_{1}A_{2}}  &  \equiv U_{1}\psi U_{1}^{\dag},\\
\omega^{XAA_{1}B_{2}E_{2}}  &  \equiv U_{2}\psi U_{2}^{\dag},\\
\rho^{XAB_{1}E_{1}B_{2}E_{2}}  &  \equiv(U_{1}\otimes U_{2})\psi(U_{1}\otimes
U_{2})^{\dag},
\end{align*}
where $U_{j}^{A_{j}\rightarrow B_{j}E_{j}}$ is the isometric extension of
$\mathcal{N}_{j}$. Suppose further that $\rho$ is the state that maximizes
$h\left(  \mathcal{N}_{1}\otimes\mathcal{N}_{2}\right)  $.

The following chain of inequalities holds for any two channels $\mathcal{N}%
_{1}$ and $\mathcal{N}_{2}$:%
\begin{widetext}%
\begin{align*}
h\left(  \mathcal{N}_{1}\otimes\mathcal{N}_{2}\right)   &  =I\left(
AX;B_{1}B_{2}\right)  _{\rho}\\
&  =H\left(  B_{1}B_{2}|E_{1}E_{2}X\right)  _{\rho}+H\left(  B_{1}%
B_{2}\right)  _{\rho}\\
&  \leq H\left(  B_{1}|E_{1}X\right)  _{\rho}+H\left(  B_{1}\right)  _{\rho
}+H\left(  B_{2}|E_{2}X\right)  _{\rho}+H\left(  B_{2}\right)  _{\rho},\\
&  =H\left(  B_{1}E_{1}|X\right)  _{\rho}-H\left(  E_{1}|X\right)  _{\rho
}+H\left(  B_{1}\right)  _{\rho}+H\left(  B_{2}E_{2}|X\right)  _{\rho
}-H\left(  E_{2}|X\right)  _{\rho}+H\left(  B_{2}\right)  _{\rho}\\
&  =H\left(  AA_{2}|X\right)  _{\theta}-H\left(  AA_{2}B_{1}|X\right)
_{\theta}+H\left(  B_{1}\right)  _{\theta}+H\left(  AA_{1}|X\right)  _{\omega
}-H\left(  AA_{1}B_{1}|X\right)  _{\omega}+H\left(  B_{2}\right)  _{\omega}\\
&  =I(AA_{2}X;B_{1})_{\theta}+I(AA_{1}X;B_{2})_{\omega}\\
&  \leq h\left(  \mathcal{N}_{1}\right)  +h\left(  \mathcal{N}_{2}\right)  .
\end{align*}%
\end{widetext}%
The first equality holds because $\rho$ is the state that maximizes $h\left(
\mathcal{N}_{1}\otimes\mathcal{N}_{2}\right)  $. The second equality follows
from straightforward entropic manipulations by noting that the state $\rho
$\ on systems $A$, $B_{1}$, $B_{2}$, $E_{1}$, and $E_{2}$ is pure when
conditioned on $X$. The first inequality follows from an application of strong
subadditivity~\cite{book2000mikeandike}\ and subadditivity. The next three
equalities follow by straightforward entropic manipulations. The last
inequality follows from the definition of $h$ in (\ref{eq:EA-bound}) and the
fact that both $\theta$ and $\omega$ are states of the form in
(\ref{eq:CEQ-state}).
\end{proof}

\begin{proof}
[Proof of Lemma~\ref{thm:EA-plane-single-letter}]The proof follows from
Lemma~\ref{lem:additivity-EA-plane} in the same way as
Corollary~\ref{thm:CEQ-single-letter} does from Lemma~\ref{lem:CEQ-base-case}.
\end{proof}

We remark that the above lemma follows from the single-letterization of the
entanglement-assisted classical capacity~\cite{ieee2002bennett}, but we have
included the result for completeness.

\section{Proof of the Single-letterization of the CE Trade-off Curve for
Hadamard channels}

\label{sec:CE-trade-off-curve}

\begin{lemma}
\label{claim:LShor}For any fixed $\lambda\geq0$, the function in
(\ref{eq:EAC-max-function}) leads to a point $(I(AX;B)_{\rho},H\left(
A|X\right)  _{\rho})$ on the one-shot CE trade-off curve in the sense of Theorem~\ref{thm:CE-single-letter}.
\end{lemma}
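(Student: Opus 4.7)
The plan is to prove this by contradiction, following the template set by the proof of Lemma~\ref{claim:Lmult}, but adapted to the CE trade-off scenario. The key asymmetry compared to the CQ case is that here we want to \emph{maximize} $C$ while \emph{minimizing} $E$, so the objective function $g_\lambda = I(AX;B) - \lambda H(A|X)$ has a negative sign in front of $H(A|X)$, reflecting the fact that entanglement is a resource we are trying to conserve rather than generate.

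Concretely, suppose $\rho^{XAB}$ is a state of the form in (\ref{eq:CEQ-state}) that maximizes (\ref{eq:EAC-max-function}) for some fixed $\lambda \geq 0$. Assume toward a contradiction that the point $(I(AX;B)_\rho, H(A|X)_\rho)$ does not lie on the one-shot CE trade-off curve. By definition of that curve, this means there exists some other state $\sigma^{XAB}$ of the same form such that either (i) $H(A|X)_\sigma = H(A|X)_\rho$ while $I(AX;B)_\sigma > I(AX;B)_\rho$, or (ii) $I(AX;B)_\sigma = I(AX;B)_\rho$ while $H(A|X)_\sigma < H(A|X)_\rho$. In either case, using $\lambda \geq 0$, one computes directly
\begin{equation*}
I(AX;B)_\sigma - \lambda H(A|X)_\sigma \;>\; I(AX;B)_\rho - \lambda H(A|X)_\rho,
\end{equation*}
which contradicts the assumed maximality of $\rho$. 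In case (ii) the inequality is strict only when $\lambda > 0$; if $\lambda = 0$ the objective is simply the entanglement-assisted classical information and any maximizer automatically sits at the endpoint of the trade-off curve where the entanglement consumption is unconstrained, so there is nothing to prove in that degenerate case.

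I do not expect any real obstacle here: the argument is a direct linear-programming style observation about supporting hyperplanes of the trade-off region, identical in spirit to Lemma~\ref{claim:Lmult}, with the only change being the sign on the coefficient of the entropy term. The more substantive work for the CE trade-off curve lies in Lemma~\ref{lem:max-CE-lambda} (which restricts the relevant range of $\lambda$ to $[0,1]$) and in the additivity argument that yields Corollary~\ref{thm:EAC-single-letter}; the present lemma just records the standard fact that maximizers of the weighted objective parametrize the boundary of the one-shot achievable region.
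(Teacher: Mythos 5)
Your proof is correct and follows essentially the same contradiction argument as the paper: a maximizer of $g_\lambda$ that did not lie on the trade-off curve could be strictly outperformed, contradicting maximality. You are in fact slightly more careful than the paper, which asserts the strict inequality for all $\lambda\geq 0$ even though in the fixed-$C$, smaller-$E$ case it is strict only for $\lambda>0$; your explicit handling of the degenerate $\lambda=0$ endpoint tidies that up.
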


\begin{proof}
We argue by contradiction. Suppose that $\rho^{XAB}$ maximizes
(\ref{eq:EAC-max-function}). Suppose further that it does not lead to a point
on the CE trade-off curve. That is, given the constraint $C=I(AX;B)_{\rho}$,
there exists some other state $\sigma^{XAB}$ of the form in
(\ref{eq:CEQ-state}) such that $I(AX;B)_{\sigma}=I(AX;B)_{\rho}=C$, but
$E=H(A|X)_{\sigma}<H(A|X)_{\rho}$. That is, the state $\sigma$ allows for
communication of as much classical information as the state $\rho$ does, but
consumes less entanglement. But then the following inequality holds for all
$\lambda\geq0$:%
\[
I(AX;B)_{\sigma}-\lambda H(A|X)_{\sigma}>I(AX;B)_{\rho}-\lambda H(A|X)_{\rho
},
\]
contradicting the fact that the state $\rho^{XAB}$ maximizes
(\ref{eq:EAC-max-function}).
\end{proof}

\begin{lemma}
\label{lem:max-CE-lambda}We get all points on the CE trade-off curve by
considering $0\leq\lambda\leq1$ in the maximization in
(\ref{eq:EAC-max-function}) because the maximization optimizes only the
classical capacity when $\lambda>1$.
\end{lemma}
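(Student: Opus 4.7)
The plan is to mirror the argument of Lemma~\ref{lem:max-CQ-lambda}, combining the conditional purity of $\rho^{ABE}$ given the classical variable $X$ with a von Neumann measurement on the reference system $A$. The key preliminary identity is that, for any state $\rho^{XABE}$ of the form in (\ref{eq:CEQ-state}), the tripartite state on $ABE$ is pure given $X$, so $H(AB|X) = H(E|X)$ and hence $H(B|AX) = H(E|X) - H(A|X)$. Substituting this into the objective yields
\begin{equation*}
I(AX;B)_\rho - \lambda H(A|X)_\rho = H(B)_\rho - (\lambda - 1) H(A|X)_\rho - H(E|X)_\rho.
\end{equation*}

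For $\lambda > 1$, the coefficient $-(\lambda - 1)$ is negative and $H(A|X)_\rho \geq 0$ since $X$ is classical; dropping this non-positive term gives the upper bound $H(B)_\rho - H(E|X)_\rho$. Now I perform an arbitrary von Neumann measurement on system $A$, producing a classical outcome $Y$ and a new state $\sigma^{XYBE}$. Because the measurement acts only on $A$, the reduced states on $B$ and on $EX$ are preserved: $H(B)_\sigma = H(B)_\rho$ and $H(E|X)_\sigma = H(E|X)_\rho$. Conditioning reduces entropy then gives $H(E|X)_\sigma \geq H(E|XY)_\sigma$, and since the measurement collapses the conditionally pure state on $ABE$ into a pure state on $BE$ given $(X,Y)$, one has $H(E|XY)_\sigma = H(B|XY)_\sigma$. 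Chaining these inequalities,
\begin{equation*}
H(B)_\rho - H(E|X)_\rho \leq H(B)_\sigma - H(B|XY)_\sigma = I(XY;B)_\sigma,
\end{equation*}
which is at most the Holevo/classical capacity $\chi(\mathcal{N}) = \max_{\rho'} I(X';B)_{\rho'}$, since $\sigma$ viewed with combined classical label $(X,Y)$ is an ordinary pure-state classical-quantum ensemble.

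Matching achievability is immediate: any $\rho$ whose purifications are products $\phi_x^{AA'} = \ket{0}^A \otimes \ket{\psi_x}^{A'}$ realizing the Holevo capacity has $H(A|X)_\rho = 0$ and $I(AX;B)_\rho = I(X;B)_\rho = \chi(\mathcal{N})$, so $g_\lambda(\mathcal{N}) = \chi(\mathcal{N})$ for every $\lambda > 1$; this corresponds to the single classical-capacity endpoint $(C,E) = (\chi(\mathcal{N}),0)$ of the CE trade-off curve.

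The only modest subtlety is justifying the final Holevo bound on $I(XY;B)_\sigma$: one must verify that $\sigma^{XYBE}$ genuinely arises from a classical-quantum ensemble of pure-state channel inputs, so that $I(XY;B)_\sigma$ is bounded by $\chi(\mathcal{N})$. This holds because a measurement of the reference system $A$ of the pure bipartite input $\phi_x^{AA'}$ collapses $A'$ to a conditional pure state $\ket{\varphi_{x,y}}^{A'}$, and the conditional state of $\sigma$ on $BE$ is precisely the channel isometry applied to this pure input.
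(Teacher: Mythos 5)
Your proposal is correct and follows essentially the same route as the paper's own proof: the paper likewise rewrites the objective using conditional purity of $ABE$ given $X$ (it writes the coefficient of $(1-\lambda)$ as $H(BE|X)$ rather than the equal quantity $H(A|X)$), drops that non-positive contribution for $\lambda>1$, measures $A$ to get $Y$, applies conditioning-reduces-entropy and the purity of $BE$ given $(X,Y)$, and bounds the result by $I(XY;B)_\sigma\leq\max_{\rho}I(X;B)_{\rho}$. Your closing remarks on achievability and on why $\sigma$ is a legitimate pure-state classical-quantum ensemble are sound and slightly more explicit than the paper, but do not change the argument.
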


\begin{proof}
Consider a state $\rho^{XABE}$ of the form in (\ref{eq:CEQ-state}). Suppose
that we perform a von Neumann measurement of the system $A$, resulting in a
classical variable$~Y$, and let $\sigma^{XYBE}$ denote the resulting state.
The following chain of inequalities then holds for all $\lambda>1$:%
\begin{align*}
&  I(AX;B)_{\rho}-\lambda H(A|X)_{\rho}\\
&  =H\left(  A|X\right)  _{\rho}-H\left(  E|X\right)  _{\rho}+H\left(
B\right)  _{\rho}-\lambda H(A|X)_{\rho}\\
&  =\left(  1-\lambda\right)  H\left(  BE|X\right)  _{\rho}-H\left(
E|X\right)  _{\rho}+H\left(  B\right)  _{\rho}\\
&  =\left(  1-\lambda\right)  H\left(  BE|X\right)  _{\sigma}-H\left(
E|X\right)  _{\sigma}+H\left(  B\right)  _{\sigma}\\
&  \leq H\left(  B\right)  _{\sigma}-H\left(  E|X\right)  _{\sigma}\\
&  \leq H\left(  B\right)  _{\sigma}-H\left(  E|XY\right)  _{\sigma}\\
&  =H\left(  B\right)  _{\sigma}-H\left(  B|XY\right)  _{\sigma}\\
&  =I\left(  XY;B\right)  _{\sigma}.
\end{align*}
The first equality follows from definitions and from the equality $H\left(
AB|X\right)  _{\rho}=H\left(  E|X\right)  _{\rho}$ for the state $\rho^{XABE}%
$, and the second equality follows from algebra and the fact that $H\left(
BE|X\right)  _{\rho}=H\left(  A|X\right)  _{\rho}$. The third equality follows
because the von Neumann measurement does not affect the systems in the
entropic quantities. The first inequality follows because $\lambda>1$ and the
entropy $H\left(  BE|X\right)  _{\rho}$ is always positive. The second
inequality follows because conditioning reduces entropy. The fourth equality
follows because the reduced state of $\sigma^{XYBE}$ on systems $B$ and $E$ is
pure when conditioned on both$~X$ and$~Y$. The last equality follows from the
definition of the quantum mutual information.

Thus, it becomes clear that the maximization of the original quantity when
$\lambda>1$ is always less than the classical capacity because$~I\left(
XY;B\right)  _{\sigma}\leq\max_{\rho}I\left(  X;B\right)  $. It then follows
that the trade-off curve really occurs for the interval $0\leq\lambda\leq1$.
\end{proof}

The following lemma is the crucial one that leads to our main result in this
section:\ the single-letterization of the CE trade-off curve for Hadamard channels.

\begin{lemma}
\label{lem:additivity-EAC}The following additivity relation holds for a
Hadamard channel $\mathcal{N}_{1}$ and any other channel $\mathcal{N}_{2}$:%
\[
g_{\lambda}(\mathcal{N}_{1}\otimes\mathcal{N}_{2})=g_{\lambda}(\mathcal{N}%
_{1})+g_{\lambda}(\mathcal{N}_{2})
\]

\end{lemma}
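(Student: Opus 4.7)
The plan is to parallel the proof of Lemma~\ref{lem:CEQ-base-case}, since $g_\lambda$ plays the same role for the CE trade-off that $f_\lambda$ plays for the CQ trade-off. The inequality $g_\lambda(\mathcal{N}_1\otimes\mathcal{N}_2)\geq g_\lambda(\mathcal{N}_1)+g_\lambda(\mathcal{N}_2)$ holds for any two channels by restricting the joint maximization to tensor-product input states, so all of the work is in the reverse inequality.

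I would introduce the same auxiliary states $\psi^{XAA_1A_2}$, $\theta^{XAB_1E_1A_2}$, $\rho^{XAB_1E_1B_2E_2}$ and $\sigma^{XYAE_1B_2E_2}=\mathcal{D}_1(\rho)$ as in that proof, where $\mathcal{D}_1\colon B_1\to Y$ is the measurement stage of the Hadamard degrading map $\mathcal{D}=\mathcal{D}_2\circ\mathcal{D}_1$ and $Y$ is classical, and take $\rho$ to be an optimizer for $g_\lambda(\mathcal{N}_1\otimes\mathcal{N}_2)$. Standard entropic manipulations give
\[
g_\lambda(\mathcal{N}_1\otimes\mathcal{N}_2)=H(B_1B_2)_\rho+(1-\lambda)H(A|X)_\rho-H(E_1E_2|X)_\rho,
\]
and the crucial first step is to replace $H(A|X)_\rho$ with $H(B_1B_2E_1E_2|X)_\rho$ via conditional purity of $\rho$ on $AB_1B_2E_1E_2$ given $X$. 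Subadditivity on $H(B_1B_2)$ together with the chain rule applied to $H(B_1B_2E_1E_2|X)$ and to $H(E_1E_2|X)$ then split the expression into a channel-$1$ piece $H(B_1)+(1-\lambda)H(B_1E_1|X)-H(E_1|X)$ and a channel-$2$ piece $H(B_2)+(1-\lambda)H(B_2E_2|B_1E_1X)-H(E_2|E_1X)$, both evaluated on $\rho$.

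The channel-$1$ piece is handled by purity of $\theta^{AA_2B_1E_1}$ given $X$, which yields $H(B_1E_1|X)_\theta=H(AA_2|X)_\theta$; the piece then equals the $g_\lambda(\mathcal{N}_1)$ objective on the valid input state $\theta$ (with $AA_2$ as the joint purifying system for channel~$1$), and so is bounded by $g_\lambda(\mathcal{N}_1)$. For the channel-$2$ piece the Hadamard structure of $\mathcal{N}_1$ is used to push the quantum conditioning on $B_1E_1$ onto the classical register $XY$ of $\sigma$: monotonicity of conditional entropy under $\mathcal{D}_1$ followed by strong subadditivity yields $H(B_2E_2|B_1E_1X)_\rho\leq H(B_2E_2|YE_1X)_\sigma\leq H(B_2E_2|XY)_\sigma$, and purity of $\sigma^{AE_1B_2E_2}$ given $XY$ converts the right side into $H(AE_1|XY)_\sigma$. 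In the opposite direction, the degradability identity $\rho^{XE_1E_2}=\mathcal{D}_2(\sigma^{XYE_2})$ together with monotonicity of conditional entropy under $\mathcal{D}_2$ gives $H(E_2|E_1X)_\rho\geq H(E_2|XY)_\sigma$. Since $0\leq 1-\lambda\leq 1$, both inequalities have the right sign, and the channel-$2$ piece is bounded by $H(B_2)_\sigma+(1-\lambda)H(AE_1|XY)_\sigma-H(E_2|XY)_\sigma$, which is the $g_\lambda(\mathcal{N}_2)$ objective on the valid input state $\sigma$ for channel~$2$ (classical register $XY$, purifying system $AE_1$), hence at most $g_\lambda(\mathcal{N}_2)$.

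The main obstacle compared with the proof of Lemma~\ref{lem:CEQ-base-case} is the extra $H(A|X)_\rho$ contribution, which has no direct analogue there. The trick that makes everything fit is the conditional-purity substitution $H(A|X)_\rho=H(B_1B_2E_1E_2|X)_\rho$, which converts this term into exactly the right additional pieces so that, after the usual chain-rule decomposition and the two Hadamard degradability manipulations, the remaining conditional entropies on the purifying systems reorganize cleanly into $H(AA_2|X)_\theta$ on the channel-$1$ side and $H(AE_1|XY)_\sigma$ on the channel-$2$ side, matching the two $g_\lambda$ objectives exactly.
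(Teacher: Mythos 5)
Your proposal is correct and follows essentially the same route as the paper's proof: the same auxiliary states $\psi$, $\theta$, $\rho$, $\sigma=\mathcal{D}_1(\rho)$, the same conditional-purity substitution $H(A|X)_\rho=H(B_1B_2E_1E_2|X)_\rho$, the same chain-rule split, the same two monotonicity steps under $\mathcal{D}_1$ and $\mathcal{D}_2$ exploiting the Hadamard degrading structure, and the same identification of the two resulting pieces with the $g_\lambda$ objectives on $\theta$ (purifier $AA_2$) and $\sigma$ (classical register $XY$, purifier $AE_1$). The only cosmetic difference is that you invoke subadditivity for $H(B_1B_2)$ directly where the paper uses the chain rule followed by ``conditioning does not increase entropy.''
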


\begin{proof}
The inequality $g_{\lambda}(\mathcal{N}_{1}\otimes\mathcal{N}_{2})\geq
g_{\lambda}(\mathcal{N}_{1})+g_{\lambda}(\mathcal{N}_{2})$ trivially holds for
all quantum channels, because the maximization on the RHS\ is a restriction of
the maximization in the LHS\ to a tensor product of states of the form in
(\ref{eq:CEQ-state}). Therefore, we prove the non-trivial inequality
$g_{\lambda}(\mathcal{N}_{1}\otimes\mathcal{N}_{2})\leq g_{\lambda
}(\mathcal{N}_{1})+g_{\lambda}(\mathcal{N}_{2})$ when $\mathcal{N}_{1}$ is a
Hadamard channel.

The channels $\mathcal{N}_{1}^{A_{1}\rightarrow B_{1}}$ and $\mathcal{N}%
_{2}^{A_{2}\rightarrow B_{2}}$ and their respective complementary channels
$\left(  \mathcal{N}_{1}^{c}\right)  ^{A_{1}\rightarrow E_{1}}$ and $\left(
\mathcal{N}_{2}^{c}\right)  ^{A_{2}\rightarrow E_{2}}$ have the same
properties as they do in the proof of Theorem~\ref{thm:CEQ-single-letter}. The
state that is the output of the channels is the state $\rho^{XAB_{1}E_{1}%
B_{2}E_{2}}$ in (\ref{eq:double-CEQ-state}), but we now define it to be the
state that maximizes $g_{\lambda}(\mathcal{N}_{1}\otimes\mathcal{N}_{2})$.
Define $\theta$ as before and $\sigma$ again to be the state after processing
system $B_{1}$ of $\rho$ with $\mathcal{D}_{1}$.

The following chain of inequalities holds when $\lambda\leq1$:%
\begin{widetext}%
\begin{align*}
g_{\lambda}(\mathcal{N}_{1}\otimes\mathcal{N}_{2})  &  =I\left(  AX;B_{1}%
B_{2}\right)  _{\rho}-\lambda H\left(  A|X\right)  _{\rho}\\
&  =\left(  1-\lambda\right)  H\left(  B_{1}B_{2}E_{1}E_{2}|X\right)  _{\rho
}-H\left(  E_{1}E_{2}|X\right)  _{\rho}+H\left(  B_{1}B_{2}\right)  _{\rho}\\
&  =\left(  1-\lambda\right)  H\left(  B_{1}E_{1}|X\right)  _{\rho}-H\left(
E_{1}|X\right)  _{\rho}+H\left(  B_{1}\right)  _{\rho}+\left(  1-\lambda
\right)  H\left(  B_{2}E_{2}|B_{1}E_{1}X\right)  _{\rho}-H\left(  E_{2}%
|E_{1}X\right)  _{\rho}+H\left(  B_{2}|B_{1}\right)  _{\rho}\\
&  \leq\left(  1-\lambda\right)  H\left(  B_{1}E_{1}|X\right)  _{\rho
}-H\left(  E_{1}|X\right)  _{\rho}+H\left(  B_{1}\right)  _{\rho}+\left(
1-\lambda\right)  H\left(  B_{2}E_{2}|YX\right)  _{\sigma}-H\left(
E_{2}|YX\right)  _{\sigma}+H\left(  B_{2}\right)  _{\sigma}\\
&  =\left(  1-\lambda\right)  H\left(  AA_{2}|X\right)  _{\theta}-H\left(
AA_{2}B_{1}|X\right)  _{\theta}+H\left(  B_{1}\right)  _{\theta}+\left(
1-\lambda\right)  H\left(  AE_{1}|YX\right)  _{\sigma}-H\left(  AE_{1}%
B_{2}|YX\right)  _{\sigma}+H\left(  B_{2}\right)  _{\sigma}\\
&  =\left[  I\left(  AA_{2}X;B_{1}\right)  _{\theta}-\lambda H\left(
AA_{2}|X\right)  _{\theta}\right]  +\left[  I\left(  AE_{1}XY;B_{2}\right)
_{\sigma}-\lambda H\left(  AE_{1}|XY\right)  _{\sigma}\right] \\
&  \leq g_{\lambda}\left(  \mathcal{N}_{1}\right)  +g_{\lambda}\left(
\mathcal{N}_{2}\right)  .
\end{align*}%
\end{widetext}%
The first equality follows because $\rho$ is the state that maximizes
$g_{\lambda}(\mathcal{N}_{1}\otimes\mathcal{N}_{2})$. The second equality
follows from entropic manipulations. The third equality follows from the chain
rule. The first and crucial inequality follows from monotonicity of the
conditional entropy $H\left(  B_{2}E_{2}|B_{1}E_{1}X\right)  _{\rho}$\ under
the map $\mathcal{D}_{1}$ and the discarding of system $E_{1}$, monotonicity
of the conditional entropy $H\left(  E_{2}|E_{1}X\right)  _{\rho}$ under the
map $\mathcal{D}_{2}$, and conditioning does not increase entropy. The fourth
equality follows because the reduced state of $\theta$ on systems $A$, $A_{2}%
$, $B_{1}$, and $E_{1}$ is pure when conditioned on $X$, and the reduced state
of $\sigma$ on systems $A$, $E_{1}$, $B_{2}$, and $E_{2}$ is pure when
conditioned on both$~X$ and$~Y$. The fifth equality follows from entropic
manipulations, and the final inequality follows because $\theta$ and $\sigma$
are both states of the form in (\ref{eq:CEQ-state}).
\end{proof}

\begin{corollary}
\label{thm:EAC-single-letter}The one-shot CE trade-off curve is equal to the
regularized CE\ trade-off curve when the noisy quantum channel $\mathcal{N}$
is a Hadamard channel:%
\[
g_{\lambda}\left(  \mathcal{N}^{\otimes n}\right)  =ng_{\lambda}\left(
\mathcal{N}\right)  .
\]

\end{corollary}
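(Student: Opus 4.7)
The plan is to mirror exactly the induction proof used for Corollary~\ref{thm:CEQ-single-letter}, which treats the CQ trade-off curve, and apply it to the CE trade-off curve instead. The only substantive difference between the two statements is that the quantity being regularized is $g_{\lambda}$ rather than $f_{\lambda}$, and the additivity lemma that drives the argument is Lemma~\ref{lem:additivity-EAC} in place of Lemma~\ref{lem:CEQ-base-case}. Both additivity lemmas have the same structural form: they assert that $g_{\lambda}(\mathcal{N}_{1}\otimes\mathcal{N}_{2})=g_{\lambda}(\mathcal{N}_{1})+g_{\lambda}(\mathcal{N}_{2})$ whenever $\mathcal{N}_{1}$ is Hadamard (with $\mathcal{N}_{2}$ arbitrary), so the induction transfers verbatim.

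First I would handle the base case $n=1$, which is a tautology. Next I would carry out the inductive step: assume $g_{\lambda}(\mathcal{N}^{\otimes n})=ng_{\lambda}(\mathcal{N})$ and write
\[
g_{\lambda}(\mathcal{N}^{\otimes n+1})=g_{\lambda}(\mathcal{N}\otimes\mathcal{N}^{\otimes n}).
\]
Since $\mathcal{N}$ is by hypothesis a Hadamard channel and $\mathcal{N}^{\otimes n}$ is a legitimate quantum channel (it need not itself be Hadamard, but Lemma~\ref{lem:additivity-EAC} does not require this of the second tensor factor), one application of Lemma~\ref{lem:additivity-EAC} gives
\[
g_{\lambda}(\mathcal{N}\otimes\mathcal{N}^{\otimes n})=g_{\lambda}(\mathcal{N})+g_{\lambda}(\mathcal{N}^{\otimes n}),
\]
and the induction hypothesis turns the right-hand side into $(n+1)g_{\lambda}(\mathcal{N})$, completing the step.

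There is essentially no obstacle of substance at this stage, because all of the conceptual and calculational work has already been done in Lemma~\ref{lem:additivity-EAC}. The one subtle point worth flagging explicitly is the asymmetry of Lemma~\ref{lem:additivity-EAC}: it requires the Hadamard property only of one factor, which is precisely what makes the induction go through by peeling off a single copy of $\mathcal{N}$ at each step rather than needing to know that $\mathcal{N}^{\otimes n}$ is itself Hadamard. If one attempted instead to peel off the Hadamard channel from the \emph{right} of the tensor product, one would obtain the same bound after a trivial relabeling, so the direction of the peel is immaterial. This completes the proposed proof.
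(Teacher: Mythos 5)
Your proof is correct and is essentially identical to the paper's: the paper also proves this corollary by the same induction on $n$ used for Corollary~\ref{thm:CEQ-single-letter}, invoking Lemma~\ref{lem:additivity-EAC} (which requires only the first tensor factor to be Hadamard) to peel off one copy of $\mathcal{N}$ at each step. Your explicit remark about the asymmetry of the additivity lemma is exactly the point that makes the induction work.
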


\begin{proof}
The proof exploits the same induction technique that
Corollary~\ref{thm:CEQ-single-letter}\ does, but it applies the result of
Lemma~\ref{lem:additivity-EAC}.
\end{proof}

\section{On the parametrization of the trade-off curve}
\label{sec:continuity-trade-off-curve}
\begin{lemma}
$\lambda$ parametrizes all points on the CQ\ and CE\ trade-off
curves with the possible exception of those lying on segments of
constant slope.
\end{lemma}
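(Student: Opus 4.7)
The plan is to leverage the convexity of the trade-off region and then invoke a supporting hyperplane argument. First, I would argue that the one-shot CQ trade-off region, defined as the set of $(C,Q)$ pairs satisfying $C \leq I(X;B)_{\rho}$ and $Q \leq I(A\rangle BX)_{\rho}$ for some state $\rho$ of the form in (\ref{eq:CEQ-state}), is a closed convex subset of $\mathbb{R}^2$. Convexity follows from a standard time-sharing construction at the level of classical-quantum states: given two feasible states with classical registers $X_0$ and $X_1$, and a mixing parameter $t \in [0,1]$, one forms a new state with augmented classical register $X'' = (T, X_T)$, where $T$ is a fair flag with $\Pr(T=0)=t$. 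A direct expansion using the joint entropy theorem shows that the resulting pair $(I(X'';B), I(A\rangle BX''))$ dominates the convex combination $(tC_0+(1-t)C_1, tQ_0+(1-t)Q_1)$, since the extra term $I(T;B) \geq 0$ only helps the classical rate.

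Once the region is convex, its upper-right boundary is a concave curve. The function $f_{\lambda}$ computes $\max\{C + \lambda Q : (C,Q) \in \text{region}\}$, so for each $\lambda \geq 1$ the argmax is exactly the intersection of the region with the supporting line $C + \lambda Q = f_{\lambda}(\mathcal{N})$. By the supporting hyperplane theorem applied to a closed convex set in $\mathbb{R}^2$, every boundary point $(C_0,Q_0)$ admits at least one supporting line. Provided this line is neither horizontal nor vertical (the excluded cases corresponding to the endpoints of the trade-off curve, namely the pure classical capacity and the pure quantum capacity points, which are already trivially parametrized by $\lambda \to 1$ and $\lambda \to \infty$ respectively), its slope has the form $-1/\lambda$ for some $\lambda \geq 1$, and $(C_0,Q_0)$ lies in the argmax set for $f_{\lambda}$.

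The only way a boundary point $(C_0,Q_0)$ can fail to be singled out by a value of $\lambda$ is if the supporting line at $(C_0,Q_0)$ is tangent along a non-degenerate line segment --- that is, if $(C_0,Q_0)$ lies in the interior of a flat segment of slope $-1/\lambda_0$. In that case all points on the segment share the same value of $\lambda_0$, and only by selecting among the multiple maximizers for $f_{\lambda_0}$ can one recover the interior points; in contrast, a strictly concave portion of the boundary corresponds to a strictly monotone map $\lambda \mapsto (C^*(\lambda), Q^*(\lambda))$, and each such point is uniquely parametrized by its own $\lambda$. This establishes the claim for the CQ curve, and the argument for the CE curve is identical with $g_{\lambda}$ and supporting lines of the form $C - \lambda E = \text{const}$ for $\lambda \in [0,1]$.

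The main obstacle is being careful about the endpoints and the uniqueness of supporting slopes. In particular, one should verify that (i) the supremum defining $f_{\lambda}$ is attained (so that the argmax is nonempty), which follows from compactness of the set of feasible input states after bounding the cardinality of $X$ via a Carath\'eodory-type argument as in Ref.~\cite{cmp2005dev}, and (ii) the convex hull of the $\lambda$-parametrized points indeed reconstructs any flat segment by connecting its endpoints, which is immediate once one observes that both endpoints are limits of strictly concave neighborhoods under variation of $\lambda$. With these ingredients in place, the parametrization captures the entire trade-off curve up to interiors of flat segments, as claimed.
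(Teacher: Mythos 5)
Your argument is correct and follows essentially the same route as the paper's Appendix~C: both rest on the convexity of the trade-off region, identify the maximizers of $C+\lambda Q$ (resp.\ $C-\lambda E$) with boundary points admitting a supporting line of the corresponding slope, and observe that only interiors of flat segments can escape the parametrization, being recovered by the convex hull. The paper phrases this via one-sided directional derivatives of the convex graph rather than the supporting hyperplane theorem, but the substance is identical.
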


\begin{proof}
We prove the lemma for the case of the CQ trade-off curve. The proof
for the CE trade-off curve is similar. Let $(C(t),Q(t))$ for $0 \leq
t \leq 1$ be a parametrization of the trade-off curve with $C(0)$
equal to the classical capacity and $Q(1)$ equal to the quantum
capacity. The function $C\left(  t\right)  $ is monotonically
decreasing and the function $Q\left( t\right) $ is monotonically
increasing. The graph of the trade-off curve is convex and,
therefore, has one-sided directional derivatives at all points~\cite{RV73}. It
is also monotonically decreasing.

Consider the function $f_{\lambda
}\left(  \mathcal{N}\right)  $ where%
\[
f_{\lambda}\left(  \mathcal{N}\right)  \equiv\max_{\rho^{XBE}}I\left(
X;B\right)  +\lambda I\left(  A\rangle BX\right)  .
\]
For any $0 \leq t \leq 1$, we have $f_\lambda(\mathcal{N}) = C(t) +
\lambda Q(t) $ if and only if
\begin{equation} \label{eqn:tradeoff.slope}
C\left(  t\right)  +\lambda  Q\left(  t\right)  \geq C\left(
s\right)  +\lambda  Q\left(  s\right)  .
\end{equation}
for all $0 \leq s \leq 1$. Perhaps more instructively, if $s < t$
and $Q(s)\neq Q(t)$, this inequality can be
written as %
\[
\frac{C\left(  s\right)  -C\left(  t\right)  }{Q\left(  s\right)
-Q\left( t\right)  }\geq-\lambda
\]
because of the monotonicity of the functions $C$ and $Q$. Likewise,
when $s > t$, it has the form
\[
\frac{C\left(  s\right)  -C\left(  t\right)  }{Q\left(  s\right)
-Q\left( t\right)  }\leq-\lambda.
\]
If $(C(t),Q(t))$ is a point on the graph at which the derivative is
not constant, then setting $-\lambda$ to be the slope of the graph
will lead to Eq.~(\ref{eqn:tradeoff.slope}) being satisfied. If the
graph is not differentiable at $(C(t),Q(t))$, then the slope must
drop discontinuously at that point. Setting $-\lambda$ to any value
in the gap will again lead to the condition being satisfied.
\end{proof}

\smallskip
At points where the graph is differentiable but the slope is
constant, $\lambda$ might not be a good parameter. These points,
however, are in the convex hull of the points that $\lambda$ does
parametrize.

\section{Form of the CQ\ Trade-off Curve for Qubit Dephasing Channels}

\label{sec:CQ-dephasing-qubit}We first prove two important lemmas and then
prove a theorem that gives the exact form of the CQ trade-off curve.

\begin{lemma}
\label{lem:diagonal-dephasing-CEQ}Let $\mathcal{N}$ be a generalized dephasing
channel. In the optimization task for the CQ\ trade-off curve, it suffices to
consider a classical-quantum state with diagonal conditional density
operators, in the sense that the following inequality holds:%
\[
I\left(  X;B\right)  _{\rho}+\lambda I\left(  A\rangle BX\right)  _{\rho}\leq
I\left(  X;B\right)  _{\theta}+\lambda I\left(  A\rangle BX\right)  _{\theta
},
\]
where%
\begin{align*}
\rho^{XABE}  &  \equiv\sum_{x}p_{X}\left(  x\right)  \left\vert x\right\rangle
\left\langle x\right\vert ^{X}\otimes U_{\mathcal{N}}^{A^{\prime}\rightarrow
BE}(\phi_{x}^{AA^{\prime}}),\\
\theta^{XABE}  &  \equiv\sum_{x}p_{X}\left(  x\right)  \left\vert
x\right\rangle \left\langle x\right\vert ^{X}\otimes U_{\mathcal{N}%
}^{A^{\prime}\rightarrow BE}(\varphi_{x}^{AA^{\prime}}),
\end{align*}
$U_{\mathcal{N}}^{A^{\prime}\rightarrow BE}$ is an isometric extension of
$\mathcal{N}$, $\varphi_{x}^{A^{\prime}}=\Delta(\varphi_{x}^{A^{\prime}%
})=\Delta(\phi_{x}^{A^{\prime}})$, and $\Delta$ is the completely dephasing channel.
\end{lemma}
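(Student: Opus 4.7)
The plan is to rewrite the objective in a form where each entropic term can be compared separately between $\rho$ and $\theta$. Since each $\phi_x^{ABE}$ and $\varphi_x^{ABE}$ is pure (being the image of a pure state under an isometry), standard entropic manipulations give
\begin{equation*}
I(X;B)_{\rho}+\lambda I(A\rangle BX)_{\rho}=H(B)_{\rho}+(\lambda-1)H(B|X)_{\rho}-\lambda H(E|X)_{\rho},
\end{equation*}
and likewise for $\theta$. It therefore suffices to compare the three quantities $H(B)$, $H(B|X)$, and $H(E|X)$ between the two states, and this is precisely where the structural identities of a generalized dephasing channel established in Section~\ref{sec:generalized-dephasing} enter.

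First, I would use the identity $\mathcal{N}_{\text{GD}}^{c}=\mathcal{N}_{\text{GD}}^{c}\circ\Delta$ to observe that $\theta_x^{E}=(\mathcal{N}^{c}\circ\Delta)(\phi_x^{A^{\prime}})=\mathcal{N}^{c}(\phi_x^{A^{\prime}})=\rho_x^{E}$ for every $x$, so the conditional environment states are identical and $H(E|X)_{\theta}=H(E|X)_{\rho}$ exactly. Next, I would invoke the commutation relation $\mathcal{N}_{\text{GD}}\circ\Delta=\Delta\circ\mathcal{N}_{\text{GD}}$ to write $\theta_x^{B}=\mathcal{N}(\Delta(\phi_x^{A^{\prime}}))=\Delta(\mathcal{N}(\phi_x^{A^{\prime}}))=\Delta(\rho_x^{B})$, and by linearity $\theta^{B}=\Delta(\rho^{B})$. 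Since the completely dephasing channel is unital, it cannot decrease von Neumann entropy, so $H(B)_{\theta}\geq H(B)_{\rho}$ and $H(B|X)_{\theta}\geq H(B|X)_{\rho}$ (the latter applied termwise in $x$).

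Finally, I would restrict attention to $\lambda\geq1$, which by Lemma~\ref{lem:max-CQ-lambda} is the only regime relevant to the CQ trade-off curve. In this regime $\lambda-1\geq 0$, so all three comparisons combine with the correct signs to yield
\begin{equation*}
H(B)_{\rho}+(\lambda-1)H(B|X)_{\rho}-\lambda H(E|X)_{\rho}\leq H(B)_{\theta}+(\lambda-1)H(B|X)_{\theta}-\lambda H(E|X)_{\theta},
\end{equation*}
which is the desired inequality. The main obstacle, such as it is, lies in checking the sign bookkeeping: were $\lambda<1$ the coefficient of $H(B|X)$ would flip and the argument would collapse, so one must be explicit that Lemma~\ref{lem:max-CQ-lambda} already discards that regime. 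Everything else is essentially a direct consequence of the commutation and unitality properties of $\Delta$ already catalogued in Section~\ref{sec:generalized-dephasing}.
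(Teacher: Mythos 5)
Your proposal is correct and follows essentially the same route as the paper's proof: both rewrite the objective as $H(B)+(\lambda-1)H(B|X)-\lambda H(E|X)$, use $\mathcal{N}^{c}\circ\Delta=\mathcal{N}^{c}$ to fix the environment entropies, use $\mathcal{N}\circ\Delta=\Delta\circ\mathcal{N}$ together with the entropy-nondecreasing property of the completely dephasing map to handle the $B$ terms, and rely on $\lambda\geq1$ for the sign of the $H(B|X)$ coefficient. The only cosmetic difference is that the paper routes the comparison through an intermediate state $\sigma$ obtained by dephasing the channel output, whereas you identify the marginals of $\theta$ directly; your explicit remark that the argument requires $\lambda\geq1$ (justified by Lemma~\ref{lem:max-CQ-lambda}) matches the paper's implicit restriction.
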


\begin{proof}
The proof of this lemma is similar to the proof of Lemma~9 in
Ref.~\cite{itit2008hsieh}. Consider another classical-quantum state $\sigma$
in addition to the two presented in the statement of the theorem:%
\[
\sigma^{XAYE}\equiv\sum_{x}p_{X}\left(  x\right)  \left\vert x\right\rangle
\left\langle x\right\vert ^{X}\otimes(\Delta^{B\rightarrow Y}\circ
U_{\mathcal{N}}^{A^{\prime}\rightarrow BE})(\phi_{x}^{AA^{\prime}}).
\]
Then the following chain of inequalities holds for all $\lambda\geq1$:%
\begin{align*}
&  I\left(  X;B\right)  _{\rho}+\lambda I\left(  A\rangle BX\right)  _{\rho}\\
&  =H\left(  B\right)  _{\rho}+\left(  \lambda-1\right)  H\left(  B|X\right)
_{\rho}-\lambda H\left(  E|X\right)  _{\rho}\\
&  \leq H\left(  Y\right)  _{\sigma}+\left(  \lambda-1\right)  H\left(
Y|X\right)  _{\sigma}-\lambda H\left(  E|X\right)  _{\sigma}\\
&  =H\left(  B\right)  _{\theta}+\left(  \lambda-1\right)  H\left(
B|X\right)  _{\theta}-\lambda H\left(  E|X\right)  _{\theta}\\
&  =I\left(  X;B\right)  _{\theta}+\lambda I\left(  A\rangle BX\right)
_{\theta}.
\end{align*}
The first equality follows from entropic manipulations. The inequality follows
because the entropies $H\left(  B\right)  _{\rho}$ and $H\left(  B|X\right)
_{\rho}$ can only increase under a complete dephasing
\cite{book2000mikeandike}. The second equality follows because $\mathcal{N}%
\circ\Delta=\Delta\circ\mathcal{N}$ and $\mathcal{N}^{c}\circ\Delta
=\mathcal{N}^{c}$ for a generalized dephasing channel $\mathcal{N}$, and the
final equality follows from entropic manipulations.
\end{proof}

\begin{lemma}
\label{lem:mu-CEQ}An ensemble of the following form parametrizes all points on
the CQ\ trade-off curve for a qubit dephasing channel:%
\begin{equation}
\frac{1}{2}\left\vert 0\right\rangle \left\langle 0\right\vert ^{X}\otimes
\psi_{0}^{AA^{\prime}}+\frac{1}{2}\left\vert 1\right\rangle \left\langle
1\right\vert ^{X}\otimes\psi_{1}^{AA^{\prime}}, \label{eq:mu-cq-state-CEQ}%
\end{equation}
where $\psi_{0}^{AA^{\prime}}$ and $\psi_{1}^{AA^{\prime}}$ are pure states,
defined as follows for $\mu\in\left[  0,1/2\right]  $:%
\begin{align}
\text{\emph{Tr}}_{A}\left\{  \psi_{0}^{AA^{\prime}}\right\}   &
=\mu\left\vert 0\right\rangle \left\langle 0\right\vert ^{A^{\prime}}+\left(
1-\mu\right)  \left\vert 1\right\rangle \left\langle 1\right\vert ^{A^{\prime
}},\label{eq:1st-mu-state-CEQ}\\
\text{\emph{Tr}}_{A}\left\{  \psi_{1}^{AA^{\prime}}\right\}   &  =\left(
1-\mu\right)  \left\vert 0\right\rangle \left\langle 0\right\vert ^{A^{\prime
}}+\mu\left\vert 1\right\rangle \left\langle 1\right\vert ^{A^{\prime}}.
\label{eq:2nd-mu-state-CEQ}%
\end{align}

\end{lemma}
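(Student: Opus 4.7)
The plan is to reduce the maximization defining $f_\lambda(\mathcal{N})$ to a one-parameter optimization over $\mu \in [0,1/2]$, and then exhibit the symmetric two-element ensemble that saturates the resulting upper bound. First, by Lemma~\ref{lem:diagonal-dephasing-CEQ} it suffices to consider classical-quantum states whose conditional density operators $\rho_x^{A^{\prime}}$ are diagonal in the dephasing basis, so that $\rho_x^{A^{\prime}} = \mu_x |0\rangle\langle 0| + (1-\mu_x)|1\rangle\langle 1|$ for some $\mu_x \in [0,1]$. Since a diagonal state is unaffected by the qubit dephasing map, $\mathcal{N}(\rho_x^{A^{\prime}}) = \rho_x^{A^{\prime}}$, so $H(B|X=x) = H_2(\mu_x)$ and $H(B) = H_2(\bar\mu)$ with $\bar\mu \equiv \sum_x p_X(x)\mu_x$. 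A direct $2\times 2$ eigenvalue computation on the complementary channel $\mathcal{N}^c$ displayed in Section~\ref{sec:generalized-dephasing}, together with $\Tr\{\sigma_x Z\} = 2\mu_x - 1$, yields $H(E|X=x) = H_2(\gamma(\mu_x,p))$ with $\gamma$ as in Theorem~\ref{thm:dephasing-CEQ}.

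Substituting these identifications into $f_\lambda = H(B) + (\lambda-1)H(B|X) - \lambda H(E|X)$ gives
\[
f_\lambda = H_2(\bar\mu) + \sum_x p_X(x)\bigl[(\lambda-1)H_2(\mu_x) - \lambda H_2(\gamma(\mu_x,p))\bigr].
\]
The first term is bounded by $H_2(\bar\mu) \leq 1$, and the convex combination in the second term is bounded by its pointwise maximizer, so
\[
f_\lambda \leq 1 + \max_{\mu \in [0,1]} \bigl[(\lambda-1)H_2(\mu) - \lambda H_2(\gamma(\mu,p))\bigr].
\]
Because $H_2(\mu) = H_2(1-\mu)$ and $\gamma(\mu,p)$ depends only on $\mu(1-\mu)$, the maximum may be restricted to $\mu \in [0,1/2]$.

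To finish, I would verify that this upper bound is attained by the ensemble in the statement of the lemma. Let $\mu^{\ast} \in [0,1/2]$ achieve the maximum above and set $\mu_0 = \mu^{\ast}$, $\mu_1 = 1-\mu^{\ast}$, $p_X(0) = p_X(1) = 1/2$. Then $\bar\mu = 1/2$ forces $H(B) = 1$, while the $\mu \leftrightarrow 1-\mu$ symmetries of $H_2$ and $\gamma$ give $H(B|X) = H_2(\mu^{\ast})$ and $H(E|X) = H_2(\gamma(\mu^{\ast},p))$, saturating the bound. The purifications $\psi_0^{AA^{\prime}}$ and $\psi_1^{AA^{\prime}}$ may be chosen arbitrarily, because $f_\lambda$ depends only on the reduced states. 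The main obstacle is the routine but essential eigenvalue calculation that produces $\gamma(\mu,p)$; everything else is a symmetrization-plus-concavity argument of the same flavor as the one used for the cloning channel in Lemma~\ref{lem:parametrize-CEQ-cloning}.
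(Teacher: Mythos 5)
Your proposal is correct and takes essentially the same route as the paper: restrict to conditional states diagonal in the dephasing basis (via Lemma~\ref{lem:diagonal-dephasing-CEQ}), bound $H(B)$ by $1$, replace the average over $x$ of $(\lambda-1)H(B)_{\rho_x}-\lambda H(E)_{\rho_x}$ by its best single term, and saturate the bound with the symmetric two-state ensemble at $\mu^{\ast}$ and $1-\mu^{\ast}$. The only cosmetic difference is that the paper obtains the $H(B)\leq 1$ step and the $\mu\leftrightarrow 1-\mu$ invariance by doubling the ensemble with bit-flipped copies and invoking concavity of entropy (mirroring the covariance argument for the cloning channel), and it defers the explicit eigenvalue computation of $\gamma(\mu,p)$ to the proof of Theorem~\ref{thm:dephasing-CEQ}, whereas you substitute the explicit entropies $H_2(\mu_x)$ and $H_2(\gamma(\mu_x,p))$ up front.
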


\begin{proof}
We assume without loss of generality that the dephasing basis is the
computational basis. Consider a classical-quantum state with a finite number
$N$ of diagonal conditional density operators $\rho_{x}^{A^{\prime}}$:%
\begin{equation}
\rho^{XA^{\prime}}\equiv\sum_{x=0}^{N-1}p_{X}\left(  x\right)  |x\rangle
\langle x|^{X}\otimes\rho_{x}^{A^{\prime}}.\nonumber
\end{equation}
We can form a new classical-quantum state with double the number of
conditional density operators by \textquotedblleft
bit-flipping\textquotedblright\ the original conditional density operators:%
\begin{multline*}
\sigma^{XA^{\prime}}\equiv\frac{1}{2}\sum_{x=0}^{N-1}p_{X}\left(  x\right)
|x\rangle\langle x|^{X}\otimes\rho_{x}^{A^{\prime}}\\
+\frac{1}{2}\sum_{x=0}^{N-1}p_{X}\left(  x\right)  |x+N\rangle\langle
x+N|^{X}\otimes X\rho_{x}^{A^{\prime}}X,
\end{multline*}
where $X$ is the $\sigma_{X}$ \textquotedblleft bit-flip\textquotedblright%
\ Pauli operator. Consider the following chain of inequalities that holds for
all $\lambda\geq1$:%
\begin{align*}
&  I\left(  X;B\right)  _{\rho}+\lambda I\left(  A\rangle BX\right)  _{\rho}\\
&  =H\left(  B\right)  _{\rho}+\left(  \lambda-1\right)  H\left(  B|X\right)
_{\rho}-\lambda H\left(  E|X\right)  _{\rho}\\
&  =H\left(  B\right)  _{\rho}+\left(  \lambda-1\right)  H\left(  B|X\right)
_{\sigma}-\lambda H\left(  E|X\right)  _{\sigma}\\
&  \leq H\left(  B\right)  _{\sigma}+\left(  \lambda-1\right)  H\left(
B|X\right)  _{\sigma}-\lambda H\left(  E|X\right)  _{\sigma}\\
&  =1+\left(  \lambda-1\right)  H\left(  B|X\right)  _{\sigma}-\lambda
H\left(  E|X\right)  _{\sigma}\\
&  =1+\sum_{x}p_{X}\left(  x\right)  \left[  \left(  \lambda-1\right)
H\left(  B\right)  _{\rho_{x}}-\lambda H\left(  E\right)  _{\rho_{x}}\right]
\\
&  \leq1+\max_{x}\left[  \left(  \lambda-1\right)  H\left(  B\right)
_{\rho_{x}}-\lambda H\left(  E\right)  _{\rho_{x}}\right] \\
&  =1+\left(  \lambda-1\right)  H\left(  B\right)  _{\rho_{x}^{\ast}}-\lambda
H\left(  E\right)  _{\rho_{x}^{\ast}}.
\end{align*}
The first equality follows by standard entropic manipulations. The second
equality follows because the conditional entropy $H\left(  B|X\right)  $ is
invariant under a bit-flipping unitary on the input state that commutes with
the channel: $H(B)_{X\rho_{x}^{B}X}=H(B)_{\rho_{x}^{B}}$. Furthermore, a bit
flip on the input state does not change the eigenvalues for the output of the
dephasing channel's complementary channel (as observed in
Section~\ref{sec:generalized-dephasing}): $H(E)_{\mathcal{N}^{c}(X\rho
_{x}^{A^{\prime}}X)}=H(E)_{\mathcal{N}^{c}(\rho_{x}^{A^{\prime}})}$. The first
inequality follows because entropy is concave, i.e., the local state
$\sigma^{B}$ is a mixed version of $\rho^{B}$. The third equality follows
because $H(B)_{\sigma^{B}}=H\left(  \sum_{x}\frac{1}{2}p_{X}\left(  x\right)
(\rho_{x}^{B}+X\rho_{x}^{B}X)\right)  =H\left(  \frac{1}{2}\sum_{x}%
p_{X}\left(  x\right)  I\right)  =1$. The fourth equality follows because the
system $X$ is classical. The second inequality follows because the maximum
value of a realization of a random variable is not less than its expectation.
The final equality simply follows by defining $\rho_{x}^{\ast}$ to be the
conditional density operator on systems $B$ and $E$ that arises from sending a
diagonal state of the form $\mu\left\vert 0\right\rangle \left\langle
0\right\vert ^{A^{\prime}}+\left(  1-\mu\right)  \left\vert 1\right\rangle
\left\langle 1\right\vert ^{A^{\prime}}$ through the channel. Thus, an
ensemble of the kind in (\ref{eq:mu-cq-state-CEQ}) is sufficient to attain a
point on the CQ\ trade-off curve.

In the last step above, we observe that there is a direct correspondence
between $\mu$ that parametrizes the ensemble and $\lambda$ that parametrizes a
point on the CQ\ trade-off curve.
\end{proof}

\begin{proof}
[Proof of CQ\ trade-off in Theorem~\ref{thm:dephasing-CEQ}]We simply need to
compute the Holevo information $I\left(  X;B\right)  $ and the coherent
information $I\left(  A\rangle BX\right)  $ for an ensemble of the form in the
statement of Lemma~\ref{lem:mu-CEQ}, due to the results of
Lemmas~\ref{lem:diagonal-dephasing-CEQ}\ and~\ref{lem:mu-CEQ}.

First consider respective purifications of the states in
(\ref{eq:1st-mu-state-CEQ}-\ref{eq:2nd-mu-state-CEQ}):%
\begin{align}
\mathop{\left|\psi_0\right>}\nolimits^{AA^{\prime}}  &  =\sqrt{\mu
}\mathop{\left|0\right>}\nolimits^{A}%
\mathop{\left|0\right>}\nolimits^{A^{\prime}}+\sqrt{1-\mu}%
\mathop{\left|1\right>}\nolimits^{A}%
\mathop{\left|1\right>}\nolimits^{A^{\prime}},\label{eq:dephasing-mu-1-pur}\\
\mathop{\left|\psi_1\right>}\nolimits^{AA^{\prime}}  &  =\sqrt{1-\mu
}\mathop{\left|0\right>}\nolimits^{A}%
\mathop{\left|0\right>}\nolimits^{A^{\prime}}+\sqrt{\mu}%
\mathop{\left|1\right>}\nolimits^{A}%
\mathop{\left|1\right>}\nolimits^{A^{\prime}}. \label{eq:dephasing-mu-2-pur}%
\end{align}
The above states lead to a classical-quantum state of the form in
(\ref{eq:mu-cq-state-CEQ}). An isometric extension $U_{\mathcal{N}}%
=\sqrt{1-\frac{p}{2}}I\otimes\left\vert 0\right\rangle ^{E}+\sqrt{\frac{p}{2}%
}Z\otimes\left\vert 1\right\rangle ^{E}$ of the qubit dephasing channel acts
as follows on the above states:%
\begin{align*}
\mathop{\left|\psi_0\right>}\nolimits^{ABE}  &  \equiv U_{\mathcal{N}%
}\mathop{\left|\psi_0\right>}\nolimits^{AA^{\prime}}\\
&  =\sqrt{\mu}\mathop{\left|0\right>}\nolimits^{A}%
\mathop{\left|0\right>}\nolimits^{B}\bigg(\sqrt{1-\frac{p}{2}}%
\mathop{\left|0\right>}\nolimits^{E}+\sqrt{\frac{p}{2}}%
\mathop{\left|1\right>}\nolimits^{E}\bigg)\\
&  +\sqrt{1-\mu}\mathop{\left|1\right>}\nolimits^{A}%
\mathop{\left|1\right>}\nolimits^{B}\bigg(\sqrt{1-\frac{p}{2}}%
\mathop{\left|0\right>}\nolimits^{E}-\sqrt{\frac{p}{2}}%
\mathop{\left|1\right>}\nolimits^{E}\bigg),
\end{align*}%
\begin{align*}
\mathop{\left|\psi_1\right>}\nolimits^{ABE}  &  \equiv U_{\mathcal{N}%
}\mathop{\left|\psi_1\right>}\nolimits^{AA^{\prime}}\\
&  =\sqrt{1-\mu}\mathop{\left|0\right>}\nolimits^{A}%
\mathop{\left|0\right>}\nolimits^{B}\bigg(\sqrt{1-\frac{p}{2}}%
\mathop{\left|0\right>}\nolimits^{E}+\sqrt{\frac{p}{2}}%
\mathop{\left|1\right>}\nolimits^{E}\bigg)\\
&  +\sqrt{\mu}\mathop{\left|1\right>}\nolimits^{A}%
\mathop{\left|1\right>}\nolimits^{B}\bigg(\sqrt{1-\frac{p}{2}}%
\mathop{\left|0\right>}\nolimits^{E}-\sqrt{\frac{p}{2}}%
\mathop{\left|1\right>}\nolimits^{E}\bigg).
\end{align*}
The classical-quantum state at the output of the channel is as follows:%
\begin{equation}
\rho^{XABE}\equiv\frac{1}{2}\bigg[|0\rangle\!\langle0|^{X}\otimes|\psi
_{0}\rangle\!\langle\psi_{0}|^{ABE}+|1\rangle\!\langle1|^{X}\otimes|\psi
_{1}\rangle\!\langle\psi_{1}|^{ABE}\bigg]. \label{eq:outdep}%
\end{equation}
The following states are useful for computing the entropies $H\left(
B\right)  $ and $H\left(  B|X\right)  $:%
\begin{align}
\rho^{XB}  &  =\frac{1}{2}\bigg[|0\rangle\!\langle0|^{X}\otimes\psi_{0}%
^{B}+|1\rangle\!\langle1|^{X}\otimes\psi_{1}^{B}\bigg],\nonumber\\
\psi_{0}^{B}  &  =\mu|0\rangle\!\langle0|^{B}+(1-\mu)|1\rangle\!\langle
1|^{B},\nonumber\\
\psi_{1}^{B}  &  =(1-\mu)|0\rangle\!\langle0|^{B}+(\mu)|1\rangle
\!\langle1|^{B},\nonumber\\
\rho^{B}  &  =\frac{1}{2}\bigg[|0\rangle\!\langle0|^{B}+|1\rangle
\!\langle1|^{B}\bigg].\nonumber
\end{align}
The Holevo information is then as follows:%
\begin{equation}
I(X;B)=H(B)-H(B|X)=1-H_{2}(\mu).\nonumber
\end{equation}

We now compute the coherent information $I(A\rangle BX)=H(B|X)-H(E|X)$. The
following states are important in this computation:%
\begin{align*}
\rho^{XE}  &  =\frac{1}{2}\bigg[|0\rangle\!\langle0|^{X}\otimes\psi_{0}%
^{E}+|1\rangle\!\langle1|^{X}\otimes\psi_{1}^{E}\bigg],\\
\psi_{0}^{E}  &  =\left(  1-\frac{p}{2}\right)  |0\rangle\!\langle0|^{E}%
+\frac{p}{2}|1\rangle\!\langle1|^{E}\\
&  +\sqrt{1-\frac{p}{2}}\sqrt{\frac{p}{2}}(2\mu-1)(|0\rangle\!\langle
1|^{E}+|1\rangle\!\langle0|^{E}),\\
\psi_{1}^{E}  &  =\left(  1-\frac{p}{2}\right)  |0\rangle\!\langle0|^{E}%
+\frac{p}{2}|1\rangle\!\langle1|^{E}\\
&  -\sqrt{1-\frac{p}{2}}\sqrt{\frac{p}{2}}(2\mu-1)(|0\rangle\!\langle
1|^{E}+|1\rangle\!\langle0|^{E}).
\end{align*}
We compute the determinants of the density operators $\psi_{0}^{E}$ and
$\psi_{1}^{E}$:%
\begin{align*}
\text{Det}(\psi_{0}^{E})  &  =\text{Det}(\psi_{1}^{E})\\
&  =\left(  1-\frac{p}{2}\right)  \frac{p}{2}(1-(2\mu-1)^{2})\\
&  =2p\mu\left(  1-\frac{p}{2}\right)  (1-\mu).
\end{align*}
These determinants lead to the same eigenvalues for both $\psi_{0}^{E}$ and
$\psi_{1}^{E}$:%
\begin{align}
\lambda_{\pm}  &  \equiv\frac{1}{2}\pm\sqrt{\frac{1}{4}-\text{Det}(\psi
_{o}^{E})}\label{eq:lpm}\\
&  =\frac{1}{2}\pm\frac{1}{2}\sqrt{1-16\cdot\frac{p}{2}\left(  1-\frac{p}%
{2}\right)  \mu(1-\mu)}.\nonumber
\end{align}
Thus, the coherent information is as stated in the theorem: $I(A\rangle
BX)=H_{2}(\mu)-H_{2}(\lambda_{+})$.
\end{proof}

\section{Form of the CE Trade-off Curve for Qubit Dephasing Channels}

\label{sec:CE-trade-off-qubit-dephasing}We first prove two important lemmas
and then prove a theorem that gives the exact form of the CE trade-off curve.

\begin{lemma}
\label{lem:diagonal-dephasing-EAC}Let $\mathcal{N}$ be a generalized dephasing
channel. In the optimization task for the CE trade-off curve, it suffices to
consider a classical-quantum state with diagonal conditional density
operators, in the sense that the following inequality holds when $0\leq
\lambda\leq1$:%
\[
I\left(  AX;B\right)  _{\rho}-\lambda H\left(  A|X\right)  _{\rho}\leq
I\left(  AX;B\right)  _{\theta}-\lambda H\left(  A|X\right)  _{\theta},
\]
where%
\begin{align*}
\rho^{XABE}  &  \equiv\sum_{x}p_{X}\left(  x\right)  \left\vert x\right\rangle
\left\langle x\right\vert ^{X}\otimes U_{\mathcal{N}}^{A^{\prime}\rightarrow
BE}(\phi_{x}^{AA^{\prime}}),\\
\theta^{XABE}  &  \equiv\sum_{x}p_{X}\left(  x\right)  \left\vert
x\right\rangle \left\langle x\right\vert ^{X}\otimes U_{\mathcal{N}%
}^{A^{\prime}\rightarrow BE}(\varphi_{x}^{AA^{\prime}}),
\end{align*}
$U_{\mathcal{N}}^{A^{\prime}\rightarrow BE}$ is an isometric extension of
$\mathcal{N}$, $\varphi_{x}^{A^{\prime}}=\Delta(\varphi_{x}^{A^{\prime}%
})=\Delta(\phi_{x}^{A^{\prime}})$, and $\Delta$ is the completely dephasing channel.
\end{lemma}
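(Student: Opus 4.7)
The plan is to mirror the proof of Lemma~\ref{lem:diagonal-dephasing-CEQ}. I would introduce the intermediate classical-quantum state
\[
\sigma^{XAYE}\equiv\sum_{x}p_{X}(x)\,|x\rangle\langle x|^{X}\otimes(\Delta^{B\rightarrow Y}\circ U_{\mathcal{N}}^{A^{\prime}\rightarrow BE})(\phi_{x}^{AA^{\prime}}),
\]
and exploit the two commutation relations $\mathcal{N}\circ\Delta=\Delta\circ\mathcal{N}$ and $\mathcal{N}^{c}\circ\Delta=\mathcal{N}^{c}$ that hold for any generalized dephasing channel in order to translate entropies of $\sigma$ into entropies of $\theta$.

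The next step is to rewrite the objective in a form whose coefficients have the right signs. Since $\rho^{ABE}$ is pure conditional on $X$, one has $H(AB|X)_{\rho}=H(E|X)_{\rho}$ and hence
\[
I(AX;B)_{\rho}-\lambda H(A|X)_{\rho}=H(B)_{\rho}-H(E|X)_{\rho}+(1-\lambda)\,H(A|X)_{\rho},
\]
with an identical decomposition for $\theta$. The hypothesis $0\leq\lambda\leq 1$ ensures $(1-\lambda)\geq 0$, and this is the crucial sign condition that lets three separate entropy comparisons combine in the right direction.

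Finally I would bound each term in turn. For $H(B)$: dephasing never decreases entropy, so $H(B)_{\rho}\leq H(Y)_{\sigma}$, and $\theta^{B}=\Delta(\rho^{B})$ via $\mathcal{N}\circ\Delta=\Delta\circ\mathcal{N}$ gives $H(Y)_{\sigma}=H(B)_{\theta}$. For $H(E|X)$: the identity $\mathcal{N}^{c}\circ\Delta=\mathcal{N}^{c}$ yields $\mathcal{N}^{c}(\varphi_{x}^{A^{\prime}})=\mathcal{N}^{c}(\phi_{x}^{A^{\prime}})$ termwise, so $H(E|X)_{\rho}=H(E|X)_{\theta}$. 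For $H(A|X)$: conditioned on $X$, the $A$-marginals agree with the $A^{\prime}$-marginals of the purifying states, so $H(A|X)_{\rho}=\sum_{x}p_{X}(x)H(\phi_{x}^{A^{\prime}})$ and $H(A|X)_{\theta}=\sum_{x}p_{X}(x)H(\Delta(\phi_{x}^{A^{\prime}}))$, the latter being at least as large because dephasing a single density operator cannot decrease its entropy. Assembling these three comparisons with their respective coefficients $+1$, $-1$, and $(1-\lambda)\geq 0$ produces the desired inequality. The only subtle point---and the one step that might trip the argument up if mishandled---is the role of the restriction $\lambda\leq 1$: if it were violated, the $H(A|X)$ comparison would enter with the wrong sign, which is consistent with Lemma~\ref{lem:max-CE-lambda}, showing that the CE trade-off curve is parametrized only by $\lambda\in[0,1]$.
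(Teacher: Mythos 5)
Your proposal is correct and matches the paper's own proof essentially step for step: the same intermediate dephased state $\sigma$, the same decomposition $I(AX;B)-\lambda H(A|X)=H(B)-H(E|X)+(1-\lambda)H(A|X)$ exploiting purity conditional on $X$, and the same three comparisons via $H(B)_{\rho}\leq H(Y)_{\sigma}=H(B)_{\theta}$, $\mathcal{N}^{c}\circ\Delta=\mathcal{N}^{c}$, and the entropy increase of $\phi_{x}^{A^{\prime}}$ under $\Delta$ weighted by $(1-\lambda)\geq0$. Your closing remark on why $\lambda\leq1$ is essential is exactly the sign condition the paper relies on.
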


\begin{proof}
The proof of this lemma is similar to the proof of Lemma~9 in
Ref.~\cite{itit2008hsieh}. Consider another classical-quantum state $\sigma
$\ in addition to the two presented in the statement of the theorem:%
\[
\sigma^{XY_{A}Y_{B}E}\equiv\sum_{x}p_{X}\left(  x\right)  \left\vert
x\right\rangle \left\langle x\right\vert ^{X}\otimes\Delta^{B\rightarrow
Y}(U_{\mathcal{N}}^{A^{\prime}\rightarrow BE}(\phi_{x}^{AA^{\prime}})).
\]
Then the following chain of inequalities holds when $0\leq\lambda\leq1$:%
\begin{align*}
&  I\left(  AX;B\right)  _{\rho}-\lambda H\left(  A|X\right)  _{\rho}\\
&  =\left(  1-\lambda\right)  H\left(  A|X\right)  _{\rho}+H\left(  B\right)
_{\rho}-H\left(  E|X\right)  _{\rho}\\
&  =\left(  1-\lambda\right)  \sum_{x}p_{X}\left(  x\right)  H\left(
A^{\prime}\right)  _{\phi_{x}}+H\left(  B\right)  _{\rho}-H\left(  E|X\right)
_{\rho}\\
&  \leq\left(  1-\lambda\right)  \sum_{x}p_{X}\left(  x\right)  H\left(
A^{\prime}\right)  _{\Delta(\phi_{x}^{A^{\prime}})}+H\left(  Y\right)
_{\sigma}-H\left(  E|X\right)  _{\sigma}\\
&  =\left(  1-\lambda\right)  H\left(  A|X\right)  _{\theta}+H\left(
B\right)  _{\theta}-H\left(  E|X\right)  _{\theta}\\
&  =I\left(  AX;B\right)  _{\theta}-\lambda H\left(  A|X\right)  _{\theta}.
\end{align*}
The first equality follows from entropic manipulations. The second equality
follows because the system $X$ is classical and the states $\phi
_{x}^{AA^{\prime}}$ are pure. The inequality follows because the entropies
$H\left(  A^{\prime}\right)  $ and $H\left(  B\right)  _{\rho}$ can only
increase under a complete dephasing. The third equality follows because
$\mathcal{N}\circ\Delta=\Delta\circ\mathcal{N}$ and $\mathcal{N}^{c}%
\circ\Delta=\mathcal{N}^{c}$ for a generalized dephasing channel $\mathcal{N}%
$, and the final equality follows from entropic manipulations.
\end{proof}

\begin{lemma}
\label{lem:mu-EAC}An ensemble of the following form parametrizes all points on
the CE trade-off curve for a qubit dephasing channel:%
\begin{equation}
\frac{1}{2}(\left\vert 0\right\rangle \left\langle 0\right\vert ^{X}%
\otimes\psi_{0}^{AA^{\prime}}+\left\vert 1\right\rangle \left\langle
1\right\vert ^{X}\otimes\psi_{1}^{AA^{\prime}}),
\end{equation}
where the states $\psi_{0}$ and $\psi_{1}$ are the same as they are in the
statement of Lemma~\ref{lem:mu-CEQ}.
\end{lemma}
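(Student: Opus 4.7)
The plan is to mirror, step by step, the proof of Lemma~\ref{lem:mu-CEQ}, with the CE objective $g_\lambda(\mathcal{N})$ replacing $f_\lambda(\mathcal{N})$. First I would invoke Lemma~\ref{lem:diagonal-dephasing-EAC} to restrict attention to classical-quantum states whose conditional marginals $\rho_x^{A'}$ are all diagonal in the dephasing basis; in particular the channel then acts as the identity, $\mathcal{N}(\rho_x^{A'}) = \rho_x^{A'}$. As in Lemma~\ref{lem:mu-CEQ}, I would then construct a symmetrized ensemble $\sigma^{XAA'}$ by adjoining to each $(p_X(x), \phi_x^{AA'})$ a bit-flipped twin $(p_X(x), (I \otimes X)\phi_x^{AA'}(I\otimes X))$ labeled by a fresh index, each weighted by an extra factor of $1/2$.

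The central calculation is to show
\[
I(AX;B)_\rho - \lambda H(A|X)_\rho \leq I(AX;B)_\sigma - \lambda H(A|X)_\sigma.
\]
Rewriting the left-hand side as $(1-\lambda)H(A|X)_\rho + H(B)_\rho - H(E|X)_\rho$, which is valid because each $\phi_x$ is pure, I would compare the three pieces. The quantities $H(A|X)$ and $H(E|X)$ are invariant under the symmetrization: $X$-conjugation preserves eigenvalues, and, as observed in Section~\ref{sec:generalized-dephasing}, a bit flip on the input to a qubit dephasing channel leaves the eigenvalues of the complementary output unchanged. For $H(B)$, since the channel acts as the identity on diagonal inputs, $\tfrac{1}{2}(\rho_x^{A'} + X\rho_x^{A'} X) = I/2$, giving $H(B)_\sigma = 1 \geq H(B)_\rho$. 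Combined with $1 - \lambda \geq 0$, this delivers the desired inequality.

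Having replaced $\rho$ by $\sigma$, I would next swap the classical average over $x$ for a maximum, reducing the problem to a single diagonal conditional density operator $\rho_{x^\ast}^{A'} = \mu|0\rangle\langle 0| + (1-\mu)|1\rangle\langle 1|$ paired with its bit-flip, each with probability $1/2$. This is precisely the two-element ensemble displayed in the statement of the lemma, and the symmetry $|0\rangle \leftrightarrow |1\rangle$ lets me restrict $\mu \in [0,1/2]$. Varying $\mu$ over this interval then sweeps out the one-shot CE trade-off curve, with $\mu$ related to the Lagrange multiplier $\lambda$ as discussed in Appendix~\ref{sec:continuity-trade-off-curve}.

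The hard part will be ensuring that the coefficient $(1-\lambda)$ in front of $H(A|X)$ has the correct sign for the symmetrization and the max-over-$x$ steps to go through; this is exactly what forces $\lambda \leq 1$, in mirror image to how Lemma~\ref{lem:mu-CEQ} required $\lambda \geq 1$ to make the coefficient $(\lambda - 1)$ in front of $H(B|X)$ nonnegative. All remaining manipulations closely track those of Lemmas~\ref{lem:mu-CEQ} and~\ref{lem:parametrize-CEQ-cloning}, so the genuinely new ingredient is simply the rewriting of the $g_\lambda$ objective as $(1-\lambda)H(A|X) + H(B) - H(E|X)$ with nonnegative coefficient on $H(A|X)$.
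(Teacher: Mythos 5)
Your proposal is correct and follows essentially the same route as the paper: the paper's proof also first restricts to diagonal conditional density operators via Lemma~\ref{lem:diagonal-dephasing-EAC}, reuses the bit-flip-symmetrized ensemble $\sigma$ from Lemma~\ref{lem:mu-CEQ}, and runs the identical chain of inequalities starting from the rewriting $I(AX;B)_\rho-\lambda H(A|X)_\rho=(1-\lambda)H(A|X)_\rho+H(B)_\rho-H(E|X)_\rho$, using invariance of $H(A|X)$ and $H(E|X)$ under the symmetrization, concavity to get $H(B)_\sigma=1\geq H(B)_\rho$, and then a maximum over $x$. Your observation that $0\leq\lambda\leq1$ is exactly what makes the coefficient $(1-\lambda)$ nonnegative is also the paper's (implicit) justification, in mirror image to the $\lambda\geq1$ condition of the CQ case.
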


\begin{proof}
The proof proceeds similarly to the proof of Lemma~\ref{lem:mu-CEQ}, with the
same definitions of states $\rho$ and $\sigma$, but with the following
different chain of inequalities for $0\leq\lambda\leq1$:%
\begin{align*}
&  I\left(  AX;B\right)  _{\rho}-\lambda H\left(  A|X\right)  _{\rho}\\
&  =\left(  1-\lambda\right)  H\left(  A|X\right)  _{\rho}+H\left(  B\right)
_{\rho}-H\left(  E|X\right)  _{\rho}\\
&  =\left(  1-\lambda\right)  H\left(  A|X\right)  _{\sigma}+H\left(
B\right)  _{\rho}-H\left(  E|X\right)  _{\sigma}\\
&  \leq\left(  1-\lambda\right)  H\left(  A|X\right)  _{\sigma}+H\left(
B\right)  _{\sigma}-H\left(  E|X\right)  _{\sigma}\\
&  =\left(  1-\lambda\right)  H\left(  A|X\right)  _{\sigma}+1-H\left(
E|X\right)  _{\sigma}\\
&  =1+\sum_{x}p_{X}\left(  x\right)  \left[  \left(  1-\lambda\right)
H\left(  A\right)  _{\psi_{x}}-H\left(  E\right)  _{\psi_{x}}\right] \\
&  \leq1+\max_{x}\left[  \left(  1-\lambda\right)  H\left(  A\right)
_{\psi_{x}}-H\left(  E\right)  _{\psi_{x}}\right] \\
&  =1+\left(  1-\lambda\right)  H\left(  A\right)  _{\psi_{x}^{\ast}}-H\left(
E\right)  _{\psi_{x}^{\ast}}.
\end{align*}
We do not provide justifications for the above steps because they are
identical those in the proof of Lemma~\ref{lem:mu-CEQ}.
\end{proof}

\begin{proof}
[Proof of CE\ trade-off in Theorem~\ref{thm:dephasing-CEQ}]The proof follows
by noting that $I\left(  AX;B\right)  =H\left(  A|X\right)  +H\left(
B\right)  -H\left(  E|X\right)  $, $H\left(  A|X\right)  =H_{2}\left(
\mu\right)  $, and that we have already computed $H\left(  B\right)  $ and
$H\left(  E|X\right)  $ in the first part of the proof of
Theorem~\ref{thm:dephasing-CEQ}.
\end{proof}

\section{CQ and CE\ Trade-off Curves for the Unruh Channel}

\label{sec:cq-ce-unruh}

\begin{proof}
The input state in (\ref{cq-state-cloning-mu}) that traces out the CQ curve
for the cloning channels also does so for the Unruh channel. We consider the
purification of this state, so that the output state is as follows:%
\begin{equation*}
\rho^{XABE}  \equiv\frac{1}{2}\bigg[|0\rangle\!\langle0|^{X}\otimes\psi
_{0}{}^{ABE}+|1\rangle\!\langle1|^{X}\otimes\psi_{1}^{ABE}\bigg].
\end{equation*}
Let $\mathcal{N}(\left\vert 0\right\rangle \left\langle 0\right\vert )$ and
$\mathcal{N}(\left\vert 1\right\rangle \left\langle 1\right\vert )$ denote the
Unruh channel outputs corresponding to the respective input states
$|0\rangle\!\langle0|^{A^{\prime}}$ and $|1\rangle\!\langle1|^{A^{\prime}}$:%
\begin{align*}
\mathcal{N}(\left\vert 0\right\rangle \left\langle 0\right\vert )  &
\equiv\bigoplus_{l=2}^{\infty}p_{l}\left(  z\right)  S_{l}(\left\vert
0\right\rangle \left\langle 0\right\vert )\\
&  =\bigoplus_{l=2}^{\infty}\frac{p_{l}\left(  z\right)  }{\Delta_{l-1}}%
\sum_{i=0}^{l-2}(l-1-i)|i\rangle\!\langle i|^{B},\\
\mathcal{N}(\left\vert 1\right\rangle \left\langle 1\right\vert )  &
\equiv\bigoplus_{l=2}^{\infty}p_{l}\left(  z\right)  S_{l}(\left\vert
1\right\rangle \left\langle 1\right\vert )\\
&  =\bigoplus_{l=2}^{\infty}\frac{p_{l}\left(  z\right)  }{\Delta_{l-1}}%
\sum_{i=0}^{l-2}\left(  i+1\right)  |i+1\rangle\!\langle i+1|^{B}.
\end{align*}
The following states are useful for calculating the Holevo information:%
\begin{align*}
\rho^{XB}  &  =\frac{1}{2}\bigg[|0\rangle\!\langle0|^{X}\otimes\psi_{0}%
^{B}+|1\rangle\!\langle1|^{X}\otimes\psi_{1}^{B}\bigg].\\
\psi_{0}^{B}  &  =\mu\mathcal{N}(\left\vert 0\right\rangle \left\langle
0\right\vert )+(1-\mu)\mathcal{N}(\left\vert 1\right\rangle \left\langle
1\right\vert )\\
&  =\bigoplus_{l=2}^{\infty}\frac{p_{l}\left(  z\right)  }{\Delta_{l-1}}%
\sum_{i=0}^{l-1}\lambda_{i}^{l-1}\left(  \mu\right)  |i\rangle\!\langle
i|^{B},\\
\psi_{1}^{B}  &  =(1-\mu)\mathcal{N}(\left\vert 0\right\rangle \left\langle
0\right\vert )+\mu\mathcal{N}(\left\vert 1\right\rangle \left\langle
1\right\vert )\\
&  =\bigoplus_{l=2}^{\infty}\frac{p_{l}\left(  z\right)  }{\Delta_{l-1}}%
\sum_{i=0}^{l-1}\lambda_{i}^{l-1}\left(  1-\mu\right)  |i\rangle\!\langle
i|^{B},\\
\rho^{B}  &  =\frac{1}{2}\bigg(\mathcal{N}(\left\vert 0\right\rangle \left\langle
0\right\vert )+\mathcal{N}(\left\vert 1\right\rangle \left\langle 1\right\vert
)\bigg)\\
&  =\bigoplus_{l=2}^{\infty}\frac{p_{l}\left(  z\right)  }{l}\sum_{i=0}%
^{l-1}|i\rangle\!\langle i|^{B},
\end{align*}
where $\lambda_{i}^{N}(\mu)\equiv(N-2i)\mu+i$. We then compute the entropies
$H\left(  B\right)  $ and $H\left(  B|X\right)  $:%
\begin{align}
H(B)_{\rho}  &  =-\sum_{l=2}^{\infty}\sum_{i=0}^{l-1}\frac{p_{l}\left(
z\right)  }{l}\log\left(  {\frac{p_{l}\left(  z\right)  }{l}}\right)
\nonumber\\
&  =-\sum_{l=2}^{\infty}p_{l}\left(  z\right)  \log\left(  {\frac{p_{l}\left(
z\right)  }{l}}\right)  {,}\nonumber\\
H(B|X)_{\rho}  &  =\frac{1}{2}\left(  H\left(  B\right)  _{\psi_{0}}+H\left(
B\right)  _{\psi_{1}}\right) \nonumber\\
&  =-\sum_{l=2}^{\infty}\sum_{i=0}^{l-1}\frac{p_{l}\left(  z\right)
\lambda_{i}^{l-1}\left(  \mu\right)  }{\Delta_{l-1}}\log\left(  {\frac
{p_{l}\left(  z\right)  \lambda_{i}^{l-1}\left(  \mu\right)  }{\Delta_{l-1}}%
}\right)  . \label{eq:HBX_unruh}%
\end{align}
Observe that the following relationships hold:%
\begin{align*}
\sum_{i=0}^{l-1}\lambda_{i}^{\left(  l-1\right)  }\left(  \mu\right)   &
=\Delta_{l-1},\\
\sum_{l=2}^{\infty}p_{l}\left(  z\right)   &  =1.
\end{align*}
These relationships allow us to rewrite the expression in (\ref{eq:HBX_unruh})
for $H\left(  B|X\right)  $ as follows:%
\begin{multline}
H(B|X)=H(B)-1+\sum_{l=2}^{\infty}p_{l}\left(  z\right)  \log{(l-1)}%
\label{eq:HBX-Unruh}\\
-\sum_{l=2}^{\infty}\sum_{i=0}^{l-1}\frac{p_{l}\left(  z\right)  }%
{\Delta_{l-1}}\lambda_{i}^{\left(  l-1\right)  }\left(  \mu\right)
\log{\bigg(\lambda_{i}^{\left(  l-1\right)  }\left(  \mu\right)  \bigg).}%
\end{multline}
and we get the Holevo information:%
\begin{multline}
I(X;B)=1-\sum_{l=2}^{\infty}p_{l}\left(  z\right)  \log{(l-1)}\nonumber\\
+\sum_{l=2}^{\infty}\frac{p_{l}\left(  z\right)  }{\Delta_{l-1}}\sum
_{i=0}^{l-1}\lambda_{i}^{\left(  l-1\right)  }\left(  \mu\right)
\log{\bigg(\lambda_{i}^{\left(  l-1\right)  }}\left(  \mu\right)  \bigg) .%
\end{multline}
The Holevo information $I(X;B)$ coincides with the expression for the
classical capacity of an Unruh channel when $\mu=0$ (Corollary~3,
Equation~(19) in Ref.~\cite{B09}, though note the slightly different
definition of $\Delta_{l}$ in that paper):%
\begin{equation}
I(X;B)_{\mu=0}=1-\sum_{l=2}^{\infty}p_{l}\left(  z\right)  \log{(l-1)}%
+\sum_{l=2}^{\infty}\frac{p_{l}\left(  z\right)  }{\Delta_{l-1}}\sum
_{i=0}^{l-1}i\log{i}.\nonumber
\end{equation}
The Holevo information vanishes when $\mu=1/2$.

We now compute the coherent information. Let $\mathcal{N}^{c}(\left\vert
0\right\rangle \left\langle 0\right\vert )$ and $\mathcal{N}^{c}(\left\vert
1\right\rangle \left\langle 1\right\vert )$ denote the outputs of the
complementary channel of the Unruh channel corresponding to the respective
input states $|0\rangle\!\langle0|^{A^{\prime}}$ and $|1\rangle\!\langle
1|^{A^{\prime}}$:%
\begin{align*}
\mathcal{N}^{c}(\left\vert 0\right\rangle \left\langle 0\right\vert )  &
\equiv\bigoplus_{l=2}^{\infty}p_{l}\left(  z\right)  S_{l}^{c}(\left\vert
0\right\rangle \left\langle 0\right\vert )\\
&  =\bigoplus_{l=2}^{\infty}\frac{p_{l}\left(  z\right)  }{\Delta_{l-1}}%
\sum_{i=0}^{l-2}(l-1-i)|i\rangle\!\langle i|^{E}\\
\mathcal{N}^{c}(\left\vert 1\right\rangle \left\langle 1\right\vert )  &
\equiv\bigoplus_{l=2}^{\infty}p_{l}\left(  z\right)  S_{l}^{c}(\left\vert
1\right\rangle \left\langle 1\right\vert )\\
&  =\bigoplus_{l=2}^{\infty}\frac{p_{l}\left(  z\right)  }{\Delta_{l-1}}%
\sum_{i=0}^{l-2}\left(  i+1\right)  |i\rangle\!\langle i|^{E}.
\end{align*}
The following states are important in this calculation:%
\begin{align*}
\rho^{XE}  &  =\frac{1}{2}\bigg[|0\rangle\!\langle0|^{X}\otimes\psi_{0}%
^{E}+|1\rangle\!\langle1|^{X}\otimes\psi_{1}^{E}\bigg],\\
\psi_{0}^{E}  &  =\mu\mathcal{N}^{c}(\left\vert 0\right\rangle \left\langle
0\right\vert )+(1-\mu)\mathcal{N}^{c}(\left\vert 1\right\rangle \left\langle
1\right\vert )\\
&  =\bigoplus_{l=2}^{\infty}\frac{p_{l}\left(  z\right)  }{\Delta_{l-1}}%
\sum_{i=0}^{l-2}\eta_{i}^{\left(  l-1\right)  }\left(  \mu\right)
|i\rangle\!\langle i|^{E},\\
\psi_{1}^{E}  &  =(1-\mu)\mathcal{N}^{c}(\left\vert 0\right\rangle
\left\langle 0\right\vert )+\mu\mathcal{N}^{c}(\left\vert 1\right\rangle
\left\langle 1\right\vert )\\
&  =\bigoplus_{l=2}^{\infty}\frac{p_{l}\left(  z\right)  }{\Delta_{l-1}}%
\sum_{i=0}^{l-2}\eta_{i}^{\left(  l-1\right)  }\left(  1-\mu\right)
|i\rangle\!\langle i|^{E},
\end{align*}
where $\eta_{i}^{\left(  l-1\right)  }(\mu)\equiv(l-2-2i)\mu+i+1$. Then the
conditional entropy $H\left(  E|X\right)  $ is as follows:%
\[
H(E|X)=-\sum_{l=2}^{\infty} \frac{p_{l}\left(  z\right)}{\Delta_{l-1}}
\sum_{i=0}^{l-2}
\eta_{i}^{\left(  l-1\right)  \left(  \mu\right)  }\log\left(
{\frac{p_{l}\left(  z\right)  \eta_{i}^{\left(  l-1\right)  }\left(
\mu\right)  }{\Delta_{l-1}}}\right)  .
\]
The relation $\sum_{i=0}^{l-2}\eta_{i}^{\left(  l-1\right)  }\left(
\mu\right)  =\Delta_{l-1}$ allows us to simplify the above expression for the
conditional entropy $H\left(  E|X\right)  $:%
\begin{multline}
H(E|X)=H(B)-1+\sum_{l=2}^{\infty}p_{l}\left(  z\right)  \log{(l-1)}%
\label{eq:HEX-unruh}\\
-\sum_{l=2}^{\infty}\sum_{i=0}^{l-2}\frac{p_{l}\left(  z\right)  }%
{\Delta_{l-1}}\eta_{i}^{\left(  l-1\right)  }\left(  \mu\right)  \log
{(\eta_{i}^{\left(  l-1\right)  }\left(  \mu\right)  ).}%
\end{multline}
We finally obtain the coherent information as the difference of
(\ref{eq:HBX-Unruh}) and (\ref{eq:HEX-unruh}):%
\begin{multline*}
I(A\rangle BX)=-\sum_{l=2}^{\infty}\frac{p_{l}\left(  z\right)  }{\Delta
_{l-1}}\sum_{i=0}^{l-1}\lambda_{i}^{\left(  l-1\right)  }\left(  \mu\right)
\log{(\lambda_{i}^{\left(  l-1\right)  }\left(  \mu\right)  )}\\
+\sum_{l=2}^{\infty}\frac{p_{l}\left(  z\right)  }{\Delta_{l-1}}\sum
_{i=0}^{l-2}\eta_{i}^{\left(  l-1\right)  }\left(  \mu\right)  \log{(\eta
_{i}^{\left(  l-1\right)  }\left(  \mu\right)  )}.
\end{multline*}
The above expression vanishes when $\mu=0$, and it coincides with the
expression for the quantum capacity of the Unruh channel when $\mu=1/2$ (in
Section III B of Ref.~\cite{BDHM09}):%
\begin{equation}
I(A\rangle BX)_{\mu=1/2}=\sum_{k=0}^{\infty}p_{k+2}\left(  z\right)
\log\left(  {\frac{k+2}{k+1}}\right)  .\nonumber
\end{equation}

We now trace out the CE\ trade-off curve for a single use of the Unruh
channel. We use the same input state as in
Lemma~\ref{lem:parametrize-EAC-cloning}\ because that lemma proves that this
input state traces out both the CQ\ curve and the CE curve. We then have here
that $H(A|X)=H_{2}(\mu)$, and we obtain the expression in the statement of the
theorem by noting that $I\left(  AX;B\right)  =H\left(  A|X\right)  +H\left(
B\right)  -H\left(  E|X\right)  $.
\end{proof}

\bibliographystyle{unsrt}
\bibliography{Ref}

\begin{thebibliography}{10}

\bibitem{ieee1998holevo}
Alexander~S. Holevo.
\newblock The capacity of the quantum channel with general signal states.
\newblock {\em IEEE Transactions on Information Theory}, 44(1):269--273, 1998.

\bibitem{PhysRevA.56.131}
Benjamin Schumacher and Michael~D. Westmoreland.
\newblock Sending classical information via noisy quantum channels.
\newblock {\em Physical Review A}, 56(1):131--138, July 1997.

\bibitem{PhysRevA.55.1613}
Seth Lloyd.
\newblock Capacity of the noisy quantum channel.
\newblock {\em Physical Review A}, 55(3):1613--1622, March 1997.

\bibitem{capacity2002shor}
Peter~W. Shor.
\newblock The quantum channel capacity and coherent information.
\newblock In {\em Lecture Notes, MSRI Workshop on Quantum Computation}, 2002.

\bibitem{ieee2005dev}
Igor Devetak.
\newblock The private classical capacity and quantum capacity of a quantum
  channel.
\newblock {\em IEEE Transactions on Information Theory}, 51:44--55, January
  2005.

\bibitem{PhysRevLett.93.080501}
Igor Devetak and Andreas Winter.
\newblock Relating quantum privacy and quantum coherence: An operational
  approach.
\newblock {\em Physical Review Letters}, 93(8):080501, August 2004.

\bibitem{qcap2008first}
Patrick Hayden, Michal Horodecki, Andreas Winter, and Jon Yard.
\newblock A decoupling approach to the quantum capacity.
\newblock {\em Open Systems \& Information Dynamics}, 15:7--19, March 2008.

\bibitem{qcap2008second}
Rochus Klesse.
\newblock A random coding based proof for the quantum coding theorem.
\newblock {\em Open Systems \& Information Dynamics}, 15:21--45, March 2008.

\bibitem{qcap2008third}
M.~Horodecki, Seth Lloyd, and Andreas Winter.
\newblock Quantum coding theorem from privacy and distinguishability.
\newblock {\em Open Systems \& Information Dynamics}, 15:47--69, March 2008.

\bibitem{qcap2008fourth}
Patrick Hayden, Peter~W. Shor, and Andreas Winter.
\newblock Random quantum codes from {Gaussian} ensembles and an uncertainty
  relation.
\newblock {\em Open Systems \& Information Dynamics}, 15:71--89, March 2008.

\bibitem{PhysRevA.54.3824}
Charles~H. Bennett, David~P. DiVincenzo, John~A. Smolin, and William~K.
  Wootters.
\newblock Mixed-state entanglement and quantum error correction.
\newblock {\em Physical Review A}, 54(5):3824--3851, November 1996.

\bibitem{ieee2000barnum}
Howard Barnum, Emmanuel Knill, and Michael~A. Nielsen.
\newblock On quantum fidelities and channel capacities.
\newblock {\em IEEE Transactions on Information Theory}, 46(4):1317--1329, July
  2000.

\bibitem{bell1948shannon}
Claude~E. Shannon.
\newblock A mathematical theory of communication.
\newblock {\em Bell System Technical Journal}, 27:379--423, 1948.

\bibitem{H09}
Matthew~B. Hastings.
\newblock Superadditivity of communication capacity using entangled inputs.
\newblock {\em Nature Physics}, 5:255--257, 2009.

\bibitem{HW08}
Patrick Hayden and Andreas Winter.
\newblock Counterexamples to the maximal p-norm multiplicativity conjecture for
  all p $>$ 1.
\newblock {\em Communications in Mathematical Physics}, 284(1):263--280,
  November 2008.

\bibitem{FKM09}
Motohisa Fukuda, Christopher King, and David Moser.
\newblock Comments on {Hastings'} additivity counterexamples.
\newblock {\em arXiv:0905.3697}, 2009.

\bibitem{BH09}
Fernando~G.S.L. Brandao and Michal Horodecki.
\newblock On {Hastings'} counterexamples to the minimum output entropy
  additivity conjecture.
\newblock {\em arXiv:0907.3210}, 2009.

\bibitem{PhysRevA.57.830}
David~P. DiVincenzo, Peter~W. Shor, and John~A. Smolin.
\newblock Quantum-channel capacity of very noisy channels.
\newblock {\em Physical Review A}, 57(2):830--839, February 1998.

\bibitem{smith:030501}
Graeme Smith and John~A. Smolin.
\newblock Degenerate quantum codes for {Pauli} channels.
\newblock {\em Physical Review Letters}, 98(3):030501, 2007.

\bibitem{S07}
Peter~W. Shor.
\newblock The additivity conjecture in quantum information theory.
\newblock {\em Current Developments in Mathematics}, 2005:173--189, 2007.

\bibitem{S04}
Peter~W. Shor.
\newblock Equivalence of additivity questions in quantum information theory.
\newblock {\em Communications in Mathematical Physics}, 246(3):453--472, April
  2004.

\bibitem{AHW00}
Grigori~G. Amosov, Alexander~S. Holevo, and R.~F. Werner.
\newblock On some additivity problems in quantum information theory.
\newblock {\em Problems of Information Transmission}, 36(4):25, 2000.

\bibitem{king:4641}
Christopher King.
\newblock Additivity for unital qubit channels.
\newblock {\em Journal of Mathematical Physics}, 43(10):4641--4653, 2002.

\bibitem{PhysRevLett.78.3217}
Charles~H. Bennett, David~P. DiVincenzo, and John~A. Smolin.
\newblock Capacities of quantum erasure channels.
\newblock {\em Physical Review Letters}, 78(16):3217--3220, April 1997.

\bibitem{KMNR07}
Christopher King, Keiji Matsumoto, Michael Nathanson, and Mary~Beth Ruskai.
\newblock Properties of conjugate channels with applications to additivity and
  multiplicativity.
\newblock {\em Markov Processes and Related Fields}, 13(2):391--423, 2007.
\newblock J. T. Lewis memorial issue.

\bibitem{shor:4334}
Peter~W. Shor.
\newblock Additivity of the classical capacity of entanglement-breaking quantum
  channels.
\newblock {\em Journal of Mathematical Physics}, 43(9):4334--4340, 2002.

\bibitem{K03}
Christopher King.
\newblock The capacity of the quantum depolarizing channel.
\newblock {\em IEEE Transactions on Information Theory}, 49(1):221--229,
  January 2003.

\bibitem{DHS06}
Nilanjana Datta, Alexander~S. Holevo, and Yuri Suhov.
\newblock Additivity for transpose depolarizing channels.
\newblock {\em International Journal of Quantum Information}, 4(1):85--98,
  2006.

\bibitem{F05}
Motohisa Fukuda.
\newblock Extending additivity from symmetric to asymmetric channels.
\newblock {\em Journal of Physics A: Mathematical and General},
  38(45):L753--L758, 2005.

\bibitem{B09}
Kamil Br\'{a}dler.
\newblock An infinite sequence of additive channels: the classical capacity of
  cloning channels.
\newblock {\em arXiv:0903.1638}, 2009.

\bibitem{cmp2005dev}
Igor Devetak and Peter~W. Shor.
\newblock The capacity of a quantum channel for simultaneous transmission of
  classical and quantum information.
\newblock {\em Communications in Mathematical Physics}, 256:287--303, 2005.

\bibitem{cubitt:102104}
Toby~S. Cubitt, Mary~Beth Ruskai, and Graeme Smith.
\newblock The structure of degradable quantum channels.
\newblock {\em Journal of Mathematical Physics}, 49(10):102104, October 2008.

\bibitem{BDHM09}
Kamil Br\'{a}dler, Nicolas Dutil, Patrick Hayden, and Abubakr Muhammad.
\newblock Conjugate degradability and the quantum capacity of cloning channels.
\newblock {\em arXiv:0909.3297}, 2009.

\bibitem{PhysRevA.71.032314}
Vittorio Giovannetti and Rosario Fazio.
\newblock Information-capacity description of spin-chain correlations.
\newblock {\em Physical Review A}, 71(3):032314, March 2005.

\bibitem{BHP09}
Kamil Br\'{a}dler, Patrick Hayden, and Prakash Panangaden.
\newblock Private information via the {Unruh} effect.
\newblock {\em Journal of High Energy Physics}, 2009(08):074, 2009.

\bibitem{PhysRevLett.83.3081}
Charles~H. Bennett, Peter~W. Shor, John~A. Smolin, and Ashish~V. Thapliyal.
\newblock Entanglement-assisted classical capacity of noisy quantum channels.
\newblock {\em Physical Review Letters}, 83(15):3081--3084, October 1999.

\bibitem{ieee2002bennett}
Charles~H. Bennett, Peter~W. Shor, John~A. Smolin, and Ashish~V. Thapliyal.
\newblock Entanglement-assisted capacity of a quantum channel and the reverse
  {Shannon} theorem.
\newblock {\em IEEE Transactions on Information Theory}, 48:2637--2655, 2002.

\bibitem{PhysRevLett.69.2881}
Charles~H. Bennett and Stephen~J. Wiesner.
\newblock Communication via one- and two-particle operators on
  {Einstein-Podolsky-Rosen} states.
\newblock {\em Physical Review Letters}, 69(20):2881--2884, November 1992.

\bibitem{arx2004shor}
Peter~W. Shor.
\newblock {\em Quantum Information, Statistics, Probability (Dedicated to A. S.
  Holevo on the occasion of his 60th birthday)}, chapter The classical capacity
  achievable by a quantum channel assisted by limited entanglement, pages
  144--152.
\newblock Rinton Press, Inc., (quant-ph/0402129), 2004.

\bibitem{HW08a}
Min-Hsiu Hsieh and Mark~M. Wilde.
\newblock Entanglement-assisted communication of classical and quantum
  information.
\newblock {\em arXiv:0811.4227}, 2008.

\bibitem{HW09}
Min-Hsiu Hsieh and Mark~M. Wilde.
\newblock Trading classical communication, quantum communication, and
  entanglement in quantum {Shannon} theory.
\newblock {\em arXiv:0901.3038}, 2009.

\bibitem{kremsky:012341}
Isaac Kremsky, Min-Hsiu Hsieh, and Todd~A. Brun.
\newblock Classical enhancement of quantum-error-correcting codes.
\newblock {\em Physical Review A}, 78(1):012341, 2008.

\bibitem{arx2008wildeUQCC}
Mark~M. Wilde and Todd~A. Brun.
\newblock Unified quantum convolutional coding.
\newblock In {\em Proceedings of the IEEE International Symposium on
  Information Theory (arXiv:0801.0821)}, pages 359--363, July 2008.

\bibitem{arx2005dev}
Igor Devetak, Aram~W. Harrow, and Andreas Winter.
\newblock A resource framework for quantum {Shannon} theory.
\newblock {\em IEEE Transactions on Information Theory}, 54(10):4587--4618,
  October 2008.

\bibitem{PhysRevLett.93.230504}
Igor Devetak, Aram~W. Harrow, and Andreas Winter.
\newblock A family of quantum protocols.
\newblock {\em Physical Review Letters}, 93(23):230504, December 2004.

\bibitem{BHJW01}
Howard Barnum, Patrick Hayden, Richard Jozsa, and Andreas Winter.
\newblock On the reversible extraction of classical information from a quantum
  source.
\newblock {\em Proceedings of the Royal Society A}, 457(2012):2019--2039, July
  2001.

\bibitem{hayden:4404}
Patrick Hayden, Richard Jozsa, and Andreas Winter.
\newblock Trading quantum for classical resources in quantum data compression.
\newblock {\em Journal of Mathematical Physics}, 43(9):4404--4444, 2002.

\bibitem{KI01}
Masato Koashi and Nobuyuki Imoto.
\newblock Teleportation cost and hybrid compression of quantum signals.
\newblock {\em arXiv:quant-ph/0104001}, 2001.

\bibitem{BHLSW05}
Charles~H. Bennett, Patrick Hayden, Debbie~W. Leung, Peter~W. Shor, and Andreas
  Winter.
\newblock Remote preparation of quantum states.
\newblock {\em IEEE Transactions on Information Theory}, 51(1):56--74, January
  2005.

\bibitem{PhysRevA.68.062319}
Anura Abeyesinghe and Patrick Hayden.
\newblock Generalized remote state preparation: Trading cbits, qubits, and
  ebits in quantum communication.
\newblock {\em Physical Review A}, 68(6):062319, December 2003.

\bibitem{K03a}
Greg Kuperberg.
\newblock The capacity of hybrid quantum memory.
\newblock {\em IEEE Transactions on Information Theory}, 49(6):1465--1473, June
  2003.

\bibitem{science2006brun}
Todd~A. Brun, Igor Devetak, and Min-Hsiu Hsieh.
\newblock Correcting quantum errors with entanglement.
\newblock {\em Science}, 314(5798):436--439, October 2006.

\bibitem{DBH09}
Igor Devetak, Todd~A. Brun, and Min-Hsiu Hsieh.
\newblock {\em New Trends in Mathematical Physics}, chapter
  Entanglement-Assisted Quantum Error-Correcting Codes, pages 161--172.
\newblock Springer Netherlands, 2009.

\bibitem{hsieh:062313}
Min-Hsiu Hsieh, Igor Devetak, and Todd Brun.
\newblock General entanglement-assisted quantum error-correcting codes.
\newblock {\em Physical Review A}, 76(6):062313, 2007.

\bibitem{arx2007wildeEAQCC}
Mark~M. Wilde and Todd~A. Brun.
\newblock Entanglement-assisted quantum convolutional coding.
\newblock {\em arXiv:0712.2223}, 2007.

\bibitem{BDKS08}
Frederico Brito, David~P. DiVincenzo, Roger~H. Koch, and Matthias Steffen.
\newblock Efficient one- and two-qubit pulsed gates for an
  oscillator-stabilized josephson qubit.
\newblock {\em New Journal of Physics}, 10(3):033027 (33pp), 2008.

\bibitem{MH82}
P.~W. Milonni and M.~L. Hardies.
\newblock Photons cannot always be replicated.
\newblock {\em Physics Letters A}, 92(7):321--322, November 1982.

\bibitem{LSHB02}
Ant\'{i}a Lamas-Linares, Christoph Simon, John~C. Howell, and Dik Bouwmeester.
\newblock Experimental quantum cloning of single photons.
\newblock {\em Science}, 296:712--714, 2002.

\bibitem{SWZ00}
Christoph Simon, Gregor Weihs, and Anton Zeilinger.
\newblock Optimal quantum cloning via stimulated emission.
\newblock {\em Physical Review Letters}, 84(13):2993--2996, Mar 2000.

\bibitem{PhysRevD.14.870}
William~G. Unruh.
\newblock Notes on black-hole evaporation.
\newblock {\em Physical Review D}, 14(4):870--892, August 1976.

\bibitem{book2000mikeandike}
Michael~A. Nielsen and Isaac~L. Chuang.
\newblock {\em Quantum Computation and Quantum Information}.
\newblock Cambridge University Press, 2000.

\bibitem{B73}
P.~Bergmans.
\newblock Random coding theorem for broadcast channels with degraded
  components.
\newblock {\em IEEE Transactions on Information Theory}, 19(2):197--207, March
  1973.

\bibitem{HSR03}
Michal Horodecki, Peter~W. Shor, and Mary~Beth Ruskai.
\newblock Entanglement breaking channels.
\newblock {\em Reviews in Mathematical Physics}, 15(6):629--641, 2003.

\bibitem{nat1982}
William~K. Wootters and Wojciech~H. Zurek.
\newblock A single quantum cannot be cloned.
\newblock {\em Nature}, 299:802--803, 1982.

\bibitem{GM97}
N.~Gisin and S.~Massar.
\newblock Optimal quantum cloning machines.
\newblock {\em Physical Review Letters}, 79(11):2153--2156, September 1997.

\bibitem{H02}
Alexander~S. Holevo.
\newblock Remarks on the classical capacity of quantum channel.
\newblock {\em arXiv:quant-ph/0212025}, December 2002.

\bibitem{BHP10}
Kamil Br\'{a}dler, Patrick Hayden, and Prakash Panangaden.
\newblock Private quantum capacity for qudit channels in {Rindler} spacetime.
\newblock {\em In preparation}, 2010.

\bibitem{hsieh:042306}
Min-Hsiu Hsieh, Zhicheng Luo, and Todd Brun.
\newblock Secret-key-assisted private classical communication capacity over
  quantum channels.
\newblock {\em Physical Review A}, 78(4):042306, 2008.

\bibitem{PhysRevA.80.022306}
Min-Hsiu Hsieh and Mark~M. Wilde.
\newblock Public and private communication with a quantum channel and a secret
  key.
\newblock {\em Physical Review A}, 80(2):022306, Aug 2009.

\bibitem{RV73}
Arthur~Wayne Roberts and Dale~E. Varberg.
\newblock {\em Convex functions}, volume~57 of {\em Pure and applied
  mathematics}.
\newblock Academic Press, illustrated edition, 1973.

\bibitem{itit2008hsieh}
Min-Hsiu Hsieh, Igor Devetak, and Andreas Winter.
\newblock Entanglement-assisted capacity of quantum multiple-access channels.
\newblock {\em IEEE Transactions on Information Theory}, 54(7):3078--3090,
  2008.

\end{thebibliography}

\end{document}